\documentclass[journal]{IEEEtran}
\usepackage{cite}
\usepackage{amsmath,amssymb,amsfonts}
\usepackage{algorithmic}
\usepackage{graphicx}
\usepackage{algorithm,algorithmic}
\usepackage{textcomp}
\usepackage{nomencl}
\usepackage{verbatim}
\usepackage{caption}
\usepackage{subcaption}
\usepackage{tabularray}
\usepackage{amsthm}
\usepackage{multirow}
\usepackage{etoolbox}
\newtheoremstyle{mystl}
  {0} 
  {0} 
  {\itshape}
  {}
  {\bfseries}  
  {.} 
  {5pt plus 1pt minus 1pt} 
 {\thmname{#1} \thmnumber{#2}\ifblank{#3}{}{ (\thmnote{#3})}} 
\theoremstyle{mystl}

\newtheorem{thm}{Theorem}
\newtheorem{prop}{Proposition}
\newtheorem{asm}{Assumption}
\newtheorem{rem}{Remark}
\newtheorem{df}{Definition}
\newtheorem{lem}{Lemma}
\def\BibTeX{{\rm B\kern-.05em{\sc i\kern-.025em b}\kern-.08em
    T\kern-.1667em\lower.7ex\hbox{E}\kern-.125emX}}
\makeatletter
\let\NAT@parse\undefined
\makeatother

\usepackage[
    colorlinks=true,
    linkcolor=blue,
    citecolor=blue,
    urlcolor=blue
]{hyperref}
\begin{document}

\title{Grid-Forming Control with Assignable Voltage Regulation Guarantees and Safety-Critical Current Limiting}

\author{Bhathiya Rathnayake and Sijia Geng
\thanks{This work was supported by the Ralph O’Connor Sustainable Energy Institute (ROSEI) at Johns Hopkins University. Corresponding Aurhor: Sijia Geng.}
\thanks{B. Rathnayake is with the Ralph O’Connor Sustainable Energy Institute (ROSEI), Johns Hopkins University, Baltimore, MD, USA (email: brathna1@jh.edu).}
\thanks{S. Geng is with the Department of Electrical and Computer Engineering, Johns Hopkins University, Baltimore, MD, USA (email: sgeng@jhu.edu).}
}

\maketitle

\begin{abstract} This paper develops a nonlinear grid-forming (GFM) controller with provable voltage-formation guarantees, with over-current limiting enforced via a control-barrier-function (CBF)-based safety filter. The nominal controller follows a droop-based inner–outer architecture, in which voltage references and frequency are generated by droop laws, an outer-loop voltage controller produces current references using backstepping (BS), and an inner-loop current controller synthesizes the terminal voltage. The grid voltage is treated as an unknown bounded disturbance, without requiring knowledge of its bound, and the controller design does not rely on any network parameters beyond the point of common coupling (PCC). To robustify voltage formation against the grid voltage, a deadzone-adapted disturbance suppression (DADS) framework is incorporated, yielding practical voltage regulation characterized by asymptotic convergence of the PCC voltage errors to an \textit{assignably} small and known residual set. Furthermore, the closed-loop system is proven to be globally well posed, with all physical and adaptive states bounded and voltage error transients (due to initial conditions) decaying exponentially at an assignable rate. On top of the nominal controller, hard over-current protection is achieved through a minimally invasive CBF-based safety filter that enforces strict current limits via a single-constraint quadratic program. The safety filter is compatible with any locally Lipschitz nominal controller. Rigorous analysis establishes forward invariance of the safe-current set and boundedness of all states under current limiting. Numerical results demonstrate improved transient performance and faster recovery during current-limiting events when the proposed DADS-BS controller is used as the nominal control law, compared with conventional PI-based GFM control.
\end{abstract}

\begin{IEEEkeywords}
Grid-forming inverters, current limiting, control barrier functions, backstepping, deadzone-adapted disturbance suppression
\end{IEEEkeywords}

\section{Introduction}

\IEEEPARstart{T}{he} electric power system is undergoing a structural transition driven by the large-scale integration of inverter-based resources (IBRs), including solar photovoltaics, battery energy storage systems, and power-electronic-based wind generation. As conventional synchronous generators (SGs) are retired, the grid is gradually losing the electromechanical characteristics that historically provided inertia, frequency support, and fault current \cite{rathnayake2021grid}. This transition introduces several  challenges: reduced inertia leads to faster and more sensitive frequency dynamics \cite{milano2018foundations}; limited short-circuit current undermines traditional protection schemes \cite{baeckeland2024overcurrent}; and interactions among inverter controllers can induce complex nonlinear phenomena \cite{chatterjee2024sensitivity,chatterjee2025effects}.

Grid-forming (GFM) inverters are widely regarded as vital for future low-inertia power systems due to their capability in directly regulating terminal voltage magnitude and frequency, enabling virtual inertia, frequency regulation, black-start, and robust operation in weak or islanded grids \cite{lin2020research}.
A large body of work has focused on GFM controllers, including, \textit{droop-based controllers} \cite{tayab2017review,geng2022unified}, \textit{virtual synchronous machines} (VSMs) \cite{beck2007virtual,alatrash2012generator,zhong2010synchronverters}, and \textit{virtual oscillator controllers} (VOCs) with their \textit{dispatchable (d)} variants  \cite{johnson2015synthesizing,sinha2015uncovering,colombino2019global,lu2019grid}. Droop control, originating from early microgrid research \cite{chandorkar1993control}, enforces steady-state  active-power--frequency and reactive-power--voltage relationships that enable synchronization and power sharing. VSMs emulate SG dynamics within the inverter controller, while VOCs regulate voltage and frequency through digitally implemented nonlinear oscillators. Despite architectural differences, these GFM controllers present a common external behavior: the inverter acts as a controllable voltage source whose amplitude and frequency vary with reactive and active power, respectively, allowing fast voltage regulation and frequency support.

Unlike SGs, GFM inverters possess limited over-current capability due to semiconductor constraints, typically tolerating only $1.2$--$2$ p.u. current, whereas SGs can, in general, supply 5–7 p.u. current \cite{fan2022review}. Consequently, effective current-limiting is critical for device protection during disturbances. Existing approaches are commonly categorized as \emph{direct} or \emph{indirect} methods \cite{baeckeland2024overcurrent}. Direct methods, such as current-reference saturation \cite{mahamedi2018sequence,sadeghkhani2016current} and switch-level limiting \cite{gurule2019grid,liu2021current}, enforce current bounds accurately but may compromise voltage-source behavior and post-fault recovery. Indirect methods—including virtual impedance \cite{paquette2014virtual,wang2014virtual} and voltage-based limiting \cite{erckrath2022voltage,chen2020use}—better preserve voltage-source characteristics but generally lack guarantees of hard current bounds. Constraint-aware voltage shaping methods have also been proposed to indirectly limit current through feasibility projections or optimization \cite{desai2024saturation,gross2025constraint}. More recently, \textit{control barrier function (CBF)}-based approaches have emerged as a systematic framework for enforcing safety constraints in control systems \cite{ames2016control} and power systems \cite{kundu2019distributed}. In the GFM context, CBF-based current limiters \cite{schneeberger2025safety,joswig2024safe} explicitly impose hard current bounds through online optimization, typically via a \textit{quadratic program (QP)}, while minimally modifying a nominal voltage-forming controller.

Despite these advances, several challenges remain. First, many representative GFM controllers across different architectures \cite{beck2007virtual,alatrash2012generator,zhong2010synchronverters,tayyebi2022grid,chandorkar1993control} rely on proportional–integral–derivative (PID) control or variants applied to inherently nonlinear inverter dynamics. In such designs, theoretical guarantees are typically established separately from controller synthesis. While model-based nonlinear designs aim to more explicitly link control laws with guarantees, many existing approaches \cite{karunaratne2023nonlinear,lourenco2024nonlinear,panda2023extended} assume accurate network models, require observers or estimators, or rely on higher-order derivatives of electrical measurements, which can be difficult to obtain and tune in practice. Second, the systematic integration of hard over-current protection into GFM controllers remains largely unresolved. Current limiting is often implemented using heuristic mechanisms \cite{fan2022review,baeckeland2024overcurrent}, and it is unclear whether such mechanisms can be non-trivially incorporated into existing GFM control architectures. Although CBFs provide a principled tool for embedding current constraints, only a limited number of works \cite{schneeberger2025safety,joswig2024safe, kundu2019distributed} have explored this direction for GFM inverters.

To address such issues, this paper develops a droop-based, backstepping (BS) GFM controller for grid-connected IBRs. The grid voltage is treated as an unknown bounded disturbance,  without requiring knowledge of the bound. Furthermore, the design requires no knowledge of network parameters beyond the point of common coupling (PCC). 

To robustify the voltage-forming objective against the grid voltage, we tailor the recently introduced \textit{deadzone-adapted disturbance suppression (DADS)} framework \cite{karafyllis2024deadzone,karafyllis2025deadzone,karafyllis2025robust} to the inverter $\rm dq$-channel dynamics and the active/reactive power filters. DADS combines (i) a deadzone in the adaptive law for a controller gain— activating adaptation only when a suitable error-energy measure is above a prescribed threshold, thereby preventing gain drift—with (ii) dynamic nonlinear damping terms that inject additional dissipation. This structure yields disturbance suppression that asymptotically regulates the relevant states to an \textit{assignably small and known} residual set in the presence of unknown but constant network parameters and bounded disturbances of arbitrarily large magnitude.

On top of the nominal voltage-forming controller, we incorporate a QP-based safety filter that enforces a hard bound on the inverter terminal current via a CBF constraint. The filter is minimally invasive: it modifies the nominal terminal-voltage command only when necessary to prevent current-limit violations,  thereby preserving GFM behavior as closely as possible within the safe region.

The contributions of this work are summarized as follows.

\textit{1) Nominal Droop-based BS GFM controller with DADS (DADS-BS) (Section~\ref{secDADScntrol}):} We propose a nonlinear GFM controller with an
inner–outer architecture in which the  frequency and PCC voltage references are generated by active/reactive power
droop with second-order power filters. The outer-loop voltage controller produces reference currents that act as \emph{virtual} controls in  BS, while the inner-loop current controller synthesizes the terminal-voltage inputs and incorporates the DADS framework. The design requires no knowledge of network parameters beyond the PCC and employs neither disturbance observers nor parameter estimators.

\textit{2) Global well-posedness and boundedness under bounded disturbances (Theorem~\ref{thm1}):}
For the nominal DADS-BS (without current limiting), we establish global existence and uniqueness of solutions and boundedness of \emph{all} closed-loop states--the PCC voltage, inverter and grid currents, power-filter states, and DADS adaptive gains. The closed-loop system satisfies the bounded-input bounded-state property with respect to the bounded grid voltage.

\textit{3) Transient performance and asymptotic voltage-formation guarantees (Theorem~\ref{thm2}):}
We show that convergence of the PCC voltage errors to a region determined by unknown constant network parameters and grid voltage bounds is \emph{exponential} with an \emph{assignable} convergence rate. Moreover, we establish practical voltage regulation: the PCC voltage asymptotically tracks the droop-generated reference within a \emph{designer-specified} residual neighborhood, independently of the disturbance magnitude and unknown constant network parameters. These guarantees remain valid even under severe fault conditions, such as three-phase-to-ground faults at the grid. Explicit tradeoffs among residual size, adaptation rates, and control effort are discussed.

\textit{4) Hard over-current protection with closed-loop boundedness guarantees (Theorems~\ref{thm4} and~\ref{thm6}):}
We enforce strict terminal-current limits using a CBF-based safety filter that minimally modifies the nominal terminal-voltage command through a single-constraint QP. A closed-form expression for the safety-filtered control is obtained, eliminating the need for online QP solvers. For arbitrary locally Lipschitz nominal controllers filtered through this safety filter, we prove forward invariance of the safe-current set and boundedness of all remaining physical states. Specializing to the proposed DADS-BS, we further analyze the interaction between current-limiting events and adaptive dynamics, and establish boundedness of the adaptive gains under a mild activation-pattern assumption on the safety filter (finite number and total duration of ON intervals). Numerical results demonstrate improved transient behavior and recovery of nominal GFM behavior during current-limiting events when using DADS-BS as the nominal controller, compared with standard PI-based GFM control.

\noindent \textbf{Notation.} $\mathbb{R}_{+} := [0,+\infty)$ and $\mathbb{R}_{>0} := (0,+\infty)$. For a vector $\boldsymbol x \in \mathbb{R}^{n_1}$, $\Vert \boldsymbol x\Vert $ denotes its Euclidean norm, and for $x\in\mathbb{R}$, $\vert x\vert = \Vert x\Vert$.  Let $D \subset \mathbb{R}^{n_1}$ be an open set and let $S \subset \mathbb{R}^{n_1}$ be a set that satisfies
$D \subseteq S \subseteq  \rm{cl}(\it D)$, where $\rm{cl}(\it D)$ is the closure of $D$. By $C^{0}(S;\Omega)$,
we denote the class of continuous functions on $S$, which take values in
$\Omega \subseteq \mathbb{R}^{n_2}$. Let the non-empty set $\Theta \subset \mathbb{R}^{n_3}$ be given. By
$L^{\infty}(\mathbb{R}_{+};\Theta)$ we denote the class of essentially bounded,
Lebesgue measurable functions $\boldsymbol{d}:\mathbb{R}_{+}\rightarrow \Theta$. For
$\boldsymbol{d} \in L^{\infty}(\mathbb{R}_{+};\Theta)$, we define $
\Vert \boldsymbol{d}\Vert_{\infty} := \operatorname*{sup}_{t \geq 0}(\Vert  \boldsymbol{d}(t)\Vert )$, where $\operatorname*{sup}_{t \geq 0}(\Vert  \boldsymbol{d}(t)\Vert )$ is the essential supremum. By $\mathcal K$, we denote the class of increasing continuous functions
$\alpha:\mathbb{R}_{+}\rightarrow \mathbb{R}_{+}$ with $\alpha(0)=0$.

The paper is organized as follows. Section \ref{secII} presents generic modeling of grid-connected IBRs. Section \ref{secDADScntrol} introduces the DADS-BS framework for droop-based GFM control. Section \ref{secCBF} proposes a CBF-based QP safety filter for enforcing strict terminal-current constraints. Section \ref{secSim} presents numerical simulations. Section \ref{secProof} provides the proofs of all theoretical results presented in Sections \ref{secDADScntrol} and \ref{secCBF}. Section \ref{secCon} concludes the paper.

\section{Modeling of Grid-Connected IBRs}\label{secII}

\subsection{Reference Frames and Coordinate Transformations}\label{subsec:frames}
Figure \ref{GFM_inftybus} shows a three-phase voltage source converter (VSC) connected to the grid through an output filter and a transmission line. The inverter synthesizes terminal voltages $v_{\rm t,abc} := (v_{\rm t,a},v_{\rm t,b},v_{\rm t,c})$, which drive terminal currents $i_{\rm t,abc} := (i_{\rm t,a},i_{\rm t,b},i_{\rm t,c})$. The point of common coupling (PCC) voltages are denoted by $v_{\rm c,abc} := (v_{\rm c,a},v_{\rm c,b},v_{\rm c,c})$, and the currents by $i_{\rm g,abc} := (i_{\rm g,a},i_{\rm g,b},i_{\rm g,c})$. The grid-side voltages are $v_{\rm g,abc} := (v_{\rm g,a},v_{\rm g,b},v_{\rm g,c})$. 

\begin{figure}
    \centering
    \includegraphics[width=1\linewidth]{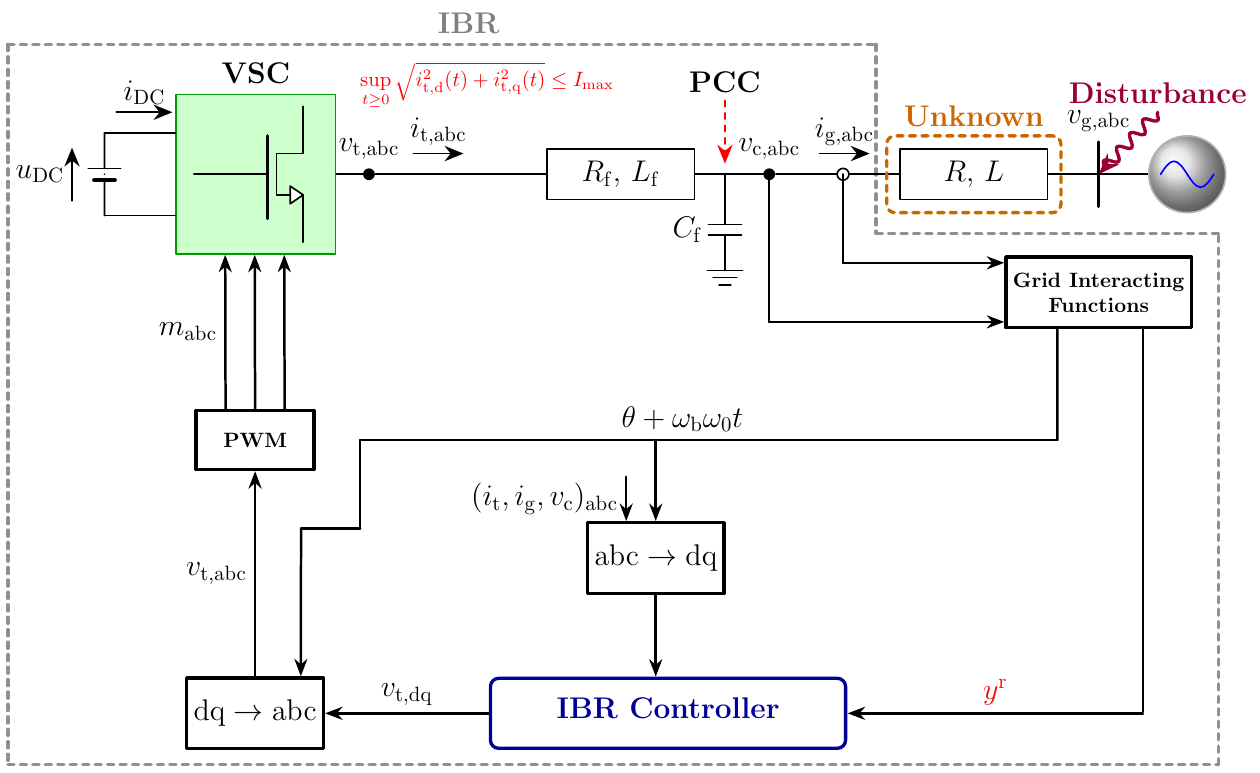}
\caption{An IBR connected to the grid. In GFL operation, the objective is to inject prescribed active and reactive power at the PCC; in GFM operation, the objective is to establish the voltage waveform at the PCC.}

    \label{GFM_inftybus}
\end{figure}

Although $\rm abc$ variables provide a direct physical description, they remain sinusoidal and time-varying even under steady-state balanced conditions, which complicates analysis and control design. Reference-frame transformations are therefore employed to express these quantities in rotating coordinates, where steady-state signals appear constant.

Following standard practice in IBR modeling \cite{geng2022unified,chatterjee2025effects}, we use two reference frames: a \textit{global} rotating $\rm DQ$ frame for network-level modeling and stability analysis, and \textit{local} rotating $\rm dq$ frames attached to individual inverters for control design and implementation.

Each inverter operates in a local $\rm dq$ frame rotating at angular velocity $\omega\,\omega_{\rm b}$ rad/s, where $\omega$ is the per-unit electrical frequency and $\omega_{\rm b}$ is the base electrical angular frequency (e.g., $\omega_{\rm{b}}=2\pi\times 60$ rad/s). Under balanced three-phase operation, any set of $\rm abc$ variables can be mapped without loss of information to two components, $\rm d$ and $\rm q$, via the Park transformation (see \cite{schiffer2016survey}). A global rotating $\rm DQ$ frame is introduced to provide a unified coordinate system for the network. This frame rotates at angular velocity $\omega_{\rm DQ}\omega_{\rm b}$ rad/s and is typically chosen such that $\omega_{\rm DQ}=\omega_0$, the steady-state grid frequency, so that electrical quantities are constant at equilibrium. Figure \ref{DQdqframes} shows the PCC voltage expressed in both frames. In the local $\rm dq$ frame, the PCC voltage is represented as the complex quantity
\begin{align*}
    v_{\rm c} = v_{\rm c,d} + j v_{\rm c,q},
\end{align*}
while in the global $\rm DQ$ frame it is given by
\begin{align*}
    v_{\rm c} = v_{\rm c,D} + j v_{\rm c,Q}.
\end{align*}
All other voltages and currents are represented analogously.

Let $\theta$ denote the angular displacement of the local $\rm dq$ frame relative to the global $\rm DQ$ frame, as illustrated in Fig. \ref{DQdqframes}. The evolution of this angle satisfies
\begin{align*}
    \dot{\theta} = \omega_{\rm b} (\omega - \omega_0),
\end{align*}
capturing the relative frequency deviation between the frames.

The transformation between the $\rm dq$ and $\rm DQ$ coordinates is given by the \textit{invertible} rotation matrix
\begin{align*}
\boldsymbol R(\theta) :=
\begin{bmatrix}
\cos\theta & -\sin\theta\\
\sin\theta & \cos\theta
\end{bmatrix},
\end{align*}
so that, for example, the PCC voltage components satisfy
\begin{align*}
\begin{bmatrix}
v_{\rm c,D}\\
v_{\rm c,Q}
\end{bmatrix}
= \boldsymbol R(\theta)
\begin{bmatrix}
v_{\rm c,d}\\
v_{\rm c,q}
\end{bmatrix}.
\end{align*}
Transformations for all other quantities follow analogously. In the single-inverter infinite-bus setting considered below, all inverter states and control laws are expressed in the local $\rm dq$ frame. The grid voltage is assumed bounded; hence, no explicit $\rm DQ$--$\rm dq$ transformation is required in the analysis. Such a transformation is only used in simulations to convert the $\rm DQ$ grid voltage into the local $\rm dq$ frame. 
\begin{figure}
    \centering
    \includegraphics[width=0.8\linewidth]{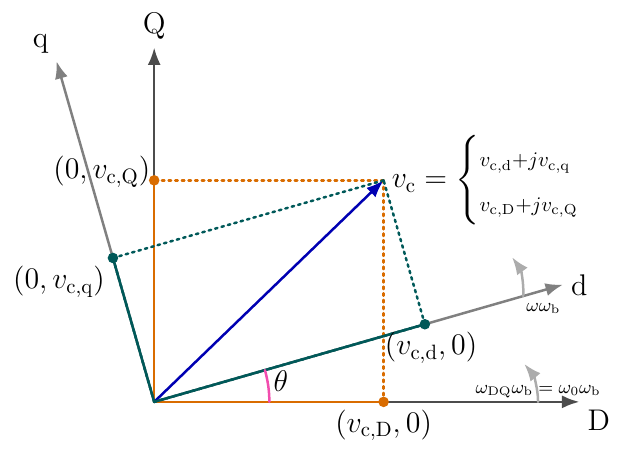}
    \caption{PCC voltage in the $\rm dq$ and $\rm DQ$ frames.}
    \label{DQdqframes}
\end{figure}

\begin{asm}
The DC-side dynamics are neglected, with the DC-link voltage $u_{\rm DC}$ assumed constant and the DC current $i_{\rm DC}$ determined algebraically to supply the required power. PWM switching dynamics are assumed sufficiently fast so that the inverter terminal voltages track their commanded $\rm dq$ setpoints instantaneously. The transmission line is modeled by lumped series resistance and inductance.
\end{asm}
\subsection{Circuit Dynamics}
The $\rm d$ and $\rm q$ channel circuit dynamics of an IBR connected to the grid, as shown in Fig.~\ref{GFM_inftybus}, are given by
\begin{align}\label{inv1}
    \dot{v}_{\rm{c,d}} &= \omega_{\rm b} \omega  v_{\rm{c,q}} + \frac{\omega_{\rm b}}{C_{\rm f}}\big(i_{\rm{t,d}} - i_{\rm{g,d}}\big), \\ \label{inv2}
    \dot{v}_{\rm{c,q}} &= -\omega_{\rm b} \omega v_{\rm{c,d}} + \frac{\omega_{\rm b}}{C_{\rm f}}\big(i_{\rm{t,q}} - i_{\rm{g,q}}\big), \\ \label{itddn}
    \dot{i}_{\rm{t,d}} &= \omega_{\rm b} \omega i_{\rm{t,q}} + \frac{\omega_{\rm b}}{L_{\rm f}}\big(v_{\rm{t,d}} - v_{\rm{c,d}}\big) - \frac{\omega_{\rm b}R_{\rm f} }{L_{\rm f}} i_{\rm{t,d}}, \\ \label{itqdn}
    \dot{i}_{\rm{t,q}} &= -\omega_{\rm b} \omega i_{\rm{t,d}} + \frac{\omega_{\rm b}}{L_{\rm f}}\big(v_{\rm{t,q}} - v_{\rm{c,q}}\big) - \frac{\omega_{\rm b} R_{\rm f}}{L_{\rm f}} i_{\rm{t,q}},\\\label{invend1} 
    \dot{i}_{\rm{g,d}} &= \omega_{\rm b}\omega i_{\rm{g,q}}+\frac{\omega_{\rm b}}{L}\big(v_{\rm{c,d}}-v_{\rm{g,d}}\big)-\frac{\omega_{\rm b} R}{L} i_{\rm{g,d}},\\\label{invend}
    \dot{i}_{\rm{g,q}} &= -\omega_{\rm b}\omega i_{\rm{g,d}}+\frac{\omega_{\rm b}}{L}\big(v_{\rm{c,q}}-v_{\rm{g,q}}\big)-\frac{\omega_{\rm b}R}{L} i_{\rm{g,q}}.
\end{align}
See Appendix A of \cite{geng2022unified} for a derivation of IBR dynamics in a rotating reference frame.

Define the inverter electrical state vector
\begin{align*}
    \boldsymbol x_{\text{inv}} := \big[v_{\rm {c,d}}, \, v_{\rm {c,q}}, \,  i_{\rm {t,d}}, \, i_{\rm {t,q}}, \,  i_{\rm {g,d}}, \,  i_{\rm {g,q}}\big]^{  \top} \in \mathbb{R}^6,
\end{align*}
with initial condition $ \boldsymbol x_{\text{inv}}(0) \in \mathbb{R}^6$. The signals $v_{\rm {t,d}}, v_{\rm {t,q}}$ respectively denote the $\rm d$- and $\rm q$-axis control inputs, to be specified by an inner controller. The signals $v_{\rm {g,d}}, v_{\rm {g,q}}$ represent the $\rm d$– and $\rm q$–components of the grid voltage. The electrical frequency $\omega$ is locked to the grid in GFL operation, for example via a PLL, whereas in GFM operation, it is autonomously generated by the inverter, for example, using droop. For notational convenience, we collect the constant physical parameters into
\begin{align}\label{theta_def}
\boldsymbol \Omega
:=
[\omega_{\rm b},\, C_{\rm f},\, L_{\rm f},\, R_{\rm f},\, L,\, R]^\top \in\mathbb{R}_{>0}^{6}.
\end{align}

The network parameters $R,L>0$ belong to the external network (see Fig. \ref{GFM_inftybus}) and are typically unknown to the inverter control design \textit{a priori} or subject to uncertainty. Therefore, throughout the analysis, we treat $R$ and $L$ as unknown but positive constants. No knowledge of $R$ or $L$ is required for the controller design.

\begin{asm}\label{assdis}
    The grid voltage  $v_{\rm g} = v_{\rm g,d} + jv_{\rm g,q}$ is a  bounded signal such that
    \begin{align*}
        v_{\rm g,d},v_{\rm g,q} \in L^{\infty}(\mathbb{R}_{+};\mathbb{R}).
    \end{align*}
\end{asm}
Although we assume that the grid voltage is bounded, no numerical upper bound is used in the controller design. 

In practical implementations, a hard current magnitude constraint 
$\sqrt{i_{\rm g,d}^2(t) + i_{\rm g,q}^2}(t) \le I_{\max}$ for all $t\geq 0$ must be enforced to reflect semiconductor and thermal limits. This constraint will be explicitly enforced through the proposed safety-critical control design discussed in Section~\ref{secCBF}.

\section{Grid-Forming Inverter Control}
\label{secDADScntrol}

This section presents a nonlinear GFM controller obtained by integrating droop laws with a deadzone-adapted disturbance suppression (DADS)-type backstepping (BS) design. The primary objectives are:  
1) to regulate the PCC voltage components $(v_{\rm c,d}, v_{\rm c,q})$ to the droop-generated references $(v^{\rm r}_{\rm c,d}, v^{\rm r}_{\rm c,q})$, and  
2) to ensure boundedness of all closed-loop states, including terminal and grid currents, the power filter states, and the adaptive gains. The proposed controller achieves these objectives despite the presence of unknown but bounded grid voltage and unknown  constant network parameters $R,L>0$. All control laws are designed in the local $\rm dq$ reference frame. Current limiting on $i_{\rm t}$ is not considered in this section, and will be addressed in Section \ref{secCBF}.

\subsection{Droop Relationships}\label{secDroop}

\begin{figure}
    \centering
    \includegraphics[width=1\linewidth]{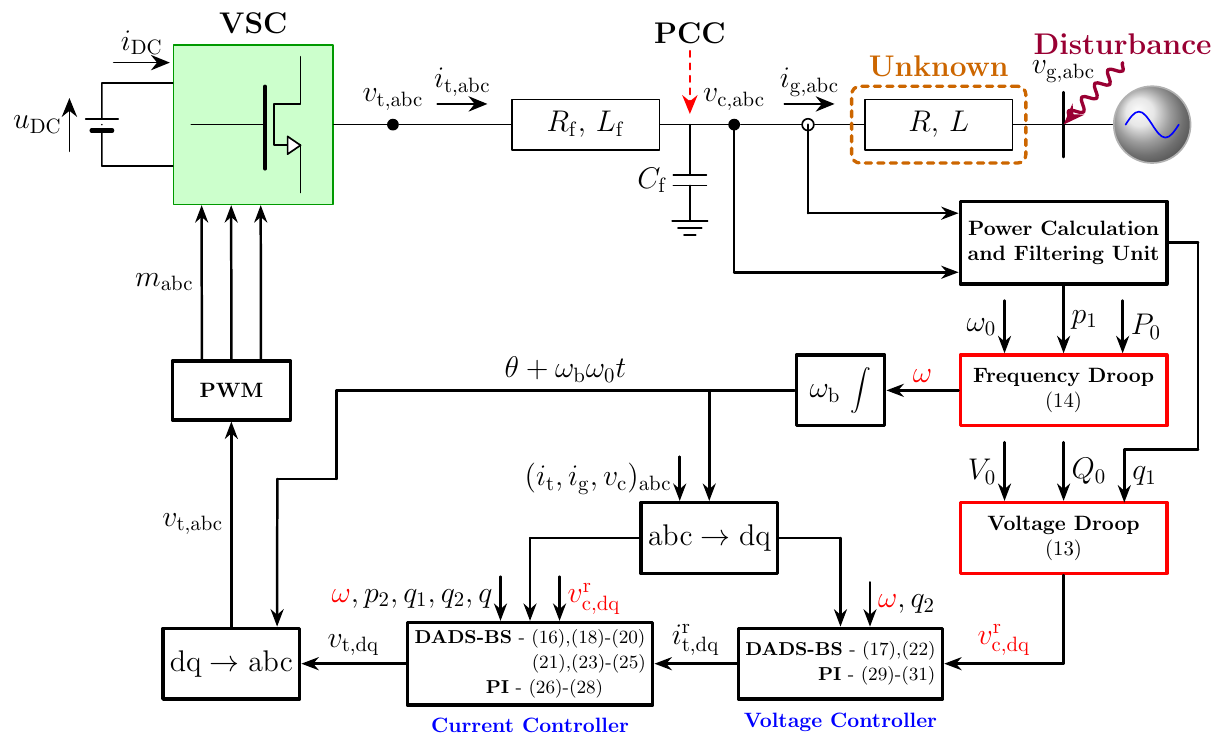}
    \caption{Droop-based GFM control architecture with DADS-BS or PI lower-level control loops.}
    \label{drooparch}
\end{figure}

We first define the instantaneous active and reactive powers at the PCC as
\begin{align}\label{inst_act}
    p &= v_{\rm{c,d}}i_{\rm{g,d}}+v_{\rm{c,q}}i_{\rm{g,q}},\\ \label{inst_react}
    q &= v_{\rm{c,q}}i_{\rm{g,d}}-v_{\rm{c,d}}i_{\rm{g,q}}.
\end{align}
The instantaneous powers $(p,q)$ are passed through second-order low-pass filters with saturation. Specifically, the filtered reactive and active powers evolve according to
\begin{align}
    \dot{q}_1 &= q_2, \qquad
    \dot{q}_2 = -2\xi_{\rm q}\omega_{\rm{qc}}q_2 - \omega_{\rm{qc}}^2\big(q_1 - \operatorname{sat}_{\overline Q}(q)\big), \label{Q_filter}\\
    \dot{p}_1 &= p_2, \qquad
    \dot{p}_2 = -2\xi_{\rm p}\omega_{\rm {pc}}p_2 - \omega_{\rm {pc}}^2\big(p_1 - \operatorname{sat}_{\overline P}(p)\big), \label{P_filter}
\end{align}
with initial conditions $q_1(0), q_2(0), p_1(0), p_2(0) \in \mathbb{R}$. The filter parameters satisfy
$\xi_{\rm q}, \xi_{\rm p} > 1$ and $\omega_{\rm qc}, \omega_{\rm pc} > 0$, while
$\overline Q, \overline P \in  (0,+\infty]$ denote saturation thresholds. The saturation function $\operatorname{sat}_{\overline x}:\mathbb{R}\to\mathbb{R}$ with $\overline{x}\in (0,+\infty]$ is defined as
\begin{align}\label{sat}
    \operatorname{sat}_{\overline x}(y) :=
    \begin{cases}
        \overline x, & y > \overline x,\\
        y, & -\overline x \le y \le \overline x,\\
        -\overline x, & y < -\overline x.
    \end{cases}
\end{align}
Based on the filtered powers $(p_1,q_1)$, the GFM voltage reference and frequency are generated via droop relationships. In particular, the reactive-power–voltage droop law is
\begin{align}
    v^{\rm r}_{\rm{c,d}} &= V_0 + K_{\rm Q}\big(Q_{0} - q_1\big), \qquad
    v^{\rm r}_{\rm{c,q}} \equiv 0, \label{vcd_star}
\end{align}
while the active-power–frequency droop law is given by
\begin{align}
    \omega &= \omega_0 + K_{\rm P}\big(P_{0} - p_1\big).\label{omega_droop}
\end{align}
Here, $V_0, \omega_0 \in \mathbb{R}$ denote nominal voltage magnitude and frequency in p.u.,
$K_{\rm Q}, K_{\rm P} \in \mathbb{R}$ are droop gains in p.u., and
$P_{0}, Q_{0} \in \mathbb{R}$ are active- and reactive-power setpoints in p.u. The references $(v^{\rm r}_{\rm c,d}, v^{\rm r}_{\rm c,q})$ are then tracked by an inner-loop controller, which generates the terminal voltage inputs $(v_{\rm t,d}, v_{\rm t,q})$. Figure \ref{drooparch} shows the droop-based control architecture considered below. 

We collect the inverter electrical states $\boldsymbol{x}_{\rm inv}$ and the power-filter states into the augmented state vector
\begin{align}\label{xstte}
    \boldsymbol x := \big[ \boldsymbol x_{\text{inv}}, \, q_1 ,\,  q_2, \, p_1,\, p_2\big]^{  \top} \in \mathbb{R}^{10}.
\end{align}
\begin{rem}\rm 
The inclusion of the saturation operator in the reactive-power--voltage droop channel is instrumental for ensuring bounded droop-generated voltage references and for capturing the physical limitations of GFM inverters. In particular, the droop law \eqref{vcd_star} must generate bounded voltage references; without saturation, the voltage reference may become unbounded under large reactive-power deviations. In practice, GFM inverters have finite apparent-power ratings and voltage levels, and exceeding these limits can lead to thermal overstress and malfunction of semiconductor devices and violations of grid codes. The operator $\operatorname{sat}_{\overline Q}(\cdot)$ therefore provides a safe and physically meaningful operating envelope for the reactive-power--voltage droop.

From a theoretical standpoint, a saturation operator on the active-power--frequency droop is not required for the boundedness claims (Theorem \ref{thm1}) or the performance guarantees (Theorem \ref{thm2}). One may therefore set $\overline P = +\infty$ without affecting the analysis. Imposing saturation solely on the reactive-power--voltage channel is sufficient, since it enforces boundedness of the droop-generated voltage reference, which—together with the controller introduced below—yields bounded closed-loop behavior, including the electrical frequency.
\end{rem}
\subsection{DADS-BS Control Laws}
The proposed controller follows a typical inner-outer loop control architecture. The PCC voltage dynamics are shaped by introducing reference terminal currents $i^{\rm r}_{\rm t,d}$ and $i^{\rm r}_{\rm t,q}$, which act as \textit{virtual control inputs} in backstepping (BS) and serve as \textit{outer-loop voltage controllers}. The actual control inputs, namely the terminal voltages $v_{\rm t,d}$ and $v_{\rm t,q}$, are then designed to track these reference currents while canceling matched nonlinearities, thereby serving as \textit{inner-loop current controllers}.

To robustify the design against disturbances, the BS control is augmented with DADS-type nonlinear damping terms containing dynamically adapted gains  $z_{\rm d}$ and $z_{\rm q}$. In particular, the dynamic gains modulate the damping injection and are updated through deadzone-based adaptation laws.

We now present the resulting control laws.

\noindent \underline{$\rm d$-axis Control Law}

\begin{align}\label{dads_d1}
\begin{split}
       v_{\rm{t,d}} &= \frac{L_{\rm f}}{\omega_{\rm b}}\bigg[-2\omega_{\rm b} \omega \big(i_{\rm{t,q}} - i_{\rm{g,q}}\big)+\frac{ \omega_{\rm b} R_{\rm f}}{L_{\rm f}} i_{\rm{t,d}} \\&\quad\quad\quad+ \omega_{\rm b}\bigg(\frac{1}{L_{\rm f}}+\omega^2 C_{\rm f} \bigg) v_{\rm{c,d}} + C_{\rm f} K_{\rm P} p_{\rm 2}v_{\rm{c,q}}\\
       &\quad\quad\quad -K_{\rm VC}\big( i_{\rm{t,d}}-i^{\rm r}_{\rm{t,d}}\big)+\frac{C_{\rm f}K_{\rm VC}^2}{\omega_{\rm b}}\big( v_{\rm{c,d}}-v^{\rm r}_{\rm{c,d}}\big)\\&\quad\quad\quad+\frac{2\xi_{\rm q}\omega_{\rm qc}K_{\rm Q}C_{\rm f}}{\omega_{\rm b}}q_2+\frac{\omega^2_{\rm qc}K_{\rm Q}C_{\rm f}}{\omega_{\rm b}}\big(q_1 - \operatorname{sat}_{\overline Q}(q)\big)\\&\quad\quad\quad+u_{\rm d}\bigg], \hspace{10pt} K_{\rm VC}>0,
\end{split}\\\label{itd_r}
i_{\rm{t,d}}^{{\rm r}} &= i_{\rm{g,d}}-C_{\rm f}\omega v_{\rm{c,q}}-\frac{C_{\rm f}K_{\rm Q}}{\omega_{\rm b}}q_{2}-\frac{C_{\rm f}K_{\rm VC}}{\omega_{\rm b}}\big(v_{\rm{c,d}}-v^{\rm r}_{\rm{c,d}}\big),\\
    \label{u_d}
    \begin{split}
u_{\rm d} &= -\Big(K_{\rm CC}+\frac{(1+e^{z_{\rm d}})\omega_{\rm b}^2}{4\mu_{\rm d}}\big(1+i^2_{\rm{g,d}}+v^2_{\rm{c,d}}\big)\Big)\big( i_{\rm{t,d}}-i^{\rm r}_{\rm{t,d}}\big)\\&\quad-\frac{\omega_{\rm b}}{C_{\rm f}}\big( v_{\rm{c,d}}-v^{\rm r}_{\rm{c,d}}\big), \hspace{10pt}K_{\rm CC},\mu_{\rm d}>0,
\end{split}\\
    \dot{z}_{\rm d} &= \Gamma_{\rm d} e^{-z_{\rm d}}\max\Big\{W_{\rm d}-\varepsilon,0\Big\}, \hspace{10pt} \Gamma_{\rm d},\varepsilon >0,\label{z_ddot}\\
        \label{dads_dend}
W_{\rm d} &= \frac{1}{2}\big(v_{\rm{c,d}} - v^{\rm r}_{\rm{c,d}}\big)^2+\frac{1}{2}\big(i_{\rm{t,d}} - i^{\rm r}_{\rm{t,d}}\big)^2.
\end{align}

\noindent \underline{$\rm q$-axis Control Law}

\begin{align}
    \label{dads_q1}
    \begin{split}
        v_{\rm{t,q}} &= \frac{L_{\rm f}}{\omega_{\rm b}}\bigg[2\omega_{\rm b}\omega \big(i_{\rm{t,d}}-i_{\rm{g,d}}\big)+ \frac{\omega_{\rm b} R_{\rm f}}{L_{\rm f}} i_{\rm{t,q}} \\&\quad\quad\quad+ \omega_{\rm b}\bigg(\frac{1}{L_{\rm f}}+C_{\rm f} \omega^2+\frac{C_{\rm f}K^2_{\rm VC}}{\omega_{\rm b}^2} \bigg)v_{\rm{c,q}}-C_{\rm f} K_{\rm P} p_{\rm 2}v_{\rm{c,d}}\\
        &\quad\quad\quad -K_{\rm VC} \big(i_{\rm{t,q}} - i^{\rm r}_{\rm{t,q}}\big)+u_{\rm q}\bigg],\hspace{10pt} K_{\rm VC}>0,
    \end{split}\\\label{itq_r}
         i^{\rm r}_{\rm{t,q}} &= i_{\rm{g,q}}+C_{\rm f}\omega v_{\rm{c,d}}-\frac{C_{\rm f}K_{\rm VC}}{\omega_{\rm b}}v_{\rm{c,q}},\\\label{u_q}
         \begin{split}
    u_{\rm q} &= -\Big(K_{\rm CC}+\frac{(1+e^{z_{\rm q}})\omega_{\rm b}^2}{4\mu_{\rm q}}\big(1+i^2_{\rm{g,q}}+v^2_{\rm{c,q}}\big)\Big)\big( i_{\rm{t,q}}-i^{\rm r}_{\rm{t,q}}\big)\\&\quad-\frac{\omega_{\rm b}}{C_{\rm f}}v_{\rm{c,q}}, \hspace{10pt} K_{\rm CC},\mu_{\rm q}>0,
\end{split}\\\label{z_qdot}
    \dot{z}_{\rm q} &= \Gamma_{\rm q} e^{-z_{\rm q}}\max\Big\{W_{\rm q}-\varepsilon,0\Big\}, \hspace{10pt} \Gamma_{\rm q},\varepsilon >0,\\
\label{dads_qend}
    W_{\rm q} &= \frac{1}{2}v^2_{\rm{c,q}}+\frac{1}{2}\big(i_{\rm{t,q}} - i^{\rm r}_{\rm{t,q}}\big)^2.
\end{align}

Equations \eqref{itd_r} and \eqref{itq_r} serve as outer voltage controllers whereas  \eqref{dads_d1},\eqref{u_d}-\eqref{dads_dend} and \eqref{dads_q1},\eqref{u_q}-\eqref{dads_qend} serve as inner current controllers. The auxiliary inputs $u_{\rm d}$ and $u_{\rm q}$ inject nonlinear damping and implement the DADS mechanism. The gains $K_{\rm VC}>0$ and $K_{\rm CC}>0$ are respectively the voltage and current control gains. The parameters $\mu_{\rm d},\mu_{\rm q}>0$ tune the strength of the nonlinear damping injection and determine the trade-off between control effort and disturbance attenuation (\textit{c.f.} \eqref{re52},\eqref{re53}). The adaptive gains $z_{\rm d},z_{\rm q}$ with dynamics \eqref{z_ddot}, \eqref{z_qdot} compensate for unknown bounded grid voltage and unknown network parameters $R,L>0$. The adaptation rates $\Gamma_{\rm d},\Gamma_{\rm q}>0$ determine the speed of gain adjustment.

The DADS-BS controller \eqref{dads_d1}–\eqref{dads_qend} combines classical BS with nonlinear dynamic damping, adaptation of the gains $z_{\rm d}$ and $z_{\rm q}$, and a deadzone mechanism. The nonlinear damping terms $\frac{(1+e^{z_\ell})\omega_{\rm b}^2}{4\mu_\ell}$, $\ell\in\{\rm d,q\}$, provide disturbance suppression without requiring prior bounds on disturbance magnitude or knowledge of network parameters $R,L>0$. The deadzone in \eqref{z_ddot},\eqref{z_qdot} ensures that adaptation is activated only when the energy signals $W_{\rm d}$ and $W_{\rm q}$ are above a user-specified threshold $\varepsilon>0$, and frozen otherwise, thereby preventing gain drift and guaranteeing their boundedness.

\begin{rem} \rm For reference, a classical PI-based GFM controller
\cite{chatterjee2025effects} is given below. The controller consists of a voltage
controller that generates reference terminal currents and a current controller that
computes the terminal voltages.\\
\noindent\emph{Current controller (inner loop):}
\begin{align}
    v_{\rm t,d}
&\!\!= \!\!-K^{\rm{P}}_{\rm{CC}}\!\left(i_{\rm t,d}\!-\!i^{\rm r}_{\rm t,d}\right)
\!-\! K^{\rm{I}}_{\rm{CC}}\gamma_{\rm d}
\!+\! K^{\rm{F}}_{\rm{CC}} v_{\rm c,d}
\!-\! \omega L_{\rm f} i_{\rm t,q}, \\
v_{\rm t,q}
&\!\!=\!\!- K^{\rm{P}}_{\rm{CC}}\!\left(i_{\rm t,q}\!-\!i^{\rm r}_{\rm t,q}\right)
\!-\! K^{\rm{I}}_{\rm{CC}}\gamma_{\rm q}
\!+\! K^{\rm{F}}_{\rm{CC}} v_{\rm c,q}
\!+\! \omega L_{\rm f} i_{\rm t,d},
\end{align}
with integral states
\begin{align}
\dot{\gamma}_{\rm d} &= i_{\rm t,d} - i_{\rm t,d}^{\rm r}, \qquad
\dot{\gamma}_{\rm q} = i_{\rm t,q} - i_{\rm t,q}^{\rm r}.
\end{align}
\noindent\emph{Voltage controller (outer loop):}
\begin{align}
    i^{\rm r}_{\rm t,d}
&\!\!=\!\! -K^{\rm{P}}_{\rm{VC}}\!\left(v_{\rm c,d}\!-\!v^{\rm r}_{\rm c,d}\right)
\!-\! K^{\rm{I}}_{\rm{VC}}\beta_{\rm d}
\!+\! K^{\rm{F}}_{\rm{VC}} i_{\rm g,d}
\!-\! \omega C_{\rm f} v_{\rm c,q}, \\
i^{\rm r}_{\rm t,q}
&\!\!= \!\!-K^{\rm{P}}_{\rm{VC}}\!\left(v_{\rm c,q}\!-\!v^{\rm r}_{\rm c,q} \right)
\!-\! K^{\rm{ I}}_{\rm{VC}}\beta_{\rm q}
\!+\! K^{\rm{F}}_{\rm{VC}} i_{\rm g,q}
\!+\! \omega C_{\rm f} v_{\rm c,d},
\end{align}
with integral states
\begin{align}
\dot{\beta}_{\rm d} &= v_{\rm c,d} - v_{\rm c,d}^{\rm r}, \qquad
\dot{\beta}_{\rm q} = v_{\rm c,q}.
\end{align}
The DADS-BS preserves this voltage and current
control architecture and thus, can  be integrated into existing PI-based GFM control stacks without major modifications; see Fig. \ref{drooparch}.
\end{rem}

\subsection{Boundedness of States and Performance Guarantees}
We establish the assignable voltage regulation guarantees in this subsection, including the boundedness of states and performance guarantees.

Firstly, the DADS–BS controller ensures boundedness of all closed-loop states (\textit{c.f.} \eqref{thm1eq1},\eqref{thm1eq2}) in the presence of bounded grid disturbances, \textit{i.e.}, the closed-loop system exhibits a bounded-input bounded-state (BIBS) property for each fixed parameter vector $\boldsymbol{\Omega}\in\mathbb{R}_{>0}^6$ defined in \eqref{theta_def}. No claim of uniformity with respect to the parameters—particularly the unknown network parameters $R,L>0$—is made.

\begin{thm}
    [\it Boundedness of states]\label{thm1}  \it Consider the droop-based GFM inverter described by \eqref{inv1}-\eqref{xstte}. Fix design parameters $K_{\rm VC},K_{\rm CC},\mu_{\rm d},\mu_{\rm q},\Gamma_{\rm d},\Gamma_{\rm q},\varepsilon,\omega_{\rm pc},\omega_{\rm qc},\overline{Q}>0,\overline P \in (0,+\infty]$, and $\xi_{\rm p},\xi_{\rm q}>1$. Let the setpoints $\omega_0,V_0,P_{0},Q_{0}\in\mathbb{R}$ and the droop coefficients $K_{\rm P},K_{\rm Q}\in\mathbb{R}$ be given. Choose the control inputs $v_{\rm t,d}$ and $v_{\rm t,q}$ according to \eqref{dads_d1}-\eqref{dads_dend} and \eqref{dads_q1}-\eqref{dads_qend}, respectively. Assume the grid voltage components satisfy $v_{\rm g,d},v_{\rm g,q}\in L^\infty(\mathbb{R}_{+};\mathbb{R})$. Then, for each fixed $\boldsymbol{\Omega}\in\mathbb{R}^6_{>0}$ defined in \eqref{theta_def}, there exists a function $B_{1,\boldsymbol{\Omega}}\in C^0(\mathbb{R}^{10} \times \mathbb{R} \times \mathbb{R}\times \mathbb{R}_{+}\times \mathbb{R}_{+};\mathbb{R}_{+})$, such that, for every $\big(\boldsymbol{x}(0),z_{\rm d}(0), z_{\rm q}(0)\big) \in\mathbb{R}^{10}\times \mathbb{R}\times \mathbb{R}$, the unique absolutely continuous solution of \eqref{inv1}-\eqref{dads_qend} exists, and satisfies the following estimates for all $t\geq 0$:   
    \begin{align}\label{thm1eq1}
        \Vert \boldsymbol{x}(t)\Vert \leq B_{1,\boldsymbol{\Omega}}\big(\boldsymbol{x}(0),\!z_{\rm d}(0),\!z_{\rm q}(0),\!\Vert v_{\rm g,d}\Vert_{\infty},\!\Vert v_{\rm g,q}\Vert_{\infty}\big),
    \end{align}
and 
\begin{align}\label{thm1eq2}
\begin{split}
z_{\ell}(0)
\le z_{\ell}(t)
&\leq \lim_{s\rightarrow\infty} z_{\ell}(s) \\
&\leq B_{1,\boldsymbol \Omega}\big(\boldsymbol{x}(0),\!z_{\rm d}(0),\!z_{\rm q}(0),\!\Vert v_{\rm g,d}\Vert_{\infty}, \!\Vert v_{\rm g,q}\Vert_{\infty}\big),
\end{split}
\end{align}
where $\ell\in \{\rm d, \rm q\}$. 
\end{thm}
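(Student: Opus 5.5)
The argument runs in error coordinates, and the key fact is that the control law makes each channel's error subsystem independent of $R,L$ except through $v_{\rm g}$ and $i_{\rm g}$. Define
\[
e_{\rm d}:=v_{\rm c,d}-v^{\rm r}_{\rm c,d},\qquad e_{\rm q}:=v_{\rm c,q},\qquad \eta_{\ell}:=i_{{\rm t},\ell}-i^{\rm r}_{{\rm t},\ell},\quad \ell\in\{\rm d,q\}.
\]

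\textbf{Step 1: local existence.} The right-hand side of \eqref{inv1}--\eqref{dads_qend} is locally Lipschitz in the state, since ${\rm sat}$ and $\max\{\cdot,0\}$ are Lipschitz. It is measurable and locally essentially bounded in $t$ through $v_{\rm g}$. Carath\'eodory theory therefore gives a unique absolutely continuous maximal solution on $[0,t_{\max})$. Because $\dot z_\ell\ge 0$, we get $z_\ell(t)\ge z_\ell(0)$ and $z_\ell$ is nondecreasing, so $\lim_{s\to\infty}z_\ell(s)$ exists in $(-\infty,+\infty]$.

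\textbf{Step 2: error dynamics and Lyapunov inequality.} Differentiate $e_{\rm d}$ using \eqref{inv1}, \eqref{vcd_star} and \eqref{Q_filter}. The virtual control \eqref{itd_r} is built so that
\[
\dot e_{\rm d}=-K_{\rm VC}e_{\rm d}+\tfrac{\omega_{\rm b}}{C_{\rm f}}\eta_{\rm d}.
\]
Differentiating $i^{\rm r}_{\rm t,d}$ requires $\dot i_{\rm g,d}$, $\dot\omega=-K_{\rm P}p_2$, $\dot q_2$ and $\dot e_{\rm d}$. The control \eqref{dads_d1} cancels every term except those from $\dot i_{\rm g,d}$, namely $\omega_{\rm b}\omega i_{\rm g,q}+\frac{\omega_{\rm b}}{L}(v_{\rm c,d}-v_{\rm g,d})-\frac{\omega_{\rm b}R}{L}i_{\rm g,d}$. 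I expect the $\omega i_{\rm g,q}$ pieces and the cross terms $\omega v$ to cancel against the $2\omega(i_{\rm t,q}-i_{\rm g,q})$ and $\omega^2 C_{\rm f}$ terms; checking this bookkeeping is the first thing to verify. After the cancellations,
\[
\dot\eta_{\rm d}=-K_{\rm VC}\eta_{\rm d}-\tfrac{\omega_{\rm b}}{C_{\rm f}}e_{\rm d}+u_{\rm d}+\delta_{\rm d},
\]
where the mismatch satisfies $|\delta_{\rm d}|\le \omega_{\rm b}\theta_{\rm d}(1+|i_{\rm g,d}|+|v_{\rm c,d}|)$ with unknown constant $\theta_{\rm d}$ depending on $R,L,\|v_{\rm g,d}\|_\infty$. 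Substitute \eqref{u_d} and apply Young's inequality with the weight $(1+e^{z_{\rm d}})\omega_{\rm b}^2/(4\mu_{\rm d})$. This gives
\[
\dot W_{\rm d}\le -2\kappa W_{\rm d}+\frac{\mu_{\rm d}\,3\theta_{\rm d}^2}{1+e^{z_{\rm d}}},\qquad \kappa:=\min(K_{\rm VC},K_{\rm CC}).
\]
The q-channel is handled the same way and yields the analogous inequality for $W_{\rm q}$.

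\textbf{Step 3: boundedness of $z_\ell$ (the main obstacle).} Following the DADS argument, I would use the Lyapunov function $U_\ell=\frac{1}{2}\max\{W_\ell-\varepsilon,0\}^2+\frac{A_\ell}{\Gamma_\ell}(e^{z_\ell}\ \text{terms})$. The first attempt is to compute
\[
\tfrac{d}{dt}\big(\max\{W_\ell-\varepsilon,0\}^2/2\big)\le \max\{W_\ell-\varepsilon,0\}\Big(-2\kappa\varepsilon+\frac{c_\ell}{1+e^{z_\ell}}\Big).
\]
Using $\dot z_\ell e^{z_\ell}=\Gamma_\ell\max\{W_\ell-\varepsilon,0\}$, the combination
\[
V_\ell=\tfrac{1}{2}\max\{W_\ell-\varepsilon,0\}^2+\frac{c_\ell}{\Gamma_\ell}\,\ln\!\big(1+e^{-z_\ell}\big)^{-1}+\frac{2\kappa\varepsilon}{\Gamma_\ell}e^{z_\ell}
\]
or a similar variant should satisfy $\dot V_\ell\le 0$. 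More simply, one can integrate: $e^{z_\ell(t)}-e^{z_\ell(0)}=\Gamma_\ell\int_0^t\max\{W_\ell-\varepsilon,0\}$. Bound this integral by noting that once $e^{z_\ell}\ge c_\ell/(\kappa\varepsilon)$, $W_\ell$ decays exponentially whenever it exceeds $\varepsilon$. The resulting bound on $z_\ell$ depends only on the initial data and $\theta_\ell$. The delicate part is making the bound continuous and valid before $t_{\max}$ is known to be infinite. For this I would bound everything on $[0,t_{\max})$ and conclude $t_{\max}=\infty$ only at the end.

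\textbf{Step 4: remaining states and conclusion.} With $W_\ell$ bounded, $e_\ell$ and $\eta_\ell$ are bounded. The reference $v^{\rm r}_{\rm c,d}$ is bounded because $q_1$ is driven by a saturated input through a stable filter (the filter roots are real for $\xi>1$). Hence $v_{\rm c}$ is bounded. The grid current then obeys a stable linear system: in the $\rm dq$ frame, $\frac{d}{dt}\|i_{\rm g}\|^2=\frac{2\omega_{\rm b}}{L}\big(i_{\rm g}\cdot(v_{\rm c}-v_{\rm g})-R\|i_{\rm g}\|^2\big)$, because the rotation terms cancel regardless of $\omega$. So $i_{\rm g}$ is bounded. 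Consequently $p$ is bounded, so $p_1,p_2$ are bounded, so $\omega$ is bounded, so $i^{\rm r}_{\rm t}$ is bounded, and finally $i_{\rm t}=\eta+i^{\rm r}_{\rm t}$ is bounded. All these bounds are continuous in $(\boldsymbol x(0),z(0),\|v_{\rm g}\|_\infty)$. This rules out finite escape, so $t_{\max}=\infty$, and \eqref{thm1eq1}–\eqref{thm1eq2} follow by taking $B_{1,\boldsymbol\Omega}$ as the maximum of the constructed bounds.
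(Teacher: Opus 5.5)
Your proposal follows the paper's proof essentially step for step. It uses the same backstepping error coordinates and the same three Young's inequalities weighted by $(1+e^{z_\ell})\omega_{\rm b}^2/(4\mu_\ell)$, which give $\dot W_\ell\le -2\kappa W_\ell+c_\ell/(1+e^{z_\ell})$. It then runs the same cascade $(q_1,q_2)\to v_{\rm c}\to i_{\rm g}\to(p_1,p_2)\to\omega\to i^{\rm r}_{\rm t}\to i_{\rm t}$ on $[0,t_{\max})$. For $z_\ell$, your integration argument is in effect a proof of the Karafyllis-type Lemma~\ref{krflslm} that the paper invokes. Once $e^{z_\ell}$ passes $c_\ell/(\kappa\varepsilon)$, $W_\ell$ decays at rate $\kappa$ while above $\varepsilon$ and never returns above it, so $\int\max\{W_\ell-\varepsilon,0\}\le \sup W_\ell/\kappa$.

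Two details need correcting.

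\textbf{Step 2.} Your displayed $\dot\eta_{\rm d}$ is wrong. The term $-K_{\rm VC}(i_{\rm t,d}-i^{\rm r}_{\rm t,d})$ in \eqref{dads_d1} cancels the $+K_{\rm VC}e_{\rm i,d}$ that $\dot i^{\rm r}_{\rm t,d}$ produces through $\dot e_{\rm v,d}$. The coupling $-\frac{\omega_{\rm b}}{C_{\rm f}}e_{\rm v,d}$ is already part of $u_{\rm d}$ in \eqref{u_d}. So the closed loop is simply $\dot e_{\rm i,d}=u_{\rm d}+\frac{\omega_{\rm b}R}{L}i_{\rm g,d}-\frac{\omega_{\rm b}}{L}v_{\rm c,d}+\frac{\omega_{\rm b}}{L}v_{\rm g,d}$. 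If you substituted \eqref{u_d} into your version, the term $-\frac{\omega_{\rm b}}{C_{\rm f}}e_{\rm d}$ would appear twice. Then $\dot W_{\rm d}$ would keep an uncancelled cross term $-\frac{\omega_{\rm b}}{C_{\rm f}}e_{\rm d}\eta_{\rm d}$, and the rate $2\kappa$ would be lost (e.g.\ whenever $K_{\rm VC}\le K_{\rm CC}$). The inequality you finally state is nevertheless the correct one for the actual system.

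\textbf{Step 3.} Your explicit $V_\ell$ does not work. Reading the log term as $-\ln(1+e^{-z_\ell})$, the best available estimate is $\dot V_\ell\le c_\ell e^{-z_\ell}\max\{W_\ell-\varepsilon,0\}$, whose right-hand side is nonnegative. The variant $\tfrac12\max\{W_\ell-\varepsilon,0\}^2+\Gamma_\ell^{-1}\big(2\kappa\varepsilon e^{z_\ell}-c_\ell\ln(1+e^{z_\ell})\big)$ does satisfy $\dot V_\ell\le 0$. It grows without bound as $z_\ell\to+\infty$, so it bounds $z_\ell$. Your integration fallback already suffices, though.
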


The proof for Theorem \ref{thm1} is provided in Section \ref{pfthm1}. 

Secondly, we prove that the DADS-BS can regulate the PCC voltage errors into arbitrarily small residual sets of size $\sqrt{2\varepsilon}$ (\textit{c.f.} \eqref{re54},\eqref{re55}), independently of grid disturbances and unknown constant network parameters $R,L>0$.  This is achieved because the adaptive gains $z_{\rm d}$ and $z_{\rm q}$ are pumped up as needed to counteract disturbances and uncertainties, \textit{i.e.}, sufficient adaptation is provided through the gain dynamics. The explicit upper bound on $z_\ell, \ell = \{\rm d,q\}$ in \eqref{thm1eq2} is $\ln\bigg(
  \max\Big\{
    \frac{\mu_{\rm \ell}\big(1+R^2+\Vert v_{\rm g,\ell}\Vert^2_{\infty}\big)}{2kL^2\varepsilon} - 1,
    e^{z_{\rm \ell}(0)}
  \Big\}
 + \frac{\Gamma_{\rm \ell}}{2k}
  \max\Big\{
    W_{\rm \ell}(0) + \frac{\mu_{\rm \ell}\big(1+R^2+\Vert v_{\rm g,\ell}\Vert^2_{\infty}\big)}{2kL^2\big(1+e^{z_{\rm \ell}(0)}\big)} - \varepsilon,
    0
  \Big\}
\bigg),$ (see the proof of Theorem \ref{thm1}). This shows that the upper bound on $z_\ell$ necessarily increases as $\varepsilon>0$ decreases, reflecting the increased adaptation required to drive the PCC voltage errors into smaller desired regions.

Furthermore, the convergence of the PCC voltage components to the regions $\vert v_{\rm{c,d}}(t) - v^{\rm r}_{\rm{c,d}}(t)\vert\leq \Big(\frac{\mu_{\rm d}\big(1+R^2+\Vert v_{\rm{g,d}}\Vert_{\infty}^2\big)}{kL^2}\Big)^{1/2}$ and $\vert v_{\rm c,q}\vert \leq \Big(\frac{\mu_{\rm q}\big(1+R^2+\Vert v_{\rm{g,q}}\Vert_{\infty}^2\big)}{kL^2}\Big)^{1/2}$ (\textit{c.f.} the decay estimates \eqref{re52},\eqref{re53}) is exponential with decay rate $k$ tunable via $K_{\rm VC},K_{\rm CC}>0$. Convergence to the ultimate residual sets $|v_{\rm c,d}-v^{\rm r}_{\rm c,d}|\leq\sqrt{2\varepsilon}$ and $|v_{\rm c,q}|\leq\sqrt{2\varepsilon}$  may be slower, but can be accelerated by increasing the adaptation gains $\Gamma_{\rm d},\Gamma_{\rm q}$ at the cost of increased control effort. 

\begin{thm}[\it Performance guarantees]\label{thm2} \it Consider the droop-based GFM inverter described by \eqref{inv1}-\eqref{xstte}. Fix design parameters $K_{\rm VC},K_{\rm CC},\mu_{\rm d},\mu_{\rm q},\Gamma_{\rm d},\Gamma_{\rm q},\varepsilon,\omega_{\rm pc},\omega_{\rm qc},\overline{Q}>0, \overline P \in (0,+\infty]$, and $\xi_{\rm p},\xi_{\rm q}>1$. Let the setpoints $\omega_0,V_0,P_{0},Q_{0}\in\mathbb{R}$ and the droop coefficients $K_{\rm P},K_{\rm Q}\in\mathbb{R}$ be given. Choose the control inputs $v_{\rm t,d}$ and $v_{\rm t,q}$ according to \eqref{dads_d1}-\eqref{dads_dend} and \eqref{dads_q1}-\eqref{dads_qend}, respectively. Assume the grid voltage components satisfy $v_{\rm g,d},v_{\rm g,q}\in L^\infty(\mathbb{R}_{+};\mathbb{R})$ and that the network parameters
$R,L>0$ are unknown but constant. For any initial condition $\big(\boldsymbol{x}(0),z_{\rm d}(0),z_{\rm q}(0) \big) \in\mathbb{R}^{10}\times \mathbb{R}\times\mathbb{R}$, let
$t\mapsto\big(\boldsymbol{x}(t),z_{\rm d}(t),z_{\rm q}(t)\big)$ denote the corresponding unique absolutely continuous closed-loop solution on $[0,\infty)$. Then, the following estimates hold for all $t\geq 0$:
\begin{align}
     \label{re52}
\begin{split}
&\big( v_{\rm{c,d}}(t) - v^{\rm r}_{\rm{c,d}}(t)\big)^2\\&\quad \leq 2W_{\rm d}(t)\le  2e^{-2kt}W_{\rm d}(0)+ \frac{\mu_{\rm d}\big(1+R^2+\Vert v_{\rm{g,d}}\Vert_{\infty}^2\big)}{kL^2},
\end{split}\\
    \label{re53}
\begin{split}
& v^2_{\rm{c,q}}(t) \leq 2W_{\rm q}(t)\le  2e^{-2kt}W_{\rm q}(0)+ \frac{\mu_{\rm q}\big(1+R^2+\Vert v_{\rm{g,q}}\Vert_{\infty}^2\big)}{kL^2},
\end{split}
\end{align}
where $k>0$ is 
\begin{align}\label{decay_k}
    k = \min\{K_{\rm VC},K_{\rm CC}\}. 
\end{align}
Moreover,
\begin{align}
    \label{re54}
\limsup_{t\to+\infty}\big(\vert v_{\rm{c,d}}(t) - v^{\rm r}_{\rm{c,d}}(t)\vert\big)&\le \sqrt{2\varepsilon },\\
\label{re55}
\limsup_{t\to+\infty}\big(\vert v_{\rm{c,q}}(t)\vert\big)&\le \sqrt{2\varepsilon }.
\end{align}
\end{thm}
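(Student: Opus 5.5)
We would first derive the closed-loop error dynamics in the backstepping coordinates $e_{v,\ell}$ and $e_{i,\ell}$, $\ell\in\{\rm d,q\}$, where $e_{v,\rm d}:=v_{\rm c,d}-v^{\rm r}_{\rm c,d}$, $e_{v,\rm q}:=v_{\rm c,q}$, and $e_{i,\ell}:=i_{\rm t,\ell}-i^{\rm r}_{\rm t,\ell}$. This gives $W_\ell=\tfrac12(e_{v,\ell}^2+e_{i,\ell}^2)$.

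For the voltage errors, we substitute \eqref{itd_r} and \eqref{itq_r} into \eqref{inv1}, \eqref{inv2}. For the d-channel we also use $\dot v^{\rm r}_{\rm c,d}=-K_{\rm Q}q_2$ from \eqref{vcd_star}, \eqref{Q_filter}. The outer loop is built so that
\[
\dot e_{v,\ell}=-K_{\rm VC}e_{v,\ell}+\tfrac{\omega_{\rm b}}{C_{\rm f}}e_{i,\ell}.
\]
For the current errors, we differentiate $i^{\rm r}_{\rm t,\ell}$ along \eqref{inv1}–\eqref{invend} and the filters. This uses $\dot\omega=-K_{\rm P}p_2$, which explains the $K_{\rm P}p_2$ terms, and $\dot q_2$ from \eqref{Q_filter}, which explains the $q_1-\operatorname{sat}_{\overline Q}(q)$ terms. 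We then check that the terms of \eqref{dads_d1} and \eqref{dads_q1} cancel every known term, leaving
\[
\dot e_{i,\ell}=-K_{\rm VC}e_{i,\ell}+\tfrac{\omega_{\rm b}}{L_{\rm f}}\cdot\tfrac{L_{\rm f}}{\omega_{\rm b}}u_\ell\;-\;\dot i_{\rm g,\ell}\text{-type terms}.
\]
The only uncancelled piece comes from $\dot i_{\rm g,\ell}$ inside $\dot i^{\rm r}_{\rm t,\ell}$, and it has the form
\[
\delta_\ell=\tfrac{\omega_{\rm b}}{L}\big(v_{\rm c,\ell}-v_{\rm g,\ell}-R\,i_{\rm g,\ell}\big).
\]
The cross-coupling terms $\pm\omega_{\rm b}\omega i_{\rm g}$ are measurable and should be absorbed by the factor-2 terms in \eqref{dads_d1}; we will verify this bookkeeping carefully. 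This algebraic verification is the main obstacle: it is long and sign-sensitive, especially the $\omega^2C_{\rm f}$ and $K_{\rm VC}^2C_{\rm f}/\omega_{\rm b}$ contributions.

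Once the error dynamics are in hand, the term $-\tfrac{\omega_{\rm b}}{C_{\rm f}}e_{v,\ell}$ in $u_\ell$ cancels the cross term in $\dot W_\ell$. This gives
\[
\dot W_\ell\le -K_{\rm VC}e_{v,\ell}^2-K_{\rm CC}e_{i,\ell}^2-\tfrac{(1+e^{z_\ell})\omega_{\rm b}^2}{4\mu_\ell}(1+i_{\rm g,\ell}^2+v_{\rm c,\ell}^2)e_{i,\ell}^2-e_{i,\ell}\delta_\ell .
\]
Here $L_{\rm f}$ and $\omega_{\rm b}$ scale out, possibly with constants to be tracked. We bound $|\delta_\ell|\le\tfrac{\omega_{\rm b}}{L}\sqrt{3}\,(1+R^2+\|v_{\rm g,\ell}\|_\infty^2)^{1/2}(1+i_{\rm g,\ell}^2+v_{\rm c,\ell}^2)^{1/2}$ by Cauchy–Schwarz, up to the exact constant. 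Young's inequality then gives
\[
|e_{i,\ell}\delta_\ell|\le \tfrac{(1+e^{z_\ell})\omega_{\rm b}^2}{4\mu_\ell}(1+i_{\rm g,\ell}^2+v_{\rm c,\ell}^2)e_{i,\ell}^2+\tfrac{\mu_\ell(1+R^2+\|v_{\rm g,\ell}\|_\infty^2)}{(1+e^{z_\ell})L^2}.
\]
Using $1+e^{z_\ell}\ge1$, we obtain $\dot W_\ell\le-2kW_\ell+\mu_\ell(1+R^2+\|v_{\rm g,\ell}\|_\infty^2)/L^2$ with $k$ as in \eqref{decay_k}. Integrating (comparison lemma) yields \eqref{re52} and \eqref{re53}, since $e_{v,\ell}^2\le2W_\ell$. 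Global existence is supplied by Theorem \ref{thm1}.

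For \eqref{re54} and \eqref{re55}, we keep the factor $(1+e^{z_\ell(t)})^{-1}$ in the bound:
\[
\dot W_\ell\le-2kW_\ell+\frac{c_\ell}{1+e^{z_\ell}},\qquad c_\ell:=\mu_\ell(1+R^2+\|v_{\rm g,\ell}\|_\infty^2)/L^2 .
\]
By Theorem \ref{thm1}, $z_\ell$ is nondecreasing and converges to a finite limit $z_\ell^\infty$. Integrating \eqref{z_ddot} gives $\int_0^\infty\max\{W_\ell-\varepsilon,0\}\,dt\le (e^{z_\ell^\infty}-e^{z_\ell(0)})/\Gamma_\ell<\infty$.

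Boundedness of all states (Theorem \ref{thm1}) implies $\dot W_\ell$ is bounded, so $\max\{W_\ell-\varepsilon,0\}$ is uniformly continuous. Barbalat's lemma then gives $\limsup W_\ell\le\varepsilon$. Hence $\limsup|e_{v,\ell}|\le\sqrt{2\varepsilon}$, which is \eqref{re54} and \eqref{re55}.

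Bounding $\dot W_\ell$ requires the saturation term and $\dot q$ only through bounded quantities. These are locally Lipschitz compositions of bounded states and $v_{\rm g}$, and the disturbance $v_{\rm g}$ enters $\dot W_\ell$ only through the bounded $\delta_\ell$, so the uniform-continuity step goes through.
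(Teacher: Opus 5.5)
Your proposal is correct and follows essentially the same route as the paper. The backstepping error coordinates give $\dot W_\ell\le -2kW_\ell+\mu_\ell(1+R^2+v_{{\rm g},\ell}^2)/\big(L^2(1+e^{z_\ell})\big)$, comparison gives \eqref{re52}--\eqref{re53}, and Barbalat's lemma applied to $\frac{d}{dt}e^{z_\ell}=\Gamma_\ell\max\{W_\ell-\varepsilon,0\}$ (integrable since $z_\ell$ converges, uniformly continuous since all states are bounded) gives \eqref{re54}--\eqref{re55}.

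Two small bookkeeping corrections. First, no $-K_{\rm VC}e_{i,\ell}$ term survives in $\dot e_{i,\ell}$: the $-K_{\rm VC}(i_{{\rm t},\ell}-i^{\rm r}_{{\rm t},\ell})$ term in the control law exactly cancels the $+K_{\rm VC}e_{i,\ell}$ produced by differentiating $i^{\rm r}_{{\rm t},\ell}$. Second, your Cauchy--Schwarz constant is exactly $1$, obtained by pairing $(1,R,v_{{\rm g},\ell})$ with $(v_{{\rm c},\ell},-i_{{\rm g},\ell},-1)$, and that constant is precisely what your stated Young bound needs to reproduce the constants in \eqref{re52}--\eqref{re53}.
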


The proof for Theorem \ref{thm2} is provided in Section \ref{pfthm2}.

\begin{rem}\rm 
The performance guarantees \eqref{re52}-\eqref{re55} remain valid even under severe fault conditions, such as three-phase-to-ground faults at the grid, \textit{i.e.},
$\sqrt{v_{\rm g,d}^2(t) + v_{\rm g,q}^2(t)} \equiv 0$. Indeed, such faulted grid voltages still belong to $L^\infty(\mathbb{R}_{+};\mathbb{R})$, which is the only assumption imposed on the grid voltage signals in the analysis.
\end{rem}

\section{Over-current Limiting with Control Barrier Functions}
\label{secCBF}

In this section, we address the problem of enforcing hard current limits on the inverter terminal current. To this end, we incorporate a CBF based safety filter that admits an arbitrary locally Lipschitz nominal controller and guarantees that the magnitude of the inverter terminal current, $i_{\rm t} = i_{\rm t,d} + j i_{\rm t,q}$, does not exceed a prescribed bound \( I_{\max} \). The proposed construction is independent of the specific nominal controller and therefore applies, in particular, to the proposed DADS-BS controller. Figure \ref{sftyfigg} illustrates the CBF-based modification to the current controller module in Fig. \ref{drooparch}, ensuring that the terminal-current magnitude remains below $I_{\rm max}$. We first briefly review the CBF framework, then specialize it to the terminal-current dynamics of the GFM inverter, and finally establish forward invariance of the safe-current set together with boundedness of the remaining closed-loop states. We also address the issue of boundedness of the adaptive gains $(z_{\rm d}, z_{\rm q})$ when the nominal controller is chosen as DADS-BS.

\subsection{Safety-critical control with locally Lipschitz nominal controllers}
\begin{figure}
    \centering
    \includegraphics[width=1\linewidth]{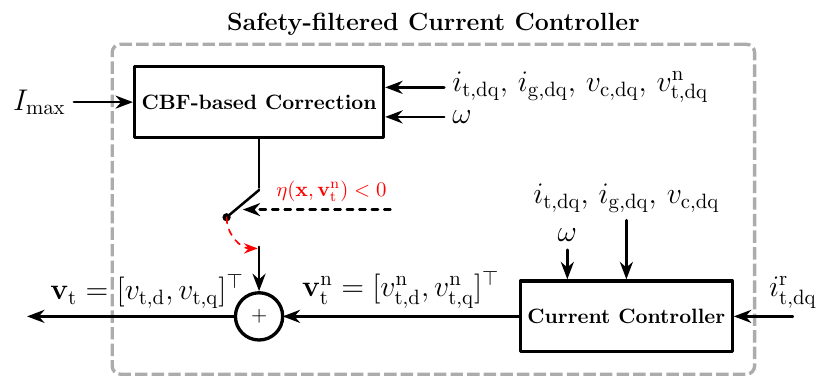}
    \caption{CBF-based safety filtering of the current controller. 
The CBF correction is activated when $\eta(\boldsymbol{x},\boldsymbol{v}_{\rm t}^{\rm n})<0$; see \eqref{sfty_ip1}.}
    \label{sftyfigg}
\end{figure}
\begin{df}[\it Extended class $\mathcal{K}$ functions]\rm  A continuous function $\alpha : [-a_0, a_1) \to [-\infty, \infty)$, $a_0 > 0, a_1 > 0$ belongs to extended class $\mathcal{K}$ if it is strictly increasing and $\alpha(0) = 0$.
\end{df}
We consider a general nonlinear control-affine system of the form
\begin{align}\label{ysys}
    \dot{\boldsymbol y} = \boldsymbol f(\boldsymbol y) + \boldsymbol g(\boldsymbol y)\boldsymbol{u},
\end{align}
where $\boldsymbol f : \mathbb R^n \to \mathbb R^n$ and $\boldsymbol g : \mathbb R^n \to \mathbb R^{n \times m}$ are smooth maps, $\boldsymbol y \in \mathbb R^n$ is the state, and $\boldsymbol u \in \mathbb R^m$ is the control input.

\begin{df}[\it Forward invariant set]
A set $\mathcal C_{\boldsymbol{y}} \subseteq \mathbb{R}^n$ is forward invariant for system \eqref{ysys} if every solution with initial condition $\boldsymbol y(0) \in \mathcal C_{\boldsymbol y}$ satisfies $\boldsymbol y(t) \in \mathcal C_{\boldsymbol y}$ for all $t \ge 0$.
\end{df}

\begin{df}[\it Control barrier function \cite{ames2016control}] Let
\begin{align*}
    \mathcal{C}_{\boldsymbol{y}} := \{\boldsymbol y \in \mathbb R^n : h(\boldsymbol y)\ge 0\},
\end{align*}
for a continuously differentiable function $h : \mathbb R^n \to \mathbb R$. The function $h$ is a control barrier function (CBF) for system \eqref{ysys} on $\mathcal C_{\boldsymbol y}$ if there exists an extended class $\mathcal{K}$ function $\alpha$ such that
\begin{align*}
\sup_{\boldsymbol u} \Bigl[ L_{\boldsymbol f} h(\boldsymbol y) + L_{\boldsymbol g} h(\boldsymbol y)\,\boldsymbol u + \alpha\bigl(h(\boldsymbol y)\bigr) \Bigr] \geq 0,
\quad \forall \boldsymbol y \in \mathcal C_{\boldsymbol y},
\end{align*}
and $L_{\boldsymbol g} h(\boldsymbol y) \neq 0$ whenever $h(\boldsymbol y)=0$. Here, $L_{\boldsymbol f}$ and $L_{\boldsymbol g}$ denote the Lie derivatives along $\boldsymbol f$ and $\boldsymbol g$, respectively.
\end{df}

\begin{thm}[\it \cite{ames2016control}]
Suppose $h$ is a CBF for \eqref{ysys} on $\mathcal C_{\boldsymbol y}$. Let $\alpha$ be the associated extended class $\mathcal K$ function. If a locally Lipschitz feedback law $\boldsymbol u =  \boldsymbol k (\boldsymbol y)$ with $\boldsymbol k : \mathbb{R}^n \to \mathbb{R}^m$  satisfies
\begin{align*}
    \dot{h}(\boldsymbol y, \boldsymbol u) \geq -\alpha\bigl(h(\boldsymbol y)\bigr),
    \quad \forall \boldsymbol y \in \mathcal C_{\boldsymbol y},
\end{align*}
then the set $\mathcal C_{\boldsymbol y}$ is forward invariant for the closed-loop system.
\end{thm}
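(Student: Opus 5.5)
The statement is the classical forward-invariance result for CBFs. I would prove it by combining local well-posedness of the closed loop with a comparison argument applied to $h$ along trajectories.

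\textbf{Step 1: well-posedness.} Since $\boldsymbol f,\boldsymbol g$ are smooth and $\boldsymbol k$ is locally Lipschitz, the closed-loop vector field $\boldsymbol F(\boldsymbol y):=\boldsymbol f(\boldsymbol y)+\boldsymbol g(\boldsymbol y)\boldsymbol k(\boldsymbol y)$ is locally Lipschitz. Hence for every $\boldsymbol y(0)\in\mathcal C_{\boldsymbol y}$ there is a unique maximal solution on some interval $[0,T_{\max})$. Forward invariance is then read as $\boldsymbol y(t)\in\mathcal C_{\boldsymbol y}$ for all $t$ in this interval, which is $[0,\infty)$ whenever the solution is global.

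\textbf{Step 2: differential inequality.} Set $\phi(t):=h(\boldsymbol y(t))$. This function is $C^1$ and satisfies $\dot\phi=\dot h(\boldsymbol y,\boldsymbol k(\boldsymbol y))$. By hypothesis, $\dot\phi(t)\ge -\alpha(\phi(t))$ at every time with $\phi(t)\ge0$.

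\textbf{Step 3: argue by contradiction.} Suppose there is some $t_1$ with $\phi(t_1)<0$, and let $t_0:=\sup\{t<t_1:\phi(t)\ge0\}$. Then $\phi(t_0)=0$ and $\phi<0$ on $(t_0,t_1]$. The inequality $\dot\phi\ge-\alpha(\phi)$ holds only on $\mathcal C_{\boldsymbol y}$, so it cannot be applied inside $(t_0,t_1]$. This is the main obstacle.

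\textbf{Step 4: extend the inequality outside $\mathcal C_{\boldsymbol y}$.} At the point $\boldsymbol y(t_0)$ we have $\dot\phi(t_0)\ge -\alpha(0)=0$. To get beyond this I would extend the inequality by continuity. The function $\boldsymbol y\mapsto \dot h(\boldsymbol y,\boldsymbol k(\boldsymbol y))+\alpha(h(\boldsymbol y))$ is continuous and nonnegative on $\mathcal C_{\boldsymbol y}$, but it may be negative immediately outside. A genuinely nonnegative extension needs the standard formulation of \cite{ames2016control}: the inequality is imposed on an open set $D\supset\mathcal C_{\boldsymbol y}$, and $\alpha$ is defined on negative arguments. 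Under that reading, $\dot\phi\ge-\alpha(\phi)$ holds on $(t_0,t_1]$.

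\textbf{Step 5: comparison lemma.} Compare with the scalar ODE $\dot\psi=-\alpha(\psi)$, $\psi(t_0)=0$. Its unique solution is $\psi\equiv0$; uniqueness follows after replacing $\alpha$ by a locally Lipschitz extended class-$\mathcal K$ minorant, or by using that $\alpha$ is increasing. The comparison lemma gives $\phi(t)\ge\psi(t)=0$ on $(t_0,t_1]$, contradicting $\phi(t_1)<0$.

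\textbf{Alternative route.} An equivalent approach is Nagumo's theorem. On the boundary $\partial\mathcal C_{\boldsymbol y}=\{h=0\}$, the condition $\nabla h\cdot\boldsymbol F\ge0$ together with the regularity $\nabla h\ne0$ (supplied by $L_{\boldsymbol g}h\ne0$ when $h=0$) makes $\boldsymbol F$ subtangential to $\mathcal C_{\boldsymbol y}$. Uniqueness of solutions then yields forward invariance. This route avoids extending the inequality beyond $\mathcal C_{\boldsymbol y}$, so I would fall back on it if the open-neighborhood extension in Step 4 is unavailable.
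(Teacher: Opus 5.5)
The paper does not prove this statement; it cites it from \cite{ames2016control}. Your proposal therefore has to stand on its own.

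Your main line (Steps 3--5) does not prove the theorem as written. You correctly identify the obstacle, but Step 4 resolves it by changing the hypothesis. The stronger hypothesis, with the inequality on an open $D\supset\mathcal C_{\boldsymbol y}$, cannot be derived from the stated one for any choice of $D$ or $\alpha$. Here is a counterexample:
\begin{itemize}
\item Take $n=m=2$, $\boldsymbol f=0$, $\boldsymbol g=I$, $h(\boldsymbol y)=y_1$ and $\boldsymbol k(\boldsymbol y)=[y_1,1]^\top$.
\item Then $h$ is a CBF, since $L_{\boldsymbol g}h=(1,0)\neq 0$.
\item On $\mathcal C_{\boldsymbol y}$ the hypothesis holds: $\dot h=y_1\ge 0\ge -\alpha(y_1)$.
\item At every point with $y_1<0$ it fails: $\dot h=y_1<0<-\alpha(y_1)$.
\end{itemize}
The set is still forward invariant, so the theorem is true here. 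But any argument that needs $\dot\phi\ge-\alpha(\phi)$ on $(t_0,t_1]$ cannot reach this conclusion.

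So the actual proof of the stated theorem is your alternative route. It is correct and should be promoted to the main argument. It is precisely where the condition $L_{\boldsymbol g}h\neq 0$ on $\{h=0\}$ in the paper's CBF definition is used. Without that condition, the inside-only statement is false: take $h(y)=-y^2$ on $\mathbb R$, so $\mathcal C_{\boldsymbol y}=\{0\}$, with closed-loop field $F\equiv 1$.

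You should spell out the subtangency claim, since the tangential case is the one a first-order argument cannot handle:
\begin{itemize}
\item Points with $h>0$ are interior points.
\item At a point with $h(\boldsymbol y)=0$, we have $\nabla h(\boldsymbol y)\neq 0$.
\item Any $\boldsymbol v$ with $\nabla h(\boldsymbol y)\boldsymbol v>0$ lies in the contingent cone $T_{\mathcal C_{\boldsymbol y}}(\boldsymbol y)$, because $h(\boldsymbol y+\tau\boldsymbol v)>0$ for small $\tau>0$.
\item The case $\nabla h(\boldsymbol y)\boldsymbol v=0$ follows because the contingent cone is closed and $\boldsymbol v+\epsilon\nabla h(\boldsymbol y)^{\top}\to\boldsymbol v$.
\end{itemize}
The hypothesis at $h=0$ gives $\nabla h\,\boldsymbol F\ge -\alpha(0)=0$. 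Hence $\boldsymbol F(\boldsymbol y)\in T_{\mathcal C_{\boldsymbol y}}(\boldsymbol y)$ on all of $\mathcal C_{\boldsymbol y}$. Nagumo's theorem for closed sets then provides a local solution that stays in $\mathcal C_{\boldsymbol y}$. Uniqueness, together with a continuation argument using closedness, keeps the maximal solution in $\mathcal C_{\boldsymbol y}$ on $[0,T_{\max})$.

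Your Steps 2--5 do matter for the paper, just not for this statement. Where the paper actually uses CBFs, in the proof of Theorem~\ref{thm4}, the closed-form filter \eqref{sfty_ip1} enforces $\dot h\ge -ch$ at every state of $\mathbb R^{10}$. The constraint is always feasible, with equality on the ON branch, so effectively $D=\mathbb R^{10}$. The paper concludes directly from $h(t)\ge e^{-ct}h(0)$ and never invokes the cited theorem.

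Two small fixes if you keep the $D$-version:
\begin{itemize}
\item First shrink $t_1$ so that $\boldsymbol y([t_0,t_1])\subset D$ and $\phi$ stays in the domain of $\alpha$.
\item After that, the comparison lemma is unnecessary. Since $\phi<0$ on $(t_0,t_1]$, you get $\dot\phi\ge-\alpha(\phi)>0$ there, hence $\phi(t_1)>\phi(t_0)=0$, a contradiction.
\end{itemize}
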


We now apply the above framework to the terminal-current dynamics of the droop-based GFM inverter. Concatenating the current dynamics \eqref{itddn},\eqref{itqdn}, we obtain
\begin{align*}
    \dot{\boldsymbol i}_{\rm t} &= \boldsymbol A(\boldsymbol x) \boldsymbol i_{\rm t}-\frac{\omega_{\rm b}}{L_{\rm f}}\boldsymbol v_{\rm c}+\frac{\omega_{\rm b}}{L_{\rm {\rm f}}}\boldsymbol v_{\rm t},
\end{align*}
where 
\begin{align*}
    \boldsymbol{i}_{\rm t} := \begin{bmatrix}
i_{\rm{t,d}}\\ i_{\rm{t,q}} 
\end{bmatrix},\hspace{5pt} 
\boldsymbol{v}_{\rm c}:=\begin{bmatrix}
    v_{\rm{c,d}}\\v_{\rm{c,q}}
\end{bmatrix}, \hspace{5pt}  \hspace{5pt} \boldsymbol v_{\rm t}:=\begin{bmatrix}
    v_{\rm{t,d}}\\v_{\rm{t,q}}
\end{bmatrix},
\end{align*}
and 
\begin{align*}
\boldsymbol A(\boldsymbol x) = \begin{bmatrix}
 - \frac{R_{\rm f}}{L_{\rm f}} \omega_{ \rm b}  && \omega_{ \rm b} \omega\\
  -\omega_{ \rm b} \omega && - \frac{R_{\rm f}}{L_{\rm f}} \omega_{\rm b}
\end{bmatrix}.
\end{align*}
Here, $\boldsymbol{x}$ denotes the state vector given by \eqref{xstte}. 

We enforce the constraint that the magnitude of the terminal current is bounded by $I_{\max}>0$, that is,
\begin{align*}
    \|\boldsymbol i_{\rm t}\|^2 = i_{\rm t,d}^2 + i_{\rm t,q}^2 \le I_{\max}^2.
\end{align*}
This defines the safe set
\begin{align}\label{sfstt}
    \mathcal C := \bigl\{\boldsymbol x \in \mathbb R^{10} | i_{\rm t,d}^2 + i_{\rm t,q}^2 \le I_{\max}^2\bigr\}.
\end{align}
To encode this requirement as a CBF condition, we introduce the barrier function
\begin{align}\label{brfntn}
    h(\boldsymbol x) := I_{\max}^2 - i_{\rm t,d}^2 - i_{\rm t,q}^2 =  I_{\max}^2 - \boldsymbol i_{\rm t}^\top \boldsymbol i_{\rm t}.
\end{align}
Differentiating $h$ along the trajectories of the system yields
\begin{align*}
\begin{split}
    \dot{h}(\boldsymbol x) &= -2 \boldsymbol i_{\rm t}^\top\dot{\boldsymbol i}_{\rm t}\\
    &=-2\boldsymbol i_{\rm t}^\top \boldsymbol A(\boldsymbol x) \boldsymbol i_{\rm t} + \frac{2\omega_{\rm b}}{L_{\rm f}}\boldsymbol i^\top_{\rm t} \boldsymbol v_{\rm c} - \frac{2\omega_{\rm b}}{L_{\rm f}}\boldsymbol i^\top_{\rm t} \boldsymbol v_{\rm t}.
\end{split}
\end{align*}
Since $\dot h$ depends affinely on $\boldsymbol v_{\rm t}$, we can enforce a CBF inequality of the form
\begin{align}\label{cbf-ineq-h}
\dot h(\boldsymbol x) \geq -ch(\boldsymbol x),
\end{align}
for some design constant $c>0$, by constraining $\boldsymbol v_{\rm t}$.

Let the nominal control be denoted by 
\begin{align*}
    \boldsymbol v^{\rm n}_{\rm t} = \begin{bmatrix}
        v^{\rm n}_{\rm t,d}\\v^{\rm n}_{\rm t,q}
    \end{bmatrix}.
\end{align*}
This nominal control may be any locally Lipschitz function of the state.  We then construct a safety filter that modifies $\boldsymbol v_{\rm t}^{\rm n}$ only as needed to satisfy \eqref{cbf-ineq-h}. To this end, we solve the quadratic program (QP)

\begin{align}\label{qpcnstnt}
    \begin{aligned}
        \min_{\boldsymbol v_t} \quad & \|\boldsymbol v_{\rm t} - \boldsymbol v^{\rm n}_{\rm t}\|^2 \\
        \text{s.t.} \quad & -2\boldsymbol i_{\rm t}^\top \boldsymbol A(\boldsymbol x) \boldsymbol i_{\rm t} + \frac{2\omega_{\rm b}}{L_{\rm f}}\boldsymbol i^\top_{\rm t} \boldsymbol v_{\rm c} - \frac{2\omega_{\rm b}}{L_{\rm f}}\boldsymbol i^\top_{\rm t} \boldsymbol v_{\rm t} \geq - c h(\boldsymbol x),
    \end{aligned}
\end{align}
which explicitly enforces the CBF condition with $\alpha(s)=c s$. The problem \eqref{qpcnstnt} is strictly convex with a single affine inequality constraint and thus admits a unique optimal solution.

Define
\begin{align}\label{sfty_ip2}
\eta(\boldsymbol x, \boldsymbol v^{\rm n}_{\rm t}) :=  -2\boldsymbol i_{\rm t}^\top \boldsymbol A(\boldsymbol x) \boldsymbol i_{\rm t} + \frac{2\omega_{\rm b}}{L_{\rm f}}\boldsymbol i^\top_{\rm t} \boldsymbol v_{\rm c} - \frac{2\omega_{\rm b}}{L_{\rm f}}\boldsymbol i^\top_{\rm t} \boldsymbol v_{\rm t}^{\rm n}  + c h(\boldsymbol x).
\end{align}
The nominal control $\boldsymbol v_{\rm t}^{\rm n}$ satisfies the constraint in \eqref{qpcnstnt} if and only if $\eta(\boldsymbol x,\boldsymbol v_{\rm t}^{\rm n})\ge 0$. Otherwise, the constraint is violated, in which case, the optimal solution of \eqref{qpcnstnt} is given by the projection of $\boldsymbol v_{\rm t}^{\rm n}$ onto the half-space defined by the affine constraint.  This results in the explicit solution
\begin{align}\label{sfty_ip1}
\boldsymbol v_{\rm t} = 
\begin{cases}
\boldsymbol v^{\rm n}_{\rm t}, & \eta(\boldsymbol x, \boldsymbol v^{\rm n}_{\rm t}) \geq 0, \\[6pt]
\boldsymbol v^{\rm n}_{\rm t} + \dfrac{L_{\rm f}}{2\omega_{\rm b}} \dfrac{\eta(\boldsymbol x, \boldsymbol v^{\rm n}_{\rm t})}{\|  \boldsymbol i_{\rm t} \|^2}\boldsymbol i_{\rm t}, & \text{otherwise}.
\end{cases}
\end{align}
See Fig. \ref{sftyfigg} also. When $\|\boldsymbol i_{\rm t}\| = 0$, the CBF constraint in \eqref{qpcnstnt} reduces to $0 \ge -c I_{\max}^2$, which is trivially satisfied since $I_{\max}^2 > 0$. Thus, when 
$\boldsymbol i_{\rm t} = 0$, the CBF condition imposes no restriction on 
$\boldsymbol v_{\rm t}$, and therefore the nominal control 
$\boldsymbol v_{\rm t}^{\rm n}$ is applied without modification.

Below, for the ease of stating results, we collect the states excluding terminal current components $(i_{\rm t,d},i_{\rm t,q})$ into $\boldsymbol{x}'$ as 
\begin{align}\label{xbar}
    \boldsymbol x' := \big[v_{\rm {c,d}}, \, v_{\rm {c,q}}, \,  i_{\rm {g,d}}, \,  i_{\rm {g,q}},\, q_1 ,\,  q_2, \, p_1,\, p_2\big]^{  \top} \in \mathbb{R}^{8}. 
\end{align}

\begin{thm}
 [\it Safety with locally Lipschitz nominal controllers] \label{thm4}   Consider the droop-based GFM inverter described by \eqref{inv1}-\eqref{xstte}. Fix design parameters $\omega_{\rm pc},\omega_{\rm qc}>0$, $\xi_{\rm p},\xi_{\rm q}>1$, and $\overline P, \overline Q \in (0,+\infty]$. Let the setpoints $\omega_0,V_0,P_{0},Q_{0}\in\mathbb{R}$, the droop coefficients $K_{\rm P},K_{\rm Q}\in\mathbb{R},$ and the maximum allowable terminal current $I_{\max}>0$ be given. Let the current-limiting barrier function  $h(\boldsymbol x): \mathbb{R}^{10}\rightarrow \mathbb{R}$ be given by \eqref{brfntn}. Suppose the terminal-voltage control $\boldsymbol{v}_{\rm t} = [v_{\rm t,d},v_{\rm t,q}]^\top$ is implemented according to \eqref{sfty_ip2},\eqref{sfty_ip1}, for an arbitrary locally Lipschitz nominal control
$\boldsymbol v_{\rm t}^{\rm n}=\boldsymbol k(\boldsymbol x)$,
with $\boldsymbol k:\mathbb{R}^{10}\rightarrow\mathbb{R}^2$. Assume the grid voltage components satisfy $v_{\rm g,d},v_{\rm g,q}\in L^\infty(\mathbb{R}_{+};\mathbb{R})$. Then, for every initial condition $\boldsymbol{x}(0) \in\mathcal{C}$, with $\mathcal{C}$ defined in \eqref{sfstt}, the closed-loop system admits a unique absolutely continuous solution for all $t\geq 0$, and  
\begin{align}\label{crrntsft}
    \sqrt{i^2_{\rm t,d}(t)+i^2_{\rm t,q}(t)}\leq I_{\rm max},
\end{align}
all $t\geq 0$. 
Moreover, for each fixed $\boldsymbol{\Omega}\in\mathbb{R}^6_{>0}$ defined in \eqref{theta_def}, there exists a function $B_{2,\boldsymbol{\Omega}}\in C^0(\mathbb{R}^{10}\times \mathbb{R}_{+}\times \mathbb{R}_{+})$, such that
    \begin{align}\label{xdshbnd}
        \Vert \boldsymbol{x}'(t)\Vert \leq B_{2,\boldsymbol{\Omega}}\big(\boldsymbol{x}(0),\Vert v_{\rm g,d}\Vert_{\infty},\Vert v_{\rm g,q}\Vert_{\infty}\big),
    \end{align}
for all $t\geq 0$, where $\boldsymbol x'$ is given by \eqref{xbar}. 
\end{thm}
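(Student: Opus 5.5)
Our plan has three parts: well-posedness of the filtered closed loop, forward invariance of $\mathcal C$ by a comparison argument on $h$, and boundedness of $\boldsymbol x'$ by treating the bounded terminal current as an input to the remaining dynamics.

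\textbf{Step 1: well-posedness.} We first show that the filtered control \eqref{sfty_ip1} is locally Lipschitz in $\boldsymbol x$. Write it as
\begin{align*}
\boldsymbol v_{\rm t}=\boldsymbol k(\boldsymbol x)+\frac{L_{\rm f}}{2\omega_{\rm b}}\,\frac{\min\{\eta(\boldsymbol x,\boldsymbol k(\boldsymbol x)),0\}}{\|\boldsymbol i_{\rm t}\|^2}\,\boldsymbol i_{\rm t}.
\end{align*}
Away from $\boldsymbol i_{\rm t}=0$ this is a composition of locally Lipschitz maps, so it is locally Lipschitz. Near $\boldsymbol i_{\rm t}=0$ we have $\eta\approx cI_{\max}^2>0$, because every other term of $\eta$ is $O(\|\boldsymbol i_{\rm t}\|)$ on compacts. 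Hence $\min\{\eta,0\}\equiv0$ on a neighborhood of $\{\boldsymbol i_{\rm t}=0\}$, and the correction vanishes there. So the closed loop is locally Lipschitz in $\boldsymbol x$. It is measurable and essentially bounded in $t$ through $v_{\rm g,d},v_{\rm g,q}$, and the saturations in \eqref{Q_filter}, \eqref{P_filter} are globally Lipschitz. Carath\'eodory theory then gives a unique absolutely continuous maximal solution on $[0,t_{\max})$.

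\textbf{Step 2: forward invariance.} By construction the applied input satisfies the constraint in \eqref{qpcnstnt}. Thus $\dot h\ge -ch$ a.e. on $[0,t_{\max})$, and the comparison lemma gives $h(\boldsymbol x(t))\ge e^{-ct}h(\boldsymbol x(0))\ge0$. This yields \eqref{crrntsft} on $[0,t_{\max})$.

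\textbf{Step 3: boundedness of $\boldsymbol x'$.} We treat $\boldsymbol i_{\rm t}$, now bounded by $I_{\max}$, and $v_{\rm g}$ as inputs to the subsystem \eqref{inv1}, \eqref{inv2}, \eqref{invend1}, \eqref{invend}.

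First consider the LC-line subsystem $(v_{\rm c},i_{\rm g})$ with the energy function
\begin{align*}
E=\frac{C_{\rm f}}{2\omega_{\rm b}}\|\boldsymbol v_{\rm c}\|^2+\frac{L}{2\omega_{\rm b}}\|\boldsymbol i_{\rm g}\|^2 .
\end{align*}
The $\omega$-dependent rotational terms are skew and cancel, whatever $\omega$ is. We obtain
\begin{align*}
\dot E=\boldsymbol v_{\rm c}^\top\boldsymbol i_{\rm t}-R\|\boldsymbol i_{\rm g}\|^2-\boldsymbol i_{\rm g}^\top\boldsymbol v_{\rm g}.
\end{align*}
This $E$ is not strictly dissipative in $\boldsymbol v_{\rm c}$, and that is the main obstacle. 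We plan to add a cross term $\delta\,\boldsymbol v_{\rm c}^\top\boldsymbol i_{\rm g}$ with small $\delta>0$. Its derivative contributes $\delta\frac{\omega_{\rm b}}{L}\|\boldsymbol v_{\rm c}\|^2-\delta\frac{\omega_{\rm b}}{C_{\rm f}}\|\boldsymbol i_{\rm g}\|^2+\dots$. The rotational parts again cancel, because the same $\omega$ multiplies both rotations: $\boldsymbol v_{\rm c}^\top J\boldsymbol i_{\rm g}+(J\boldsymbol v_{\rm c})^\top\boldsymbol i_{\rm g}=0$. So we take $\tilde E=E-\delta\,\boldsymbol v_{\rm c}^\top\boldsymbol i_{\rm g}$. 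For $\delta$ small, depending on $\boldsymbol\Omega$, $\tilde E$ is positive definite and
\begin{align*}
\dot{\tilde E}\le -a\tilde E+b\big(I_{\max}^2+\|v_{\rm g,d}\|_\infty^2+\|v_{\rm g,q}\|_\infty^2\big),
\end{align*}
with $a,b>0$ depending on $\boldsymbol\Omega$, obtained via Young's inequality. This bounds $(\boldsymbol v_{\rm c},\boldsymbol i_{\rm g})$ continuously in $\boldsymbol x(0)$ and the disturbance norms.

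Next, the instantaneous powers $p,q$ are then bounded. The filters \eqref{Q_filter}, \eqref{P_filter} are exponentially stable linear systems ($\xi>1$, $\omega_{\rm c}>0$) driven by these bounded inputs, so $(q_1,q_2,p_1,p_2)$ are bounded by the variation-of-constants formula. This holds even when $\overline P=+\infty$. Collecting these bounds gives a continuous $B_{2,\boldsymbol\Omega}$ satisfying \eqref{xdshbnd}.

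\textbf{Step 4: global existence.} All states, including $\boldsymbol i_{\rm t}$, stay bounded on $[0,t_{\max})$. Hence $t_{\max}=\infty$ by the standard continuation argument, and \eqref{crrntsft} and \eqref{xdshbnd} hold for all $t\ge0$.
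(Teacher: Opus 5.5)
Your proposal is correct and follows essentially the same route as the paper. The steps match one to one: local Lipschitzness because the filter is inactive near $\boldsymbol i_{\rm t}=0$; the comparison $h(t)\ge e^{-ct}h(0)$; a storage function for $(\boldsymbol v_{\rm c},\boldsymbol i_{\rm g})$ that, multiplied by $\omega_{\rm b}$, is exactly the paper's $W_2=\frac12 C_{\rm f}\|\boldsymbol v_{\rm c}\|^2+\frac12 L\|\boldsymbol i_{\rm g}\|^2+\lambda\boldsymbol v_{\rm c}^\top\boldsymbol i_{\rm g}$ with $\lambda=-\delta\omega_{\rm b}<0$; and the stable second-order filter bounds (Lemma~\ref{lemfilt}) for $p_1,p_2,q_1,q_2$. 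Your local argument ``on compacts'' near $\{\boldsymbol i_{\rm t}=0\}$ is slightly more careful than the paper's uniform-$\epsilon$ claim, since $\omega$, $\boldsymbol v_{\rm c}$ and $\boldsymbol k(\boldsymbol x)$ are unbounded on $\mathbb{R}^{10}$.
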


The proof for Theorem \ref{thm4} is provided in Section \ref{pfthm4}. 

\subsection{Safe DADS-BS GFM control}
We now specialize to the case where the nominal controller $\boldsymbol v_{\rm t}^{\rm n}$ is chosen as the DADS-BS.

\begin{prop}\label{propDADS}
     Consider the droop-based GFM inverter described by \eqref{inv1}-\eqref{xstte}. Fix design parameters $K_{\rm VC},K_{\rm CC},\mu_{\rm d},\mu_{\rm q},\Gamma_{\rm d},\Gamma_{\rm q},\varepsilon,\omega_{\rm pc},\omega_{\rm qc}>0$, $\xi_{\rm p},\xi_{\rm q}>1$, and and $\overline P, \overline Q \in (0,+\infty]$. Let the setpoints $\omega_0,V_0,P_{0},Q_{0}\in\mathbb{R}$, the droop coefficients $K_{\rm P},K_{\rm Q}\in\mathbb{R},$ and the maximum allowable terminal current $I_{\max}>0$ be given. Let the current-limiting barrier function  $h(\boldsymbol x): \mathbb{R}^{10}\rightarrow \mathbb{R}$ be given by \eqref{brfntn}. Suppose the terminal-voltage control $\boldsymbol{v}_{\rm t}=[v_{\rm t,d},v_{\rm t,q}]^\top$ is implemented according to \eqref{sfty_ip2},\eqref{sfty_ip1}, where the nominal control $\boldsymbol{v}_{\rm t}^{\rm n}=[v_{\rm t,d}^{\rm n}, v_{\rm t,q}^{\rm n}]^\top$ is the DADS-BS defined in \eqref{dads_d1}-\eqref{dads_qend}. Assume the grid voltage components satisfy $v_{\rm g,d},v_{\rm g,q}\in L^\infty(\mathbb{R}_{+};\mathbb{R})$. Then, for each fixed $\boldsymbol{\Omega}\in\mathbb{R}^6_{>0}$ defined in \eqref{theta_def} and for every initial condition $\boldsymbol{x}(0) \in\mathcal{C}$ (with $\mathcal C$ defined in \eqref{sfstt})  and $z_{\rm d}(0),z_{\rm q}(0) \in\mathbb{R}$, the closed-loop system admits a unique  absolutely continuous solution for all $t\geq 0$.
\end{prop}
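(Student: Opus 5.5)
The plan is to reduce Proposition~\ref{propDADS} to Theorem~\ref{thm4} applied to the augmented state, and to handle the two adaptive gains $z_{\rm d},z_{\rm q}$ separately, since they are the only states not covered by that theorem. Throughout, $\boldsymbol x\in\mathbb{R}^{10}$ is augmented to $\boldsymbol \chi := (\boldsymbol x, z_{\rm d}, z_{\rm q})\in\mathbb{R}^{12}$.

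\textbf{Step 1 (local existence and uniqueness).} The DADS-BS nominal law \eqref{dads_d1}--\eqref{dads_qend} is built from products of polynomials in $\boldsymbol x$, the exponentials $e^{z_\ell}$, and the globally Lipschitz maps $\operatorname{sat}_{\overline Q}$, $\operatorname{sat}_{\overline P}$. The adaptation laws $\dot z_\ell=\Gamma_\ell e^{-z_\ell}\max\{W_\ell-\varepsilon,0\}$ are also locally Lipschitz. Hence $\boldsymbol v_{\rm t}^{\rm n}=\boldsymbol k(\boldsymbol\chi)$ is locally Lipschitz in $\boldsymbol\chi$.

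The filtered control \eqref{sfty_ip1} can be written as $\boldsymbol v_{\rm t}=\boldsymbol v_{\rm t}^{\rm n}-\frac{L_{\rm f}}{2\omega_{\rm b}}\frac{\max\{-\eta,0\}}{\|\boldsymbol i_{\rm t}\|^2}\boldsymbol i_{\rm t}$. This expression is locally Lipschitz away from $\boldsymbol i_{\rm t}=0$. Near $\boldsymbol i_{\rm t}=0$ we have $\eta\to cI_{\max}^2>0$, so the filter is inactive on a neighborhood of $\{\boldsymbol i_{\rm t}=0\}$. Therefore $\boldsymbol v_{\rm t}$ is locally Lipschitz on all of $\mathbb{R}^{12}$.

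The vector field is measurable in $t$ through $v_{\rm g}\in L^\infty$, so Carathéodory theory gives a unique absolutely continuous maximal solution on some interval $[0,t_{\max})$.

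\textbf{Step 2 (bounds on $\boldsymbol x$).} On $[0,t_{\max})$, I will reuse the proof of Theorem~\ref{thm4}. That argument does not use the particular form of $\boldsymbol k$, only the CBF inequality $\dot h\ge -ch$, so it carries over even though $\boldsymbol k$ now depends on $z_\ell$ as well. It gives $\|\boldsymbol i_{\rm t}(t)\|\le I_{\max}$ and $\|\boldsymbol x'(t)\|\le B_{2,\boldsymbol\Omega}$, with the bound independent of $t_{\max}$. The reason is that $\boldsymbol i_{\rm t}$ enters the remaining dynamics as a bounded input. The $(v_{\rm c},i_{\rm g})$ subsystem is a passive RLC network with damping $R$ driven by bounded $\boldsymbol i_{\rm t}$ and $v_{\rm g}$. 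The $q$-filter is stable because it is driven by a saturated input. The $p$-filter's input $p$ is bounded once $v_{\rm c},i_{\rm g}$ are bounded. Consequently $\boldsymbol x$ stays in a compact set on $[0,t_{\max})$.

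\textbf{Step 3 (the adaptive gains; main obstacle).} This is the expected crux, since no decrease of $W_\ell$ is available while the filter is active. The key observation is that $\dot z_\ell\ge0$, so $z_\ell(t)\ge z_\ell(0)$. Moreover, by \eqref{itd_r}, \eqref{itq_r}, \eqref{vcd_star} and the bounds of Step~2, $W_\ell$ is bounded by some constant $\overline W$ on $[0,t_{\max})$. This gives
\[
\frac{d}{dt}e^{z_\ell}=\Gamma_\ell\max\{W_\ell-\varepsilon,0\}\le \Gamma_\ell \overline W,
\]
and hence $e^{z_\ell(t)}\le e^{z_\ell(0)}+\Gamma_\ell\overline W t$. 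So $z_\ell$ grows at most logarithmically and cannot escape to infinity in finite time.

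\textbf{Conclusion.} Since $\boldsymbol\chi$ remains bounded on every finite interval, the standard continuation argument forces $t_{\max}=+\infty$. This yields the unique absolutely continuous solution for all $t\ge0$. Uniform-in-time boundedness of $z_\ell$ is not claimed here; it is deferred to Theorem~\ref{thm6} under the activation-pattern assumption.
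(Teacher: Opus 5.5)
Your proposal is correct and follows the paper's proof essentially step for step: local well-posedness from local Lipschitzness of the filtered vector field (the filter is inactive near $\boldsymbol i_{\rm t}=0$), reuse of the Theorem~\ref{thm4} bounds on $\boldsymbol x$ over $[0,t_{\max})$, and the observation that $\frac{d}{dt}e^{z_\ell}=\Gamma_\ell\max\{W_\ell-\varepsilon,0\}$ is bounded while $z_\ell$ is non-decreasing, so $z_\ell$ cannot escape in finite time and $t_{\max}=+\infty$. One small slip is that this proposition allows $\overline Q=+\infty$, so the $q$-filter need not be driven by a saturated input; however, $q$ is bounded once $(\boldsymbol v_{\rm c},\boldsymbol i_{\rm g})$ is, exactly as for $p$ (which is how the proof of Theorem~\ref{thm4} handles it), so the argument is unaffected.
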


The proof for Proposition \ref{propDADS} is provided in Section \ref{pfprp1}.

To show explicitly that the adaptive gains $z_{\rm d}$ and $z_{\rm q}$ are bounded under the safety-filtered DADS-BS control, we analyze the activation pattern of the safety filter.

For a given solution $t \mapsto (\boldsymbol{x}(t),z_{\rm d}(t),z_{\rm q}(t))$ of the safety-filtered DADS-BS closed-loop system
\eqref{inv1}-\eqref{dads_qend},\eqref{sfty_ip2},\eqref{sfty_ip1} define
\begin{align}
\eta(t):=\eta\big(\boldsymbol x(t),\boldsymbol{v}_{\rm t}^{\rm n}(t)\big).
\end{align}
Since $\boldsymbol x(\cdot)$ and $\boldsymbol v_{\rm t}^{\rm n}(\cdot)$ are continuous and the map
$(\boldsymbol x,\boldsymbol v_{\rm t}^{\rm n})\mapsto \eta(\boldsymbol x,\boldsymbol v_{\rm t}^{\rm n})$ is locally Lipschitz,
it follows that $\eta(\cdot)$ is continuous on $\mathbb{R}_{+}$.

We define the ON set
\begin{align*}
    \mathcal I_{\rm on}:= \{ t \ge 0 \vert \eta(t) < 0 \},
\end{align*}
that is, the set of times at which the safety filter is active and the
implemented control differs from the nominal one. If $\eta(0)<0$, then there exists $\sigma_0>0$ such that
$[0,\sigma_0)\subset \mathcal I_{\rm on}$.  
After this possible initial ON interval, all remaining ON episodes are ordinary
open intervals, because $\eta(\cdot)$ is continuous.  Thus $\mathcal I_{\rm on}$ can be decomposed as
\begin{align}\label{act_intvl}
    \mathcal I_{\rm on}
    = \bigl([0,\sigma_0) \;\text{if }\eta(0)<0\bigr)
      \;\cup\;\bigcup_{\kappa\in\mathcal J}(\tau_\kappa,\sigma_\kappa),
\end{align}
where $\mathcal J \subseteq \mathbb N$ (possibly empty) and each
interval $(\tau_\kappa,\sigma_\kappa)$ satisfies $0\!<\!\tau_\kappa\!<\!\sigma_\kappa\!\leq\!+\infty$.
Each interval in \eqref{act_intvl} is interpreted as one ON episode of the
safety filter: at $t=\tau_\kappa$ the filter turns ON (\textit{i.e.}, $\eta$ becomes
negative), for $t\in(\tau_\kappa,\sigma_\kappa)$ it remains ON, and at $t=\sigma_\kappa$ it
turns OFF when $\eta(t)$ returns to $0$.

We define the number of ON episodes and the total ON time by
\begin{align}
    N_{\eta} 
    &:= 
    \begin{cases}
        1 + \mathrm{card}(\mathcal{J}), & \text{if }\eta(0)<0,\\
        \mathrm{card}(\mathcal{J}),     & \text{if }\eta(0)\ge 0,
    \end{cases}
    \label{Neta_def} \\
    T_{\eta} 
    &:=
    \begin{cases}
        \sigma_0 + \displaystyle\sum_{\kappa\in\mathcal{J}}(\sigma_\kappa-\tau_\kappa), 
            & \text{if }\eta(0)<0,\\
        \displaystyle\sum_{\kappa\in\mathcal{J}}(\sigma_\kappa-\tau_\kappa),
            & \text{if }\eta(0)\ge 0,
    \end{cases}
    \label{Teta_def}
\end{align}
so that $N_\eta$ counts all ON episodes and $T_\eta$ gives their total length. Here, $\rm card(\cdot)$ denotes the cardinality of a set. 

\begin{asm}\label{asm:eta_events}
For every initial condition $\boldsymbol x(0)\in\mathcal C$ and $z_{\rm d}(0),z_{\rm q}(0)\in\mathbb R$,
and for every bounded grid voltage $v_{\rm g,d},v_{\rm g,q}\in L^\infty(\mathbb R_+;\mathbb R)$,
consider the corresponding closed-loop solution of \eqref{inv1}-\eqref{dads_qend},\eqref{sfty_ip1},\eqref{sfty_ip2} and define $N_\eta$ and $T_\eta$ by
\eqref{Neta_def},\eqref{Teta_def}.
Then, for each fixed $\boldsymbol{\Omega}\in\mathbb{R}^6_{>0}$ defined in \eqref{theta_def}, there exist $N_{\boldsymbol{\Omega}}\in\mathbb N$ and $T_{\boldsymbol{\Omega}}\in(0,\infty)$ possibly depending on $\boldsymbol{x}(0),z_{\rm d}(0),z_{\rm q}(0),\Vert v_{\rm g,d}\Vert_{\infty},\Vert v_{\rm g,q}\Vert_{\infty}$, such that 
\begin{align*}
N_\eta \le N_{\boldsymbol \Omega},\qquad T_\eta \le T_{\boldsymbol \Omega}.
\end{align*}
\end{asm}

Assumption \ref{asm:eta_events} states that, along any closed-loop trajectory
with $\boldsymbol x(0)\in\mathcal C,z_{\rm d}(0),z_{\rm q}(0) \in\mathbb{R}$ and $v_{\rm g,d},v_{\rm g,q}\in L^{\infty}(\mathbb{R_+};\mathbb{R})$, the safety filter is activated only finitely many times and that the total duration of safety–filter activation is finite.

\begin{rem}\rm \label{rmksfz}
Assumption \ref{asm:eta_events} imposes two constraints on the set $\mathcal I_{\rm on}$ of safety–filter ON times.
First, $N_\eta\leq N_{\boldsymbol \Omega}<\infty$ implies that $\mathcal I_{\rm on}$ consists of at most $N_{\boldsymbol \Omega}$
disjoint ON intervals: possibly an initial interval $[0,\sigma_0)$ if
$\eta(0)<0$, followed by the intervals $(\tau_\kappa,\sigma_\kappa)$,
$\kappa\in\mathcal J$. Since $T_\eta\le T_{\boldsymbol \Omega}<\infty$, each of these intervals has
finite length, and therefore every right–endpoint $\sigma_0$ (if present) and
$\sigma_\kappa$ satisfies $
\sigma_0<+\infty, \qquad \sigma_\kappa<+\infty, \quad \kappa\in\mathcal J.
$

Hence, the set of all right–endpoints,
\begin{align*}
\Sigma:=\{\sigma_0 \;\text{if }\eta(0)<0\}\cup\{\sigma_\kappa \mid \kappa\in\mathcal J\},
\end{align*}
is finite and possibly empty, and we define
\begin{align*}
\sigma_{\max}:=\sup(\Sigma),
\end{align*}
with the convention $\sup\varnothing := 0$, so that $\sigma_{\max}<+\infty$. For all $t\ge\sigma_{\max}$ we therefore have $t\notin\mathcal I_{\rm on}$, or equivalently,
\begin{align*}
t\ge \sigma_{\max} \quad\Longrightarrow\quad \eta(t)\ge 0.
\end{align*}
Thus, under Assumption \ref{asm:eta_events}, the safety filter can be activated
only finitely many times, and there exists a finite time after which it remains
permanently OFF. In particular, if the safety filter is never activated (\textit{i.e.}, $\mathcal I_{\rm on}=\varnothing$), then $\sigma_{\max}=0$.
\end{rem}

Below we establish the main results under the safety-filtered DADS-BS control. 

\begin{thm}[\it Safety with DADS-BS nominal controller]\label{thm6}
    Consider the droop-based GFM inverter described by \eqref{inv1}-\eqref{xstte}. Fix design parameters $K_{\rm VC},K_{\rm CC},\mu_{\rm d},\mu_{\rm q},\Gamma_{\rm d},\Gamma_{\rm q},\varepsilon,\omega_{\rm pc},\omega_{\rm qc}>0$, $\xi_{\rm p},\xi_{\rm q}>1$, and $\overline P, \overline Q \in (0,+\infty]$. Let the setpoints $\omega_0,V_0,P_{0},Q_{0}\in\mathbb{R}$, the droop coefficients $K_{\rm P},K_{\rm Q}\in\mathbb{R},$ and the maximum allowable terminal current $I_{\max}>0$ be given. Let the current-limiting barrier function  $h(\boldsymbol x): \mathbb{R}^{10}\rightarrow \mathbb{R}$ be given by \eqref{brfntn}. Suppose the terminal-voltage control $\boldsymbol{v}_{\rm t}=[v_{\rm t,d},v_{\rm t,q}]^\top$ is implemented according to \eqref{sfty_ip2},\eqref{sfty_ip1} where the nominal control $\boldsymbol{v}_{\rm t}^{\rm n}=[v_{\rm t,d}^{\rm n}, v_{\rm t,q}^{\rm n}]^\top$ is the DADS-BS defined in \eqref{dads_d1}-\eqref{dads_qend}. Assume the grid voltage components satisfy $v_{\rm g,d},v_{\rm g,q}\in L^\infty(\mathbb{R}_{+};\mathbb{R})$. For each fixed $\boldsymbol{\Omega}\in\mathbb{R}^6_{>0}$ defined in \eqref{theta_def} and any initial condition $\boldsymbol{x}(0) \in\mathcal{C}$ (with $\mathcal C$ defined in \eqref{sfstt})  and $z_{\rm d}(0),z_{\rm q}(0) \in\mathbb{R}$, let
$t\mapsto\big(\boldsymbol{x}(t),z_{\rm d}(t),z_{\rm q}(t)\big)$ denote the unique absolutely continuous closed-loop solution on $[0,\infty)$. Then the solution satisfies \eqref{crrntsft} and \eqref{xdshbnd} for all $t\geq 0$. Moreover, subject to Assumption \ref{asm:eta_events}, there exists a function   $B_{3,\boldsymbol \Omega}: \mathbb{R}^{10}\times \mathbb{R}\times \mathbb{R} \times \mathbb{R}_{+}\times \mathbb{R}_{+}\times \mathbb{N}\times \mathbb{R}_+\to\mathbb{R}_{+}$, such that for each $\ell\in \{\rm d, \rm q\}$, the adaptive gains satisfy
    \begin{align}\label{zqdbndsf}
\begin{split}
z_{\ell}(0)
\!\le \!z_{\ell}(t)
\!\leq\! \lim_{s\rightarrow\infty} z_{\ell}(s) 
\leq B_{3,\boldsymbol \Omega}\big(&\boldsymbol{x}(0),\!z_{\rm d}(0),\! z_{\rm q}(0),\Vert v_{\rm g,d}\Vert_{\infty},\\&\quad\Vert v_{\rm g,q}\Vert_{\infty},\!N_{\boldsymbol{\Omega}},\!T_{\boldsymbol{\Omega}}\big),
\end{split}
\end{align}
for all $t\geq 0$. 
\end{thm}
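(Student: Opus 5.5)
The first two claims, \eqref{crrntsft} and \eqref{xdshbnd}, are immediate. Proposition~\ref{propDADS} gives existence and uniqueness of the absolutely continuous solution on $[0,\infty)$. The DADS-BS law \eqref{dads_d1}--\eqref{dads_qend} is locally Lipschitz in $(\boldsymbol x,z_{\rm d},z_{\rm q})$. Along the given trajectory, the $z$-dependence can therefore be absorbed: the safety-filter argument of Theorem~\ref{thm4} uses only (i) the CBF inequality $\dot h\ge -ch$ and (ii) the structure of the $\boldsymbol x'$-subsystem driven by a bounded $\boldsymbol i_{\rm t}$. Neither uses the specific form of $\boldsymbol k$. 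Repeating that argument, or invoking Theorem~\ref{thm4} with the time-varying nominal control, gives $\|\boldsymbol i_{\rm t}(t)\|\le I_{\max}$ and the bound $B_{2,\boldsymbol\Omega}$ on $\boldsymbol x'$. Monotonicity $z_\ell(0)\le z_\ell(t)$ and the existence of $\lim_{s\to\infty}z_\ell(s)$ in $\mathbb{R}\cup\{+\infty\}$ follow directly from $\dot z_\ell\ge 0$ in \eqref{z_ddot},\eqref{z_qdot}. What remains is finiteness of that limit, with an explicit bound.

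For finiteness I would split time at $\sigma_{\max}<\infty$ from Remark~\ref{rmksfz}.

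\emph{On $[0,\sigma_{\max}]$.} The state is bounded: $\boldsymbol i_{\rm t}$ by $I_{\max}$ and $\boldsymbol x'$ by $B_{2,\boldsymbol\Omega}$. The reference currents \eqref{itd_r},\eqref{itq_r} and $v^{\rm r}_{\rm c,d}$ are continuous functions of $\boldsymbol x'$, with $\omega$ and $q_2$ components of $\boldsymbol x'$. Hence $W_\ell(t)\le \overline W$, where $\overline W$ depends only on $(\boldsymbol x(0),\|v_{\rm g,\ell}\|_\infty)$. From $\frac{d}{dt}e^{z_\ell}=\Gamma_\ell\max\{W_\ell-\varepsilon,0\}\le \Gamma_\ell\overline W$ I get $e^{z_\ell(\sigma_{\max})}\le e^{z_\ell(0)}+\Gamma_\ell\overline W\sigma_{\max}$. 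The issue is that $\sigma_{\max}$ is not bounded by $T_{\boldsymbol\Omega}$ alone, since OFF gaps between episodes may be long. So I would instead decompose: on ON intervals, whose total length is at most $T_{\boldsymbol\Omega}$, use this crude growth bound. On the OFF intervals, of which there are at most $N_{\boldsymbol\Omega}+1$ (the last being $[\sigma_{\max},\infty)$), the nominal DADS-BS is applied exactly, so the Lyapunov estimates from the proof of Theorem~\ref{thm1} hold on each such interval, restarted at its left endpoint.

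\emph{On OFF intervals.} This is the step I expect to be the main obstacle. On an OFF interval $[a,b)$, the analysis of Theorem~\ref{thm1} gives $\dot W_\ell\le -2kW_\ell+\frac{\mu_\ell(1+R^2+\|v_{\rm g,\ell}\|_\infty^2)}{2L^2(1+e^{z_\ell})}$ (the form behind \eqref{re52}). Combining this with the adaptation law exactly as in the proof of Theorem~\ref{thm1} yields
\[
e^{z_\ell(b)}\le \max\Big\{\tfrac{\mu_\ell(1+R^2+\|v_{\rm g,\ell}\|_\infty^2)}{2kL^2\varepsilon}-1,\,e^{z_\ell(a)}\Big\}+\tfrac{\Gamma_\ell}{2k}\max\Big\{W_\ell(a)+\tfrac{\mu_\ell(1+R^2+\|v_{\rm g,\ell}\|_\infty^2)}{2kL^2(1+e^{z_\ell(a)})}-\varepsilon,0\Big\}.
\]
This is the explicit bound stated after Theorem~\ref{thm1}, applied with initial time $a$. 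The needed input is $W_\ell(a)\le\overline W$, which holds uniformly by the $\boldsymbol x'$ and current bounds above. Hence each OFF interval increases $e^{z_\ell}$ by at most an affine map of its value at the left endpoint, with coefficients depending only on $(\boldsymbol x(0),\|v_{\rm g,\ell}\|_\infty)$ and $\boldsymbol\Omega$. The delicate point is checking that the Theorem~\ref{thm1} Lyapunov inequality for $W_\ell$ only needs the control to be nominal on $[a,b)$, and not global-in-time properties such as boundedness of $z$ derived from $t=0$. I would verify this by re-deriving $\dot W_\ell$ with the nominal control substituted.

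\emph{Chaining.} Iterate alternately: at most $N_{\boldsymbol\Omega}$ ON episodes, each increasing $e^{z_\ell}$ additively by at most $\Gamma_\ell\overline W$ times its length, with lengths summing to at most $T_{\boldsymbol\Omega}$; and at most $N_{\boldsymbol\Omega}+1$ OFF intervals, each applying the max-plus-affine map above. Composing at most $2N_{\boldsymbol\Omega}+1$ such monotone maps gives a finite bound on $\sup_t e^{z_\ell(t)}$. Taking logarithms defines $B_{3,\boldsymbol\Omega}(\boldsymbol x(0),z_{\rm d}(0),z_{\rm q}(0),\|v_{\rm g,d}\|_\infty,\|v_{\rm g,q}\|_\infty,N_{\boldsymbol\Omega},T_{\boldsymbol\Omega})$. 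Since $z_\ell$ is nondecreasing, this bounds $\lim_{s\to\infty}z_\ell(s)$, which proves \eqref{zqdbndsf}.
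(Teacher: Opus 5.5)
Your proposal is correct and follows essentially the same route as the paper. The shared steps are: a uniform bound $\overline W$ on $W_\ell$ from the safety-filtered state bounds; crude growth of $z_\ell$ over the ON set, controlled by $T_{\boldsymbol\Omega}$; Lemma~\ref{krflslm} restarted at the left endpoint of each of the at most $N_{\boldsymbol\Omega}+1$ OFF intervals; and chaining of finitely many monotone maps.

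The differences are only in bookkeeping. You accumulate ON growth additively in $e^{z_\ell}$, episode by episode. This collapses to an explicit bound on $e^{z_\ell}$ that is affine in $N_{\boldsymbol\Omega}$ and $T_{\boldsymbol\Omega}$, which is cleaner than the paper's iterate $\Phi_{\rm d}\circ\Psi_{\rm d}^{(N_{\boldsymbol\Omega}+1)}$ with a global $\Delta z_{\rm d}^{\rm on}$ added at every step. One slip: the dissipation inequality should read $\dot W_\ell\le -2kW_\ell+\frac{\mu_\ell(1+R^2+\|v_{\rm g,\ell}\|_\infty^2)}{L^2(1+e^{z_\ell})}$, without your extra factor $\frac12$; this is in fact the constant your displayed bound already uses.
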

The proof for Theorem \ref{thm6} is provided in Section \ref{pfthm5}.

\begin{rem}\rm
Unlike in Theorems \ref{thm1} and \ref{thm2}, in Theorems \ref{thm4} and \ref{thm6}  we allow both $\overline P,\overline Q\in(0,+\infty]$. This is because the CBF--QP safety filter enforces the hard constraint $\|\boldsymbol i_{\rm t}\|\le I_{\max}$ for all time, which prevents terminal-current blow-up regardless of whether the power-filter inputs are saturated. As a consequence, the remaining electrical states are bounded under the safety-filtered control, and saturation of the reactive- and active-power inputs in \eqref{Q_filter},\eqref{P_filter} is not required for the boundedness conclusions of Theorems \ref{thm4} and \ref{thm6}.
\end{rem}

\section{Numerical Results}
\label{secSim}

\begin{figure}
    \centering
    \includegraphics[width=1\linewidth]{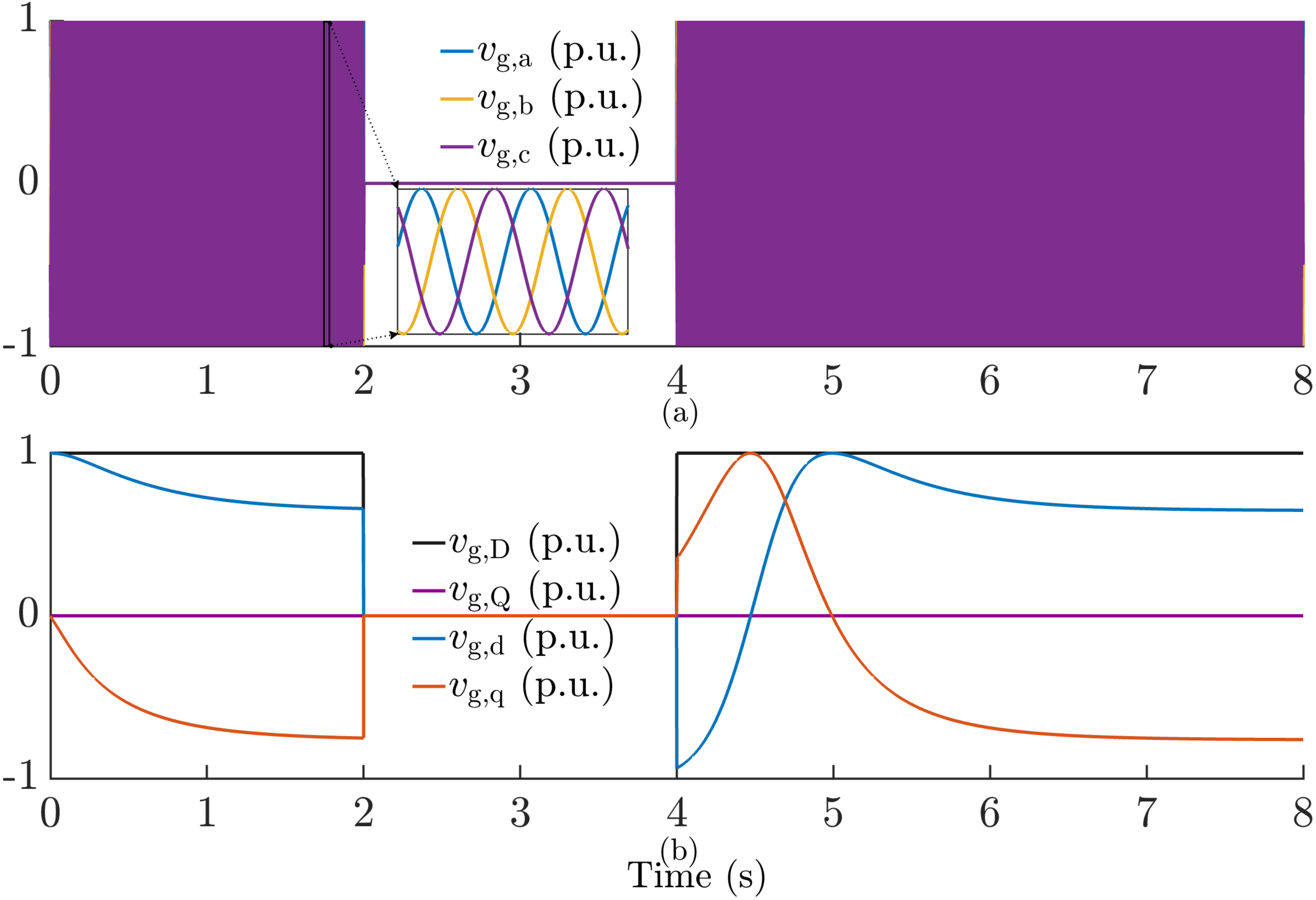}
 \caption{Grid voltage. A three-phase-to-ground fault occurs from $t = 2\,\rm{s}$ to $t = 4\,\rm{s}$. (a) Three-phase voltages. (b) Global $\rm{DQ}$ and local $\rm{dq}$  components.}
    \label{grdvol}
\end{figure}

\begin{figure}
    \centering
    \includegraphics[width=1\linewidth]{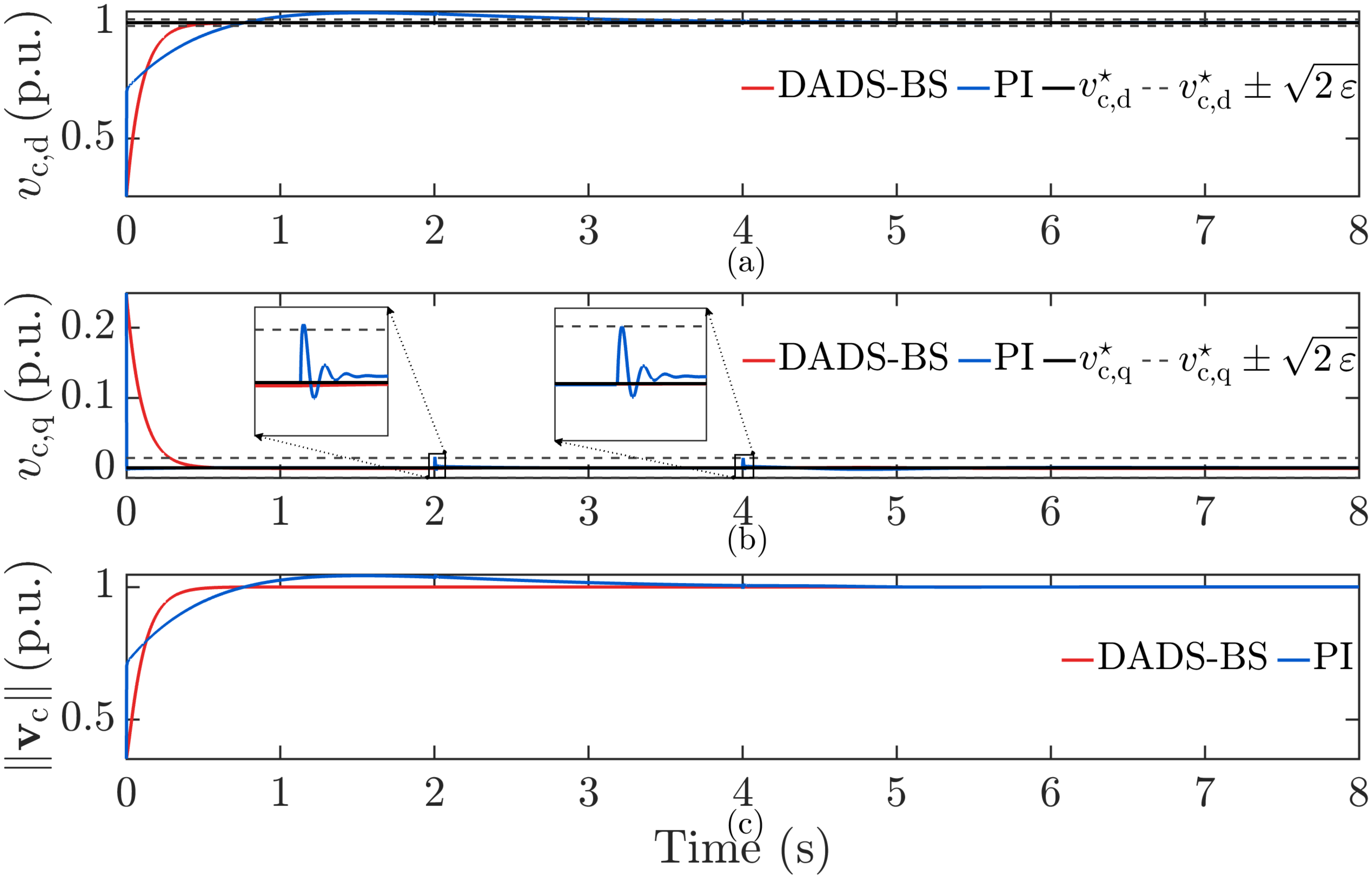}
\caption{PCC voltage behavior \textbf{without} current limiting. (a) $\rm d$ component. (b) $\rm q$ component. (c) $\boldsymbol{v}_{\rm c} = \sqrt{v_{\rm c,d}^2+v_{\rm c,q}^2}$.}
    \label{vcdq_nmnl}
\end{figure}

\begin{figure}
    \centering
    \includegraphics[width=1\linewidth]{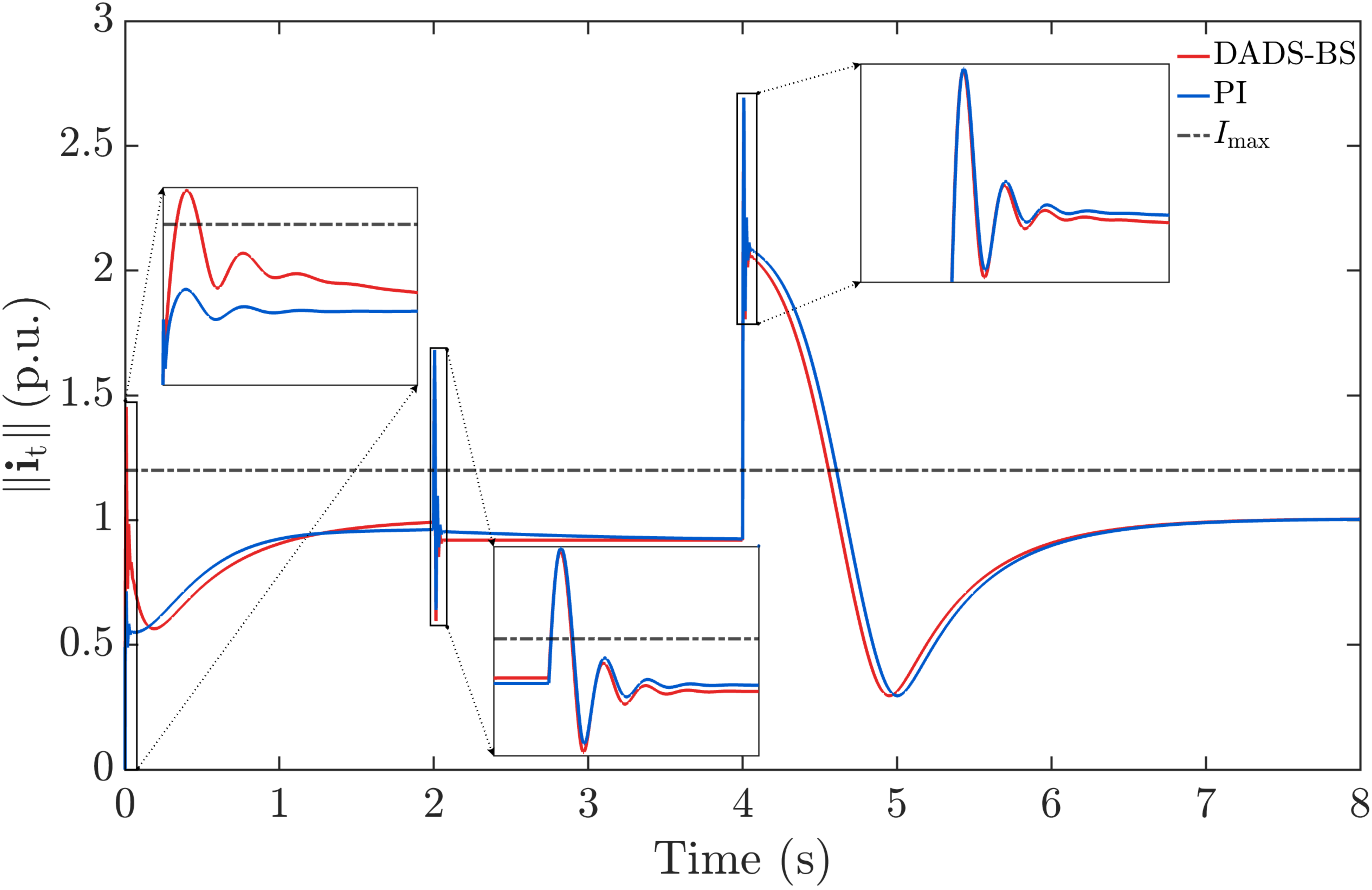}
\caption{Terminal current behavior \textbf{without} current limiting.}
    \label{termcur_nmnl}
\end{figure}

\begin{figure}
    \centering
    \includegraphics[width=1\linewidth]{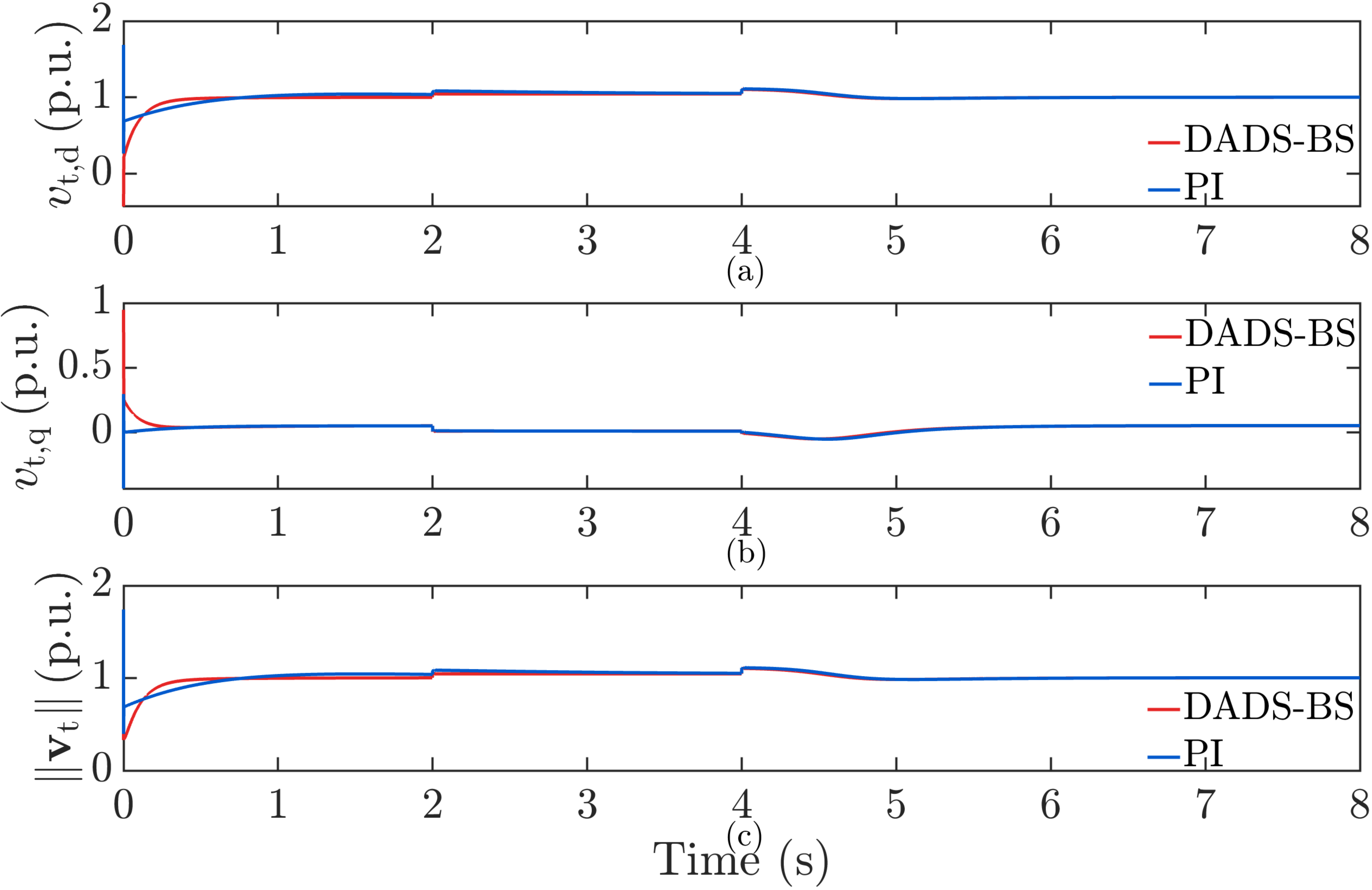}
\caption{Terminal voltage (control input) \textbf{without} current limiting. (a) $\rm d$ component. (b) $\rm q$ component. (c) $\boldsymbol{v}_{\rm t} = \sqrt{v_{\rm t,d}^2+v_{\rm t,q}^2}.$}
    \label{ip_nmnl}
\end{figure}

\begin{figure}
    \centering
    \includegraphics[width=1\linewidth]{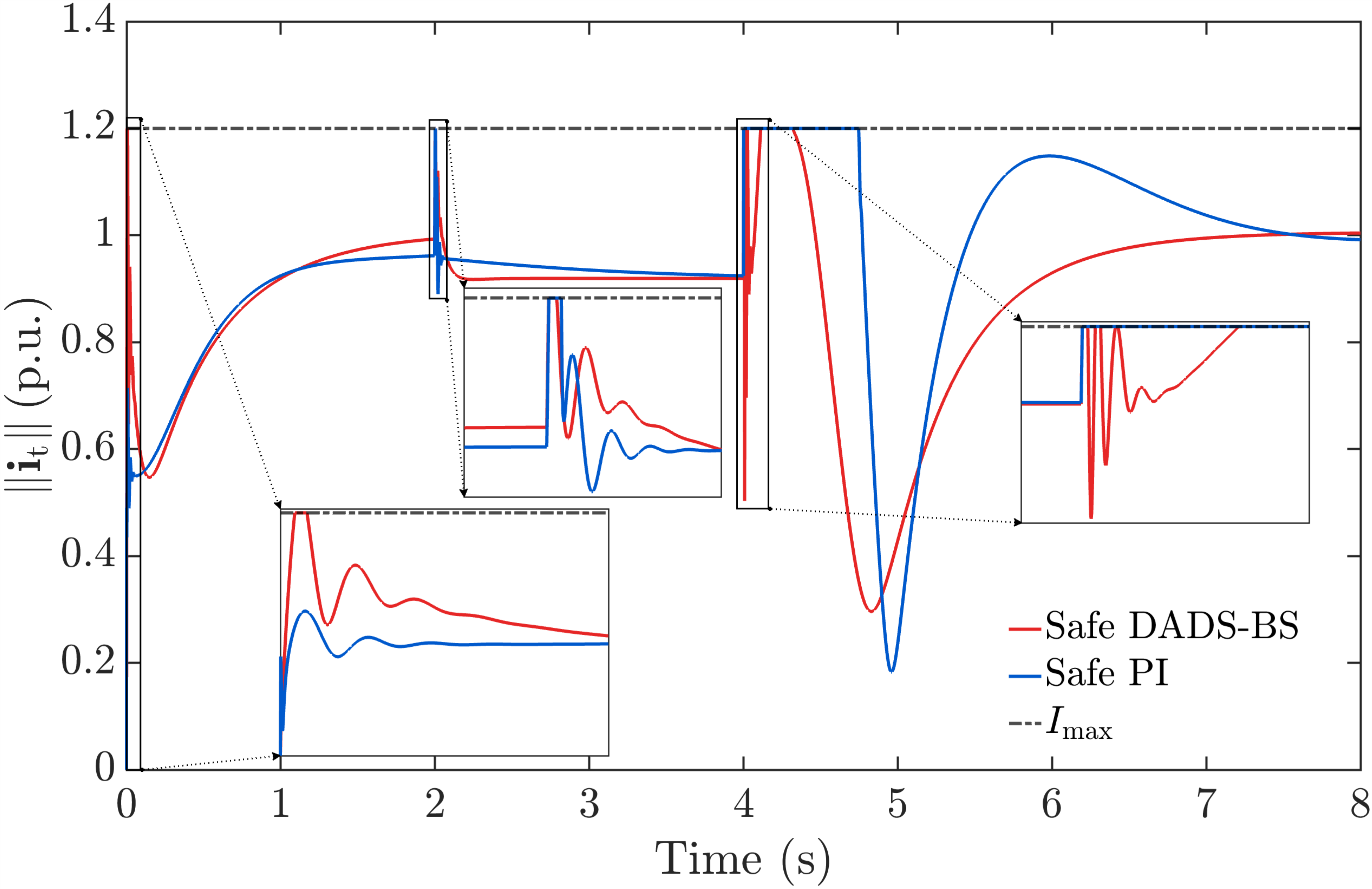}
\caption{Terminal current behavior \textbf{with} current limiting.}
    \label{termcur_sf}
\end{figure}

\begin{figure}
    \centering
    \includegraphics[width=1\linewidth]{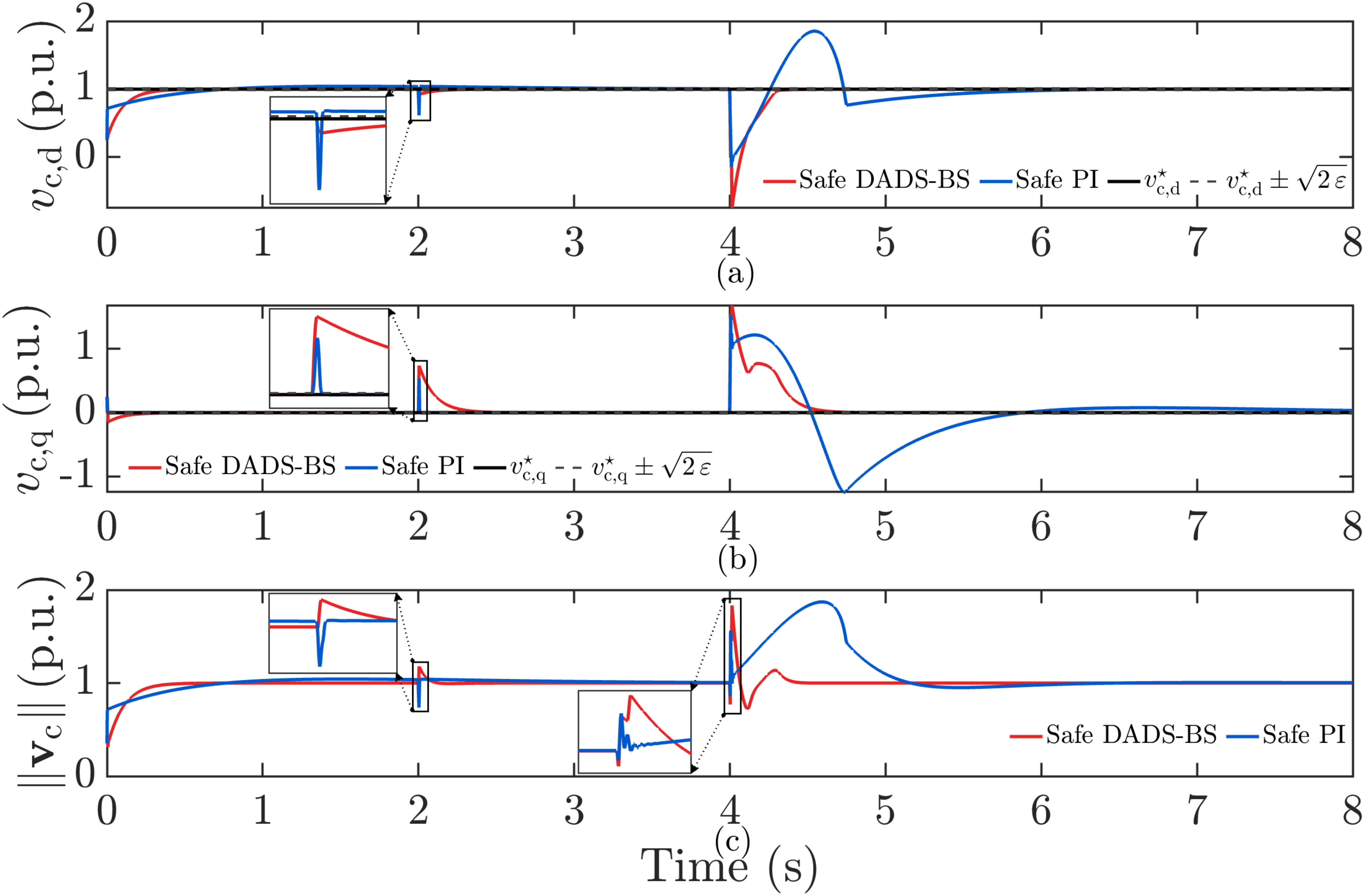}
\caption{PCC voltage behavior \textbf{with} current limiting. (a) $\rm d$ component. (b) $\rm q$ component. (c) $\boldsymbol{v}_{\rm c} = \sqrt{v_{\rm c,d}^2+v_{\rm c,q}^2}$.}
    \label{vcdq_sf}
\end{figure}

\begin{figure}
    \centering
    \includegraphics[width=1\linewidth]{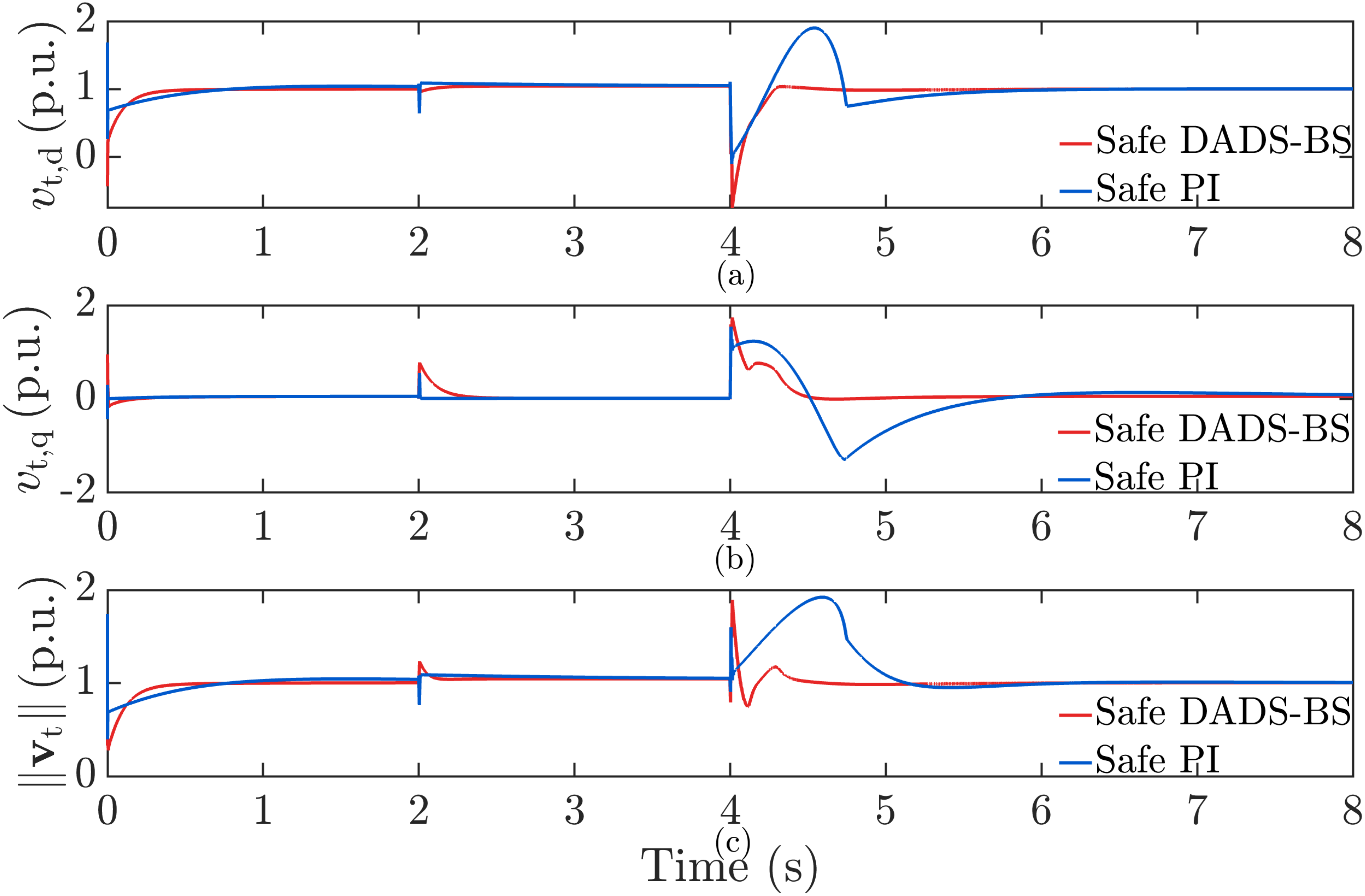}
\caption{Terminal voltage (control input) \textbf{with} current limiting. (a) $\rm d$ component. (b) $\rm q$ component. (c) $\boldsymbol{v}_{\rm t} = \sqrt{v_{\rm t,d}^2+v_{\rm t,q}^2}$.}
    \label{ip_sf}
\end{figure}

This section evaluates the proposed DADS-BS GFM controller against a conventional cascaded PI voltage--current control architecture \cite{chatterjee2025effects}, both \textbf{without} and \textbf{with} the proposed current-limiting mechanism. Specifically, we report two comparisons: (i) DADS-BS versus PI with the safety filter disabled (\textit{i.e.}, DADS-BS and PI are applied for all times without considering safety), and (ii) Safe DADS-BS versus Safe PI with the safety filter enabled, that is, using DADS-BS and PI as nominal controllers within the safety-filtered control law in \eqref{sfty_ip2},\eqref{sfty_ip1}.
The PI controller structure and tuned following \cite{chatterjee2025effects}.

The inverter and network parameters are selected in p.u. as $C_{\rm f} = 0.30,\,L_{\rm f} = 0.05,\,R_{\rm f} = 7.2\times 10^{-3},\, R = 0.2,\, L = 0.8$, and the base electrical frequency as $\omega_{\rm b} = 120\pi$ $\rm{rad}/s$. For the DADS-BS controller, the gains and adaptive-law parameters are $K_{\rm VC}=K_{\rm CC}=10 \, \rm{s^{-1}},\,\Gamma_{\rm d}=\Gamma_{\rm q}=10^6 \, \rm{s^{-1}},\,\mu_{\rm d}=\mu_{\rm q}=1\, \rm{s^{-1}},\,\varepsilon = 10^{-4}$. The droop parameters and setpoints are chosen in p.u. as $K_{\rm P} = 5\times 10^{-3},\,K_{\rm Q}=1\times 10^{-4},\,P_{0} = 1,\,Q_{0} = 0.5,\, \omega_{0} =1,\,V_{0} =1$, and the power filter parameters are chosen as $\omega_{\rm pc} = 332.8 \,\rm{rad}/s,\,\omega_{\rm qc} = 732.8 \,\rm{rad}/s,\, \xi_{\rm p} = 1.2,\,\xi_{\rm q} = 1.2.$ The simulations use the same initial conditions across all cases. For current limiting, the bound is set to $I_{\max}=1.2$ p.u., and the safety-filter gain is chosen as $c= 1\times 10^9 \, \rm{s^{-1}}$. All simulations are performed using a stiff solver (MATLAB \texttt{ode15s}) with tolerances $\texttt{RelTol}=10^{-7}$, $\texttt{AbsTol}=10^{-9}$, and $\texttt{MaxStep}=10^{-4}$.

Figure \ref{grdvol} shows the grid voltage, with Fig. \ref{grdvol}(a) depicting the three-phase $\rm abc$ voltages and Fig. \ref{grdvol}(b) showing the grid voltage in the global $\rm DQ$ frame. During $t\in[2,4]$s, a three-phase-to-ground fault is applied at the grid.

Figures \ref{vcdq_nmnl}–\ref{ip_nmnl} present the results \textbf{without} current limiting for both DADS-BS and PI control. As shown in Fig. \ref{vcdq_nmnl}, variations in the grid voltage do not affect DADS-BS due to its disturbance-suppression property, and convergence within the $\sqrt{2\varepsilon}$ bands is observed, as guaranteed by the theory. However, Fig. \ref{termcur_nmnl} shows that both DADS-BS and PI exceed the allowable terminal current limit $I_{\rm max}=1.2$, primarily triggered by the fault, highlighting the necessity of explicit current-limiting mechanisms. Figure \ref{ip_nmnl} shows the terminal voltages (control inputs) under both DADS-BS and PI.

Figures \ref{termcur_sf}–\ref{ip_sf} show the results \textbf{with} current limiting for both Safe DADS-BS and Safe PI. As seen in Fig. \ref{termcur_sf}, both controllers effectively enforce the current limit without violation. However, as a consequence, Fig. \ref{vcdq_sf} shows that both Safe DADS-BS and Safe PI temporarily deviate from their GFM objectives. This behavior reflects an inherent tradeoff between strict enforcement of hard current constraints and voltage regulation. When the safety constraint becomes active, the controller prioritizes current limitation, thereby temporarily relaxing voltage regulation. This deviation arises because the admissible control set is restricted by the CBF constraint during current-limiting events. Notably, Safe DADS-BS recovers rapidly during and after the fault and promptly resumes GFM behavior, demonstrating superior transient performance during and following current-limiting events. Figure \ref{ip_sf} shows the terminal voltages under both Safe DADS-BS and Safe PI.

\section{Proofs}
\label{secProof}
\subsection{Auxiliary Lemmas}
\begin{lem}\label{lemfilt}
Let $b>0$, $\xi>1$, and let $u\in L^\infty(\mathbb{R}_{+};\mathbb{R})$ with $\|u\|_\infty=M$. Consider the system
\begin{align}\label{eta_fltr1}
\dot\eta_{1}&=\eta_{2},\\ \label{eta_fltr2}
\dot\eta_{2}&=-2\xi b \eta_{2}(t)-b^2 \eta_{1}+b^2 u, 
\end{align}
for $t\ge 0$, with initial condition $(\eta_{1}(0),\eta_{2}(0))=(\eta_{10},\eta_{20})\in\mathbb{R}^2$.  
Then, there exists a unique  absolutely continuous solution $(\eta_{1},\eta_{2}):\mathbb{R}_{+} \to \mathbb{R}^2$ to \eqref{eta_fltr1},\eqref{eta_fltr2}, and it satisfies
\begin{align}\label{eta1_bounds_final}
|\eta_{1}(t)|
&\le \frac{\xi|\eta_{10}| + \frac{|\eta_{20}|}{b}}{\sqrt{\xi^2 - 1}} + M, \\
\label{eta2_bounds_final}
|\eta_{2}(t)|
&\le \frac{b|\eta_{10}| + \xi|\eta_{20}|}{\sqrt{\xi^2 - 1}} + \frac{b M}{\sqrt{\xi^2 - 1}},
\end{align}
for all $t \ge 0$.
\end{lem}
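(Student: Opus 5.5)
The plan is to solve the linear filter \eqref{eta_fltr1},\eqref{eta_fltr2} explicitly by variation of constants and then bound each term.

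\emph{Existence and uniqueness.} The system is linear time-invariant, driven by $b^2u$ with $u\in L^\infty(\mathbb{R}_+;\mathbb{R})$, hence locally integrable. Carathéodory theory, or simply the explicit variation-of-constants formula, therefore gives a unique absolutely continuous solution on $\mathbb{R}_+$.

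\emph{Eigenvalues.} The characteristic polynomial is $s^2+2\xi b s+b^2$. Since $\xi>1$, it has two distinct negative real roots
\begin{align*}
\lambda_\pm=-b\big(\xi\mp\sqrt{\xi^2-1}\big),
\end{align*}
with $\lambda_+-\lambda_-=2b\sqrt{\xi^2-1}=:\Delta$, $\lambda_+\lambda_-=b^2$ and $|\lambda_+|+|\lambda_-|=2b\xi$.

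\emph{Homogeneous part.} Writing the fundamental solution in terms of $e^{\lambda_\pm t}$ gives
\begin{align*}
\eta_1^{h}(t)&=\frac{\lambda_+e^{\lambda_-t}-\lambda_-e^{\lambda_+t}}{\Delta}\eta_{10}+\frac{e^{\lambda_+t}-e^{\lambda_-t}}{\Delta}\eta_{20},\\
\eta_2^{h}(t)&=\frac{\lambda_+\lambda_-(e^{\lambda_-t}-e^{\lambda_+t})}{\Delta}\eta_{10}+\frac{\lambda_+e^{\lambda_+t}-\lambda_-e^{\lambda_-t}}{\Delta}\eta_{20}.
\end{align*}
Each exponential lies in $(0,1]$ for $t\ge0$, and for the $\eta_{20}$-coefficient of $\eta_1^h$ the difference of exponentials has absolute value at most $1$. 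Using the triangle inequality together with the identities above, the coefficients are bounded as follows:
\begin{itemize}
\item the $\eta_{10}$-coefficient in $\eta_1^h$ by $\xi/\sqrt{\xi^2-1}$;
\item the $\eta_{20}$-coefficient in $\eta_1^h$ by $1/(2b\sqrt{\xi^2-1})\le 1/(b\sqrt{\xi^2-1})$;
\item the $\eta_{10}$-coefficient in $\eta_2^h$ by $b/(2\sqrt{\xi^2-1})\le b/\sqrt{\xi^2-1}$;
\item the $\eta_{20}$-coefficient in $\eta_2^h$ by $\xi/\sqrt{\xi^2-1}$.
\end{itemize}
These yield the initial-condition terms in \eqref{eta1_bounds_final},\eqref{eta2_bounds_final}.

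\emph{Forced part.} We have $\eta_1^{f}(t)=\int_0^t g(t-s)u(s)\,ds$ with kernel
\begin{align*}
g(\tau)=\frac{b^2\big(e^{\lambda_+\tau}-e^{\lambda_-\tau}\big)}{\Delta}.
\end{align*}
Since $\lambda_->\!\!\!\!\!/\;$ is not needed here; what matters is $\lambda_-<\lambda_+<0$, so $g\ge0$, and $\int_0^\infty g=b^2/(\lambda_+\lambda_-)=1$ (unit DC gain). Hence $|\eta_1^f|\le M$.

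For the second component, $\eta_2^f=\int_0^t g'(t-s)u(s)\,ds$, because $g(0)=0$. Here
\begin{align*}
g'(\tau)=\frac{b^2\big(\lambda_+e^{\lambda_+\tau}-\lambda_-e^{\lambda_-\tau}\big)}{\Delta}.
\end{align*}
Bounding the $L^1$ norm of $g'$ crudely by the triangle inequality gives $\|g'\|_{L^1}\le b^2\big(\int|\lambda_+|e^{\lambda_+\tau}d\tau+\int|\lambda_-|e^{\lambda_-\tau}d\tau\big)/\Delta=2b^2/\Delta=b/\sqrt{\xi^2-1}$. Thus $|\eta_2^f|\le bM/\sqrt{\xi^2-1}$. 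Adding the homogeneous and forced bounds gives both estimates.

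There is no real obstacle in this argument. The only care needed is the sign of $g$, which gives an exact $L^1$ norm of $1$ and hence the sharp constant $M$, and keeping the constants consistent with the stated (slightly loose) forms.
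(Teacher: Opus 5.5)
Your proof is correct and follows essentially the paper's argument. Both use variation of constants with the two distinct real eigenvalues $-b(\xi\pm\sqrt{\xi^2-1})$. The forced part of $\eta_1$ is bounded by $M$ because the kernel is nonnegative with unit integral (via $\lambda_+\lambda_-=b^2$), and the derivative kernel is bounded by the triangle inequality, giving $b/\sqrt{\xi^2-1}$ for $\eta_2$. The only cosmetic difference is that you bound the state-transition-matrix entries directly, whereas the paper bounds the modal coefficients through $|c_1|+|c_2|$ and $\gamma_1|c_1|+\gamma_2|c_2|$; both give the same constants. You should also delete the garbled fragment before ``what matters is $\lambda_-<\lambda_+<0$''.
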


\noindent \textbf{Proof.} We rewrite the system in vector form as

\begin{align*}
    \dot{\boldsymbol{\eta}}  = \boldsymbol{H} \boldsymbol{\eta} + \boldsymbol{B}u,
\end{align*}
where
\begin{align*}
    \boldsymbol{\eta} = \begin{bmatrix}
        \eta_1\\\eta_2
    \end{bmatrix},\quad \boldsymbol{H} = \begin{bmatrix}
        0 && 1\\
        -b^2&& -2\xi b
    \end{bmatrix},\quad \boldsymbol{B} = \begin{bmatrix}
        0\\b^2
    \end{bmatrix}.
\end{align*}
Since $b>0$ and $\xi>1$, matrix $\boldsymbol{H}$ has two distinct real eigenvalues $(-\gamma_1,-\gamma_2)$ with $\gamma_1>\gamma_2>0$ given by 
\begin{align*}
\gamma_1=b\big(\xi+\sqrt{\xi^2-1}\big),\qquad \gamma_2=b\big(\xi-\sqrt{\xi^2-1}\big).
\end{align*}
Standard eigen-decomposition yields
\begin{align*}
    \boldsymbol{H} = \frac{1}{\gamma_1-\gamma_2}\begin{bmatrix}1 & 1\\ -\gamma_2 & -\gamma_1\end{bmatrix} \begin{bmatrix}
        -\gamma_2 & 0\\
        0 & -\gamma_1
    \end{bmatrix} \begin{bmatrix}
\gamma_1 & 1\\ -\gamma_2 & -1
\end{bmatrix},
\end{align*}
and therefore
\begin{align*}
    e^{\boldsymbol{H}t} = \frac{1}{\gamma_1-\gamma_2}\begin{bmatrix}1 & 1\\ -\gamma_2 & -\gamma_1\end{bmatrix} \begin{bmatrix}
        e^{-\gamma_2 t} & 0\\
        0 & e^{-\gamma_1 t} 
    \end{bmatrix} \begin{bmatrix}
\gamma_1 & 1\\ -\gamma_2 & -1
\end{bmatrix}.
\end{align*}
Using variation of constants, the full solution is
\begin{align*}
     \boldsymbol{\eta}(t) = e^{ \boldsymbol{H}t}  \boldsymbol{\eta}(0)+\int_{0}^{t} e^{\boldsymbol{H}(t-\tau)}\boldsymbol{B} u(\tau)d\tau.
\end{align*}
Evaluating the first component of $\eta$ gives
\begin{align}\label{int_eta11}
    \eta_1(t) \!= \!\eta_{1,\rm{h}}(t) \!+\! \frac{b^2}{\gamma_1-\gamma_2}\int_{0}^{t} \big(e^{-\gamma_2 (t-\tau)}\!-\!e^{-\gamma_1 (t-\tau)}\big)u(\tau)d\tau,
\end{align}
where $\eta_{1,\rm{h}}(t)$ is 
\begin{align}\label{eta12h}
\eta_{\rm 1,h}(t)=c_{1}e^{-\gamma_1 t}+c_{2}e^{-\gamma_2 t},
\end{align}
with 
\begin{align}\label{ceff}
c_{1}=\frac{-\gamma_2\eta_{10}-\eta_{20}}{\gamma_1-\gamma_2}, \qquad c_{2}=\frac{\gamma_1\eta_{10}+\eta_{20}}{\gamma_1-\gamma_2}.
\end{align}
Since $u$ is bounded with $\|u\|_\infty=M$, we estimate 
\begin{align*}
\begin{split}
    &\vert \eta_1(t)\vert  \leq \vert \eta_{1,\rm h}(t)\vert + \frac{b^2 M}{\gamma_1 -\gamma_2}\int_{0}^{t}\big\vert e^{-\gamma_2 (t-\tau)}-e^{-\gamma_1 (t-\tau)}\big\vert d\tau\\& = \vert \eta_{1,\rm h}(t)\vert + \frac{b^2 M}{\gamma_1 -\gamma_2}\int_{0}^{t}\big( e^{-\gamma_2 (t-\tau)}-e^{-\gamma_1 (t-\tau)}\big)d\tau\\&=\vert \eta_{1,\rm h}(t)\vert + \frac{b^2 M}{\gamma_1 -\gamma_2}\Big(\frac{1}{\gamma_2}-\frac{1}{\gamma_1}-\frac{1}{\gamma_2}e^{-\gamma_2 t}+\frac{1}{\gamma_1}e^{-\gamma_1 t}\Big).
\end{split}
\end{align*}
Noting that $e^{-\gamma_2t}/\gamma_2 > e^{-\gamma_1t}/\gamma_1$ for all $t\geq 0$ due to $\gamma_1>\gamma_2 >0$ gives
\begin{align}\label{eta1_fn1}
\begin{split}
|\eta_1(t)|
&\le |\eta_{\rm 1,h}(t)|
+\frac{b^2M}{\gamma_1-\gamma_2}\Big(\frac{1}{\gamma_2}-\frac{1}{\gamma_1}\Big)\\&
=|\eta_{\rm 1,h}(t)|+\frac{b^2M}{\gamma_2\gamma_1}=|\eta_{\rm 1,h}(t)|+M,
\end{split}
\end{align}
since $\gamma_2\gamma_1=b^2$. Using \eqref{eta12h} and \eqref{ceff},
\begin{align}\label{eta1_fn2}
|\eta_{1,h}(t)|\!\le \!|c_1|e^{-\gamma_1 t} \!+\! |c_2|e^{-\gamma_2 t}
\!\le\! |c_1|\!+\!|c_2|
\!\leq \!\frac{\xi|\eta_{10}|\!+\!|\eta_{20}|/b}{\sqrt{\xi^2\!-\!1}}.
\end{align}
Combining \eqref{eta1_fn1} and \eqref{eta1_fn2} gives \eqref{eta1_bounds_final}. 

Next, we compute $\eta_2(t)$.  From \eqref{eta_fltr1},\eqref{int_eta11},\eqref{eta12h}, 
\begin{align*}
     \eta_2(t) \!=\! \eta_{2,\rm{h}}(t) \!+\! \frac{b^2}{\gamma_1\!-\!\gamma_2}\!\int_{0}^{t} \!\big(\gamma_1 e^{-\gamma_1 (t-\tau)}\!-\!\gamma_2 e^{-\gamma_2 (t-\tau)}\big)u(\tau)d\tau,
\end{align*}
where  $\eta_{2,\rm{h}}(t)$ is 
\begin{align}\label{eta2nd}
\eta_{2,h}(t)=-\gamma_1 c_{1}e^{-\gamma_1 t}-\gamma_2 c_{2}e^{-\gamma_2 t}. 
\end{align}
Using $\|u\|_\infty=M$ gives 
\begin{align}\label{eta2_fnl1}
\begin{split}
    \vert \eta_2(t)\vert  &  \leq \vert \eta_{2,\rm{h}}(t)\vert \!+\! \frac{b^2 M}{\gamma_1-\gamma_2}\!\int_{0}^{t} \big(\gamma_2 e^{-\gamma_2 (t-\tau)}\!+\!\gamma_1 e^{-\gamma_1 (t-\tau)}\big)d\tau\\& = \vert \eta_{2,\rm{h}}(t)\vert+\frac{b^2 M}{\gamma_1-\gamma_2}\Big(2 - e^{-\gamma_2 t}-e^{-\gamma_1 t}\Big)\\&\leq \vert \eta_{2,\rm{h}}(t)\vert+\frac{2b^2 M}{\gamma_1-\gamma_2} = \vert \eta_{2,\rm{h}}(t)\vert+\frac{b M}{\sqrt{\xi^2-1}}.
\end{split}
\end{align}
Using \eqref{eta2nd} and \eqref{ceff},
\begin{align}\label{eta2_fnl2}
\begin{split}
 |\eta_{2,h}(t)|
\le&\gamma_1|c_1|e^{-\gamma_1 t}+ \gamma_2|c_2|e^{-\gamma_2 t}
\\\le& \gamma_1|c_1|+\gamma_2|c_2|
\leq \frac{b|\eta_{10}|+\xi|\eta_{20}|}{\sqrt{\xi^2-1}}.   
\end{split}
\end{align}
Combining \eqref{eta2_fnl1} and \eqref{eta2_fnl2} gives \eqref{eta2_bounds_final}.  This completes the proof.

\hfill $\qed$

\begin{lem}[\it Karafyllis Lemma on a Shifted Interval]\label{krflslm}
Let $T_1,T_2 \in \mathbb{R}$ with $0 \leq T_1 < T_2 \le +\infty$ be given. 
Consider two absolutely continuous functions 
$z,V : [T_1,T_2) \to \mathbb{R}$ for which the following differential 
inequalities hold for $t \in [T_1,T_2)$ a.e.:
\begin{align}
    0 \le \dot{z}(t) 
    &\le \Gamma e^{-z(t)} \max\{W(t)-\varepsilon, 0\}, \label{karaf_shift_z}\\
    \dot{W}(t) 
    &\le -c W(t) + \frac{\mu}{1+e^{z(t)}}, \label{karaf_shift_V}
\end{align}
for certain constants $c,\varepsilon,\Gamma >0$ and $\mu \ge 0$. 
Then the following estimate holds for all $t \in [T_1,T_2)$:
\begin{equation}\label{lemma41_bound_shifted}
\begin{split}
z(T_1)\le&\; z(t) \le 
\ln\Bigg(
  \max\Big(
    \frac{\mu}{c\varepsilon}-1, e^{z(T_1)}
  \Big) \\
  &\quad + \frac{\Gamma}{c}
  \max\Big\{
    W(T_1) + \frac{\mu}{c\big(1+e^{z(T_1)}\big)} - \varepsilon, 0
  \Big\}
\Bigg).
\end{split}
\end{equation}
\end{lem}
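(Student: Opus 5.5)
The lower bound is immediate. By \eqref{karaf_shift_z}, $\dot z\ge 0$ a.e. on $[T_1,T_2)$, and $z$ is absolutely continuous, so $z$ is nondecreasing and $z(T_1)\le z(t)$.

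For the upper bound, I will work with $Q(t):=e^{z(t)}$ instead of $z$. This is natural because \eqref{karaf_shift_z} then reads $0\le \dot Q(t)=e^{z(t)}\dot z(t)\le \Gamma\max\{W(t)-\varepsilon,0\}$ a.e. The plan has three steps.

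\emph{Step 1: a crude bound on $W$.} Since $z$ is nondecreasing, $\mu/(1+e^{z(t)})\le \mu/(1+e^{z(T_1)})$. Hence \eqref{karaf_shift_V} gives $\frac{d}{dt}\big(e^{c(t-T_1)}W(t)\big)\le e^{c(t-T_1)}\mu/(1+e^{z(T_1)})$ a.e. Integrating this (legitimate since the product is absolutely continuous) yields
\begin{align*}
W(t)\le e^{-c(t-T_1)}W(T_1)+\frac{\mu}{c\big(1+e^{z(T_1)}\big)}\le W(T_1)+\frac{\mu}{c\big(1+e^{z(T_1)}\big)}
\end{align*}
for all $t\in[T_1,T_2)$. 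Here I use $W(T_1)\ge 0$ in the second inequality; if $W(T_1)<0$ the first bound is used as is and all later maxima are only smaller.

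\emph{Step 2: the first hitting time of the threshold.} Set $A:=\max\big(\frac{\mu}{c\varepsilon}-1,\,e^{z(T_1)}\big)$.
\begin{itemize}
\item If $Q(t)\le A$ for all $t\in[T_1,T_2)$, the claimed estimate holds, because its right-hand side is at least $\ln A$.
\item Otherwise let $t^*:=\inf\{t\ge T_1: Q(t)\ge A\}$. By continuity, $Q(t^*)=A$ (when $A=e^{z(T_1)}$ we simply have $t^*=T_1$). On $[T_1,t^*]$ we have $Q\le A$.
\end{itemize}
For $t\ge t^*$, monotonicity gives $1+e^{z(t)}\ge 1+A\ge \mu/(c\varepsilon)$, so $\mu/(1+e^{z(t)})\le c\varepsilon$. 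Then \eqref{karaf_shift_V} becomes $\frac{d}{dt}(W-\varepsilon)\le -c(W-\varepsilon)$ a.e. on $[t^*,T_2)$. Therefore $t\mapsto e^{ct}\big(W(t)-\varepsilon\big)$ is nonincreasing, which gives
\begin{align*}
\max\{W(t)-\varepsilon,0\}\le e^{-c(t-t^*)}\max\{W(t^*)-\varepsilon,0\}.
\end{align*}
To justify this, split according to the sign of $W(t^*)-\varepsilon$: if it is nonpositive, then $W-\varepsilon$ stays nonpositive.

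\emph{Step 3: integrate and combine.} Integrating $\dot Q\le\Gamma\max\{W-\varepsilon,0\}$ from $t^*$ to $t$ gives
\begin{align*}
Q(t)\le A+\frac{\Gamma}{c}\max\{W(t^*)-\varepsilon,0\}.
\end{align*}
Inserting the bound of Step 1 at $t=t^*$ gives exactly the argument of the logarithm in \eqref{lemma41_bound_shifted}. Taking $\ln$ completes the proof.

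The main obstacle is the comparison argument in Step 2, which must be done carefully since the differential inequalities hold only almost everywhere. I would handle it through the absolute continuity of $e^{ct}(W-\varepsilon)$, whose derivative is $\le 0$ a.e., together with the two-case treatment of the positive part at $t^*$.
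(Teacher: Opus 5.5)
Your main argument is correct whenever $W(T_1)\ge 0$, and it is the standard argument behind Lemma 4.1 of \cite{karafyllis2025robust}. The paper gives no proof of its own and simply defers to that lemma. The hitting-time structure you use (pass to $e^{z}$; stop when $e^{z}$ first reaches $\max(\frac{\mu}{c\varepsilon}-1,e^{z(T_1)})$; then use $\frac{\mu}{1+e^{z}}\le c\varepsilon$, exponential decay of $\max\{W-\varepsilon,0\}$, integration, and the crude bound on $W$ at the hitting time) is exactly what produces the stated estimate.

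The aside on $W(T_1)<0$ is wrong. In that case Step 1 only gives $W(t^*)\le e^{-c(t^*-T_1)}W(T_1)+\frac{\mu}{c(1+e^{z(T_1)})}$. Because $W(T_1)<0$, this exceeds $W(T_1)+\frac{\mu}{c(1+e^{z(T_1)})}$, so the maxima get larger, not smaller.

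No argument can repair this, because the estimate is false in that case. Take $c=\varepsilon=1$, $\mu=10$, $\Gamma=100$, $z(T_1)=0$, $W(T_1)=-4$; then the right-hand side equals $\ln 9$.
\begin{itemize}
\item Let $W$ satisfy the second inequality with equality, and keep $z\equiv 0$ until $W(t)=5-9e^{-(t-T_1)}$ reaches $4$.
\item From then on, set $\dot z=\Gamma e^{-z}\max\{W-1,0\}$.
\item Since $\dot W\ge -W$, we have $W>3.6$ during the next $0.1$ time units.
\item Hence $\frac{d}{dt}e^{z}\ge 260$ there, so $e^{z}$ passes $9$.
\end{itemize}

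So the lemma needs the extra hypothesis $W(T_1)\ge 0$. This is natural for a storage function, and it holds for $W_{\rm d},W_{\rm q}$ everywhere the paper invokes the lemma. With that hypothesis stated, your proof is complete, and Step 1 is the only place it is used.
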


The proof follows similar arguments to those of Lemma 4.1 in \cite{karafyllis2025robust}.

\subsection{Proof of Theorem \ref{thm1}}\label{pfthm1} 
The right-hand side of the closed-loop dynamics \eqref{inv1}-\eqref{dads_qend} is locally Lipschitz in $(\boldsymbol{x},z_{\rm d},z_{\rm q})$. Further, from Assumption \ref{assdis}, the grid voltage components satisfy $v_{\rm g,d},v_{\rm g,q}\in L^\infty(\mathbb{R}_{+};\mathbb{R})$. Hence, for every initial condition 
$\big(\boldsymbol{x}(0),z_{\rm d}(0),z_{\rm q}(0)\big)\in\mathbb{R}^{10}\times\mathbb{R}\times\mathbb{R}$ there exists a unique absolutely continuous maximal solution on some interval $[0,t_{\max})$, where $t_{\rm \max}\in (0,+\infty]$. 

\noindent \textit{\underline{Boundedness of the filtered reactive powers:}}

We bound $q_1$ and $q_2$  using the second-order filter \eqref{Q_filter} with input $\operatorname{sat}_{\overline Q}(q)$ bounded by $\overline Q$. By Lemma \ref{lemfilt} with $b=\omega_{\rm qc}$ and $\xi=\xi_{\rm q}>1$,
\begin{align}\label{q1bar}
|q_1(t)|&\le \overline{q}_{\rm 1}(\boldsymbol{x}(0))
:=\frac{\xi_{\rm q}|q_1(0)|+\frac{|q_2(0)|}{\omega_{\rm qc}}}{\sqrt{\xi_{\rm q}^2-1}}+\overline Q,\\ \label{q2bar}
\vert q_2(t)\vert &\leq \overline{q}_{\rm 2}(\boldsymbol{x}(0))
:=  \frac{\omega_{\rm qc}|q_{1}(0)|+\xi_{\rm q}|q_{2}(0)|}{\sqrt{\xi^2_{\rm q}-1}}+\frac{\omega_{\rm qc} \overline{Q}}{\sqrt{\xi^2_{\rm q}-1}},
\end{align}
for all $t\in [0,t_{\rm max})$. 

\noindent \textit{\underline{$\rm d/q$-channel error dynamics and storage function bound:}} 

Define the $\rm d$-axis errors
\begin{align}\label{evdid}
e_{\rm v,d}:=v_{\rm c,d}-v^{\rm r}_{\rm c,d},\qquad
e_{\rm i,d}:=i_{\rm t,d}-i^{\rm r}_{\rm t,d}.
\end{align}

\noindent Using the plant equations, \eqref{Q_filter},\eqref{vcd_star}, the reference current $i_{\rm t,d}^{\rm r}$ \eqref{itd_r}, and \eqref{evdid}, we obtain
\begin{align}\label{evd}
\begin{split}
    &\dot{e}_{\rm v,d} 
    = \omega_{\rm b} \omega v_{\rm{c,q}} + \frac{\omega_{\rm b}}{C_{\rm f}}\big(i_{\rm{t,d}} - i_{\rm{g,d}}\big)+K_{\rm Q}q_2\\
    &=\omega_{\rm b} \omega v_{\rm{c,q}} + \frac{\omega_{\rm b}}{C_{\rm f}}\Big(e_{\rm i,d}+i_{\rm{g,d}}-C_{\rm f}\omega v_{\rm{c,q}}-\frac{C_{\rm f}K_{\rm Q}}{\omega_{\rm b}}q_2\\&\quad-\frac{C_{\rm f}K_{\rm VC}}{\omega_{\rm b}}\big(v_{\rm{c,d}}-v^{\rm r}_{\rm{c,d}}\big) - i_{\rm{g,d}}\Big)+K_{\rm Q}q_2\\
    &=\frac{\omega_{\rm b}}{C_{\rm f}}e_{\rm i,d}-K_{\rm VC}e_{\rm v,d}.
\end{split}
\end{align}

\noindent Differentiating $e_{\rm i,d}$ and using the plant equations, \eqref{Q_filter}-\eqref{omega_droop}, and \eqref{evd}, 
\begin{align*}
\begin{split}
    &\dot{e}_{\rm i,d} =2\omega_{\rm b} \omega \big(i_{\rm{t,q}} -i_{\rm{g,q}}\big)- \frac{\omega_{\rm b} R_{\rm f}}{L_{\rm f}} i_{\rm{t,d}} -\omega_{\rm b}\bigg(\frac{1}{L_{\rm f}}+\omega^2 C_{\rm f} \bigg) v_{\rm{c,d}} \\&\quad- C_{\rm f} K_{\rm P} p_2v_{\rm{c,q}}
      +K_{\rm VC}\big( i_{\rm{t,d}}-i^{\rm r}_{\rm{t,d}}\big)-\frac{C_{\rm f}K_{\rm VC}^2}{\omega_{\rm b}}\big( v_{\rm{c,d}}-v^{\rm r}_{\rm{c,d}}\big)\\&\quad-\frac{C_{\rm f}K_{\rm Q}}{\omega_{\rm b}}\Big(2\xi_{\rm q}\omega_{\rm {qc}}q_2 + \omega_{\rm {qc}}^2\big(q_1 - \operatorname{sat}_{\overline Q}(q)\big)\Big)
    + \frac{\omega_{\rm b}}{L_{\rm f}}v_{\rm{t,d}}\\&\quad+\frac{\omega_{\rm b} R}{L}i_{\rm{g,d}}-\frac{\omega_{\rm b}}{L}v_{\rm{c,d}}+\frac{\omega_{\rm b}}{L}v_{\rm{g,d}}.
\end{split}
\end{align*}
Choose $v_{\rm t,d}$ as \eqref{dads_d1}, from which it follows
\begin{align}\label{eid}
    \dot{e}_{\rm i,d} = u_{\rm d}+\frac{\omega_{\rm b} R}{L}i_{\rm{g,d}}-\frac{\omega_{\rm b}}{L}v_{\rm{c,d}}+\frac{\omega_{\rm b}}{L}v_{\rm{g,d}}.
\end{align}
Consider the storage function
\begin{align}\label{lyuap_d}
    W_{\rm d} = \frac{1}{2}e^2_{\rm v,d}+\frac{1}{2}e^2_{\rm i,d}.
\end{align}
Using \eqref{evd} and \eqref{eid},
\begin{align}\label{Wddot}
\begin{split}
       &\dot{W}_{\rm d}  = e_{\rm v,d}\dot{e}_{\rm v,d}+  e_{\rm i,d}\dot{e}_{\rm i,d}\\
       &=-K_{\rm VC}e^2_{\rm v,d}+ \frac{\omega_{\rm b}}{C_{\rm f}}e_{\rm v,d}e_{\rm i,d}+ e_{\rm i,d}u_{\rm d}+\frac{\omega_{\rm b} R}{L}e_{\rm i,d}i_{\rm{g,d}}\\&\quad-\frac{\omega_{\rm b}}{L}e_{\rm i,d}v_{\rm{c,d}} +\frac{\omega_{\rm b}}{L}e_{\rm i,d}v_{\rm{g,d}}.
\end{split}
\end{align}
Apply Young's inequality with any $\mu_{\rm d}>0$ and any $z_{\rm d}\in\mathbb{R}$:
\begin{align*}
    \frac{\omega_{\rm b} R}{L}e_{\rm i,d}i_{\rm{g,d}} &\leq \frac{(1+e^{z_{\rm d}})\omega_{\rm b}^2}{4\mu_{\rm d}} e^2_{\rm i,d} i^2_{\rm{g,d}}+\frac{\mu_{\rm d}R^2}{L^2(1+e^{z_{\rm d}})},\\
    -\frac{\omega_{\rm b}}{L}e_{\rm i,d}v_{\rm{c,d}}&\leq \frac{(1+e^{z_{\rm d}})\omega_{\rm b}^2}{4\mu_{\rm d}} e^2_{\rm i,d} v^2_{\rm{c,d}}+\frac{\mu_{\rm d}}{L^2(1+e^{z_{\rm d}})},\\
        \frac{\omega_{\rm b}}{L}e_{\rm i,d}v_{\rm{g,d}}&\leq \frac{(1+e^{z_{\rm d}})\omega_{\rm b}^2}{4\mu_{\rm d}}e^2_{\rm i,d}+\frac{\mu_{\rm d}v_{\rm{g,d}}^2}{L^2(1+e^{z_{\rm d}})}.
\end{align*}
Thus
\begin{align}\label{Wd_dot_pre_ud}
\begin{split}
&\dot W_{\rm d}\le
-K_{\rm VC}e_{\rm v,d}^2+\frac{\omega_{\rm b}}{C_{\rm f}}e_{\rm v,d}e_{\rm i,d}
+e_{\rm i,d}u_{\rm d}
\\&+\frac{(1+e^{z_{\rm d}})\omega_{\rm b}^2}{4\mu_{\rm d}}e_{\rm i,d}^2  \big(1+i_{\rm g,d}^2+v_{\rm c,d}^2\big)
+\frac{\mu_{\rm d}\big(1+R^2+v_{\rm g,d}^2\big)}{L^2(1+e^{z_{\rm d}})}.
\end{split}
\end{align}

\noindent Choose $u_{\rm d}$ as prescribed in \eqref{u_d}, which cancels the mixed term and dominates the gain-multiplied $e_{\rm i,d}^2$ part in \eqref{Wd_dot_pre_ud}. Therefore,

\begin{align}\label{Wd_dot_basic}
\begin{split}
\dot W_{\rm d}&\le -K_{\rm VC}e_{\rm v,d}^2-K_{\rm CC}e_{\rm i,d}^2
+\frac{\mu_{\rm d}\big(1+R^2+v_{\rm g,d}^2\big)}{L^2(1+e^{z_{\rm d}})}
\\&\le -2k W_{\rm d}
+\frac{\mu_{\rm d}\big(1+R^2+v_{\rm g,d}^2\big)}{L^2(1+e^{z_{\rm d}})},
\end{split}
\end{align}
where $ k = \min\{K_{\rm VC},K_{\rm CC}\}.$ The adaptation dynamics satisfy
\eqref{z_ddot}, hence $z_{\rm d}(t)$ is non-decreasing $\implies$ $z_{\rm d}(t)\ge z_{\rm d}(0)$ for all $t\in [0,t_{\rm max})$. Using \eqref{Wd_dot_basic} and $v_{\rm g,d}\in L^\infty$ from Assumption \ref{assdis},
\begin{align}
\begin{split}
W_{\rm d}(t)
&\le e^{-2kt}W_{\rm d}(0)
+\frac{\mu_{\rm d}}{L^2}\int_0^t e^{-2k(t-\tau)}
\frac{1+R^2+v_{\rm g,d}^2(\tau)}{1+e^{z_{\rm d}(\tau)}}d\tau \\
&\le e^{-2kt}W_{\rm d}(0)
+\frac{\mu_{\rm d}\big(1+R^2+\|v_{\rm g,d}\|_\infty^2\big)}{2kL^2\big(1+e^{z_{\rm d}(0)}\big)}\big(1-e^{-2kt}\big)\\
&\le e^{-2kt}W_{\rm d}(0)
+\frac{\mu_{\rm d}\big(1+R^2+\|v_{\rm g,d}\|_\infty^2\big)}{2kL^2\big(1+e^{z_{\rm d}(0)}\big)}.
\end{split}\label{Wd_bound_final}
\end{align}
Since $e_{\rm v,d}^2,e_{\rm i,d}^2\le 2 W_{\rm d}$, we get for all $t\in [0,t_{\rm max})$
\begin{align}\label{evdeid}
|e_{\rm v,d}(t)|,|e_{\rm i,d}(t)|
\le \sqrt{2W_{\rm d}(0)+\frac{\mu_{\rm d}\big(1+R^2+\|v_{\rm g,d}\|_\infty^2\big)}{kL^2\big(1+e^{z_{\rm d}(0)}\big)}}.
\end{align}
Similarly, for the $\rm{q}-$ channel, for error variables defined as 
\begin{align*}
e_{\rm v,q}:=v_{\rm c,q}-v^{\rm r}_{\rm c,q},\qquad
e_{\rm i,q}:=i_{\rm t,q}-i^{\rm r}_{\rm t,q},
\end{align*}
it holds that 
\begin{align*}
|e_{\rm v,q}(t)|,|e_{\rm i,q}(t)|
\le \sqrt{2W_{\rm q}(0)+\frac{\mu_{\rm q}\big(1+R^2+\|v_{\rm g,q}\|_\infty^2\big)}{kL^2\big(1+e^{z_{\rm q}(0)}\big)}},
\end{align*}
$t\in [0,t_{\rm max})$. 

\noindent \textit{\underline{Boundedness of the PCC voltage:}}

From the reactive power-voltage droop \eqref{vcd_star} and using \eqref{q1bar}, we have for $t\in [0,t_{\rm max})$,
\begin{align*}
|v^{\rm r}_{\rm c,d}(t)|
\le |V_0+K_{\rm Q}Q_{0}|+\vert K_{\rm Q}\vert \overline{q}_1.
\end{align*}
Further, noting that $v_{\rm c,d}=v^{\rm r}_{\rm c,d}+e_{\rm v,d}$ and  $|v_{\rm c,d}|\leq |v_{\rm c,d}^{\rm r}|+|e_{\rm v,d}|$, and combining with the error bound \eqref{evdeid}, we obtain 

\begin{align}\label{vcd_bnd}
\begin{split}
     \vert v_{\rm c,d}(t)\vert &\leq  \overline{v}_{\rm c,d}\big(\boldsymbol{x}(0),z_{\rm d}(0),\Vert v_{\rm g,d}\Vert_{\infty}\big)\\&\quad\quad :=\sqrt{2W_{\rm d}(0)+ \frac{\mu_{\rm d}\big(1+R^2+\Vert v_{\rm{g,d}}\Vert_{\infty}^2\big)}{kL^2\big(1+e^{z_{\rm d}(0)}\big)} }\\&\qquad\qquad+ \vert V_0+K_{\rm Q}Q_{0}\vert + \vert K_{\rm Q}\vert \overline{q}_1,
\end{split}
\end{align}
for all $t\in [0,t_{\rm max})$. Repeating similar steps, we obtain
\begin{align}
\begin{split}
\vert v_{\rm c,q}(t)\vert \leq& \overline{v}_{\rm c,q}\big(\boldsymbol{x}(0),z_{\rm q}(0),\Vert v_{\rm g,q}\Vert_{\infty}\big)\\&:= \sqrt{2W_{\rm q}(0)+ \frac{\mu_{\rm q}\big(1+R^2+\Vert v_{\rm{g,q}}\Vert_{\infty}^2\big)}{kL^2\big(1+e^{z_{\rm q}(0)}\big)} },
\end{split}\label{vcq_bnd}  
\end{align}
for all $t\in [0,t_{\rm max})$.

\noindent \textit{\underline{Boundedness of the grid current:}}

Let 
\begin{align*}
    W_{1}:=\frac{1}{2}(i_{\rm g,d}^2+i_{\rm g,q}^2).
\end{align*}
From the grid-inductor dynamics \eqref{invend1} and \eqref{invend},
\begin{align}\label{vgdot}
\begin{split}
\dot W_{1}
&= -\frac{\omega_{\rm b}R}{L}(i_{\rm g,d}^2+i_{\rm g,q}^2)
+\frac{\omega_{\rm b}}{L} i_{\rm g,d}(v_{\rm c,d}-v_{\rm g,d})
\\&\quad+\frac{\omega_{\rm b}}{L} i_{\rm g,q}(v_{\rm c,q}-v_{\rm g,q}). 
\end{split}
\end{align}
Applying Young's inequality repeatedly gives
\begin{align}
\begin{split}
    \label{yngineq1}
    \frac{\omega_{\rm b}}{L}i_{\rm g,d}\big(v_{\rm{c,d}}-v_{\rm{g,d}}\big)&\leq \frac{\omega_{\rm b}^2}{2L^2}\delta_1i^2_{\rm g,d}+\frac{1}{2\delta_1} \big(v_{\rm{c,d}}-v_{\rm{g,d}}\big)^2\\&\leq \frac{\omega_{\rm b}^2}{2L^2}\delta_1i^2_{\rm g,d}+\frac{1}{\delta_1} \big(v_{\rm{c,d}}^2+v_{\rm{g,d}}^2\big),
\end{split}
\end{align}
and
\begin{align}\label{yngineq2}
\begin{split}
        \frac{\omega_{\rm b}}{L}i_{\rm g,q}\big(v_{\rm{c,q}}-v_{\rm{g,q}}\big)&\leq \frac{\omega_{\rm b}^2}{2L^2}\delta_1 i^2_{\rm g,q}+\frac{1}{2\delta_1} \big(v_{\rm{c,q}}-v_{\rm{g,q}}\big)^2\\&\leq \frac{\omega_{\rm b}^2}{2L^2}\delta_1 i^2_{\rm g,q}+\frac{1}{\delta_1} \big(v_{\rm{c,q}}^2+v_{\rm{g,q}}^2\big),
\end{split}
\end{align}
for some $\delta_1>0$. Choosing $\delta_1 = \frac{LR}{\omega_{\rm b}}$, substituting \eqref{yngineq1} and \eqref{yngineq2} in \eqref{vgdot}, and since $i_{\rm g,d}^2+i_{\rm g,q}^2=2W_{1}$ and $v_{\rm g,d},v_{\rm g,q}\in L^{\infty}$,    
\begin{align*} 
\dot{W}_{1}&\le -\frac{\omega_{\rm b}R}{L}W_{1}
+\frac{\omega_{\rm b}}{LR}\big(v_{\rm c,d}^2+v_{\rm c,q}^2+v_{\rm g,d}^2+v_{\rm g,q}^2\big)\\&\le -\frac{\omega_{\rm b}R}{L}W_{1}
+\frac{\omega_{\rm b}}{LR}\big(\overline{v}_{\rm c,d}^2+\overline{v}_{\rm c,q}^2+\|v_{\rm g,d}\|_{\infty}^2+\|v_{\rm g,q}\|_{\infty}^{2}\big),
\end{align*}
where $\overline v_{\rm c,d},\overline v_{\rm c,q}>0$ are the bounds on $|v_{\rm c,d}|,|v_{\rm c,q}|$ given in \eqref{vcd_bnd} and \eqref{vcq_bnd}. From this, it follows 
\begin{align*}
W_{1}(t)
\le e^{-\frac{\omega_{\rm b}R}{L}t}W_{1}(0)
+\frac{\overline v_{\rm c,d}^2+\overline v_{\rm c,q}^2+\|v_{\rm g,d}\|_\infty^2+\|v_{\rm g,q}\|_\infty^2}{R^2},
\end{align*}
for all $t\in [0,t_{\rm max})$. Hence, noting that $i_{\rm g,d}^2,i_{\rm g,q}^2\leq 2W_{1}$, we obtain 
 \begin{align}
\begin{split}
    &\vert i_{\rm g,d}(t)\vert , \vert i_{\rm g,q}(t)\vert \leq \overline{i}_{\rm g}\big(\boldsymbol{x}(0),z_{\rm d}(0),z_{\rm q}(0),\Vert v_{\rm g,d}\Vert_{\infty},\Vert v_{\rm g,q}\Vert_{\infty}\big)\\&:=   \sqrt{i^2_{\rm g,d}(0)+i^2_{\rm g,q}(0) + \frac{2(\overline{v}_{\rm c,d}^2+\overline{v}_{\rm c,q}^2+\Vert v_{\rm g,d}\Vert^2_{\infty}+\Vert v_{\rm g,q}\Vert^2_{\infty})}{R^2}},\label{ig_bnd}
\end{split}
\end{align}
for all $t\in [0,t_{\rm max})$. 

\noindent \textit{\underline{Boundedness of the filtered active powers:}}

We bound $p_1$ and $p_2$  using the second-order filter \eqref{P_filter}. From \eqref{inst_act} and the bounds \eqref{vcd_bnd},\eqref{vcq_bnd},\eqref{ig_bnd}, it holds that $\vert p\vert\leq \overline{i}_{\rm g}(\overline{v}_{\rm c,d}+\overline{v}_{\rm c,q})$ for all $t\in[0,t_{\max})$. Since $|\operatorname{sat}_{\overline P}(y)|\le |y|$ for all $y\in\mathbb R$ and all $\overline P\in(0,+\infty]$, we get $
|\operatorname{sat}_{\overline P}(p(t))|\le \overline{i}_{\rm g}(\overline{v}_{\rm c,d}+\overline{v}_{\rm c,q}),\forall t\in[0,t_{\max})$. Then, by Lemma \ref{lemfilt} with $b=\omega_{\rm pc}$ and $\xi=\xi_{\rm p}>1$,
\begin{align}\label{p1bar}
\begin{split}
|p_1(t)|&\le \overline{p}_1\big(\boldsymbol{x}(0),z_{\rm d}(0),z_{\rm q}(0),\Vert v_{\rm g,d}\Vert_{\infty}, \Vert v_{\rm g,q}\Vert_{\infty}\big)
\\&\quad:=\frac{\xi_{\rm p}|p_1(0)|+\frac{|p_2(0)|}{\omega_{\rm pc}}}{\sqrt{\xi_{\rm p}^2-1}}+\overline{i}_{\rm g}(\overline{v}_{\rm c,d}+\overline{v}_{\rm c,q}),
\end{split}\\\label{p2bar}
\begin{split}
\vert p_2(t)\vert &\leq \overline{p}_2\big(\boldsymbol{x}(0),z_{\rm d}(0),z_{\rm q}(0),\Vert v_{\rm g,d}\Vert_{\infty}, \Vert v_{\rm g,q}\Vert_{\infty}\big)
\\&\quad:=  \frac{\omega_{\rm pc}|p_{1}(0)|+\xi_{\rm p}|p_{2}(0)|}{\sqrt{\xi^2_{\rm p}-1}}+\frac{\omega_{\rm pc} (\overline{i}_{\rm g}(\overline{v}_{\rm c,d}+\overline{v}_{\rm c,q}))}{\sqrt{\xi^2_{\rm p}-1}},
\end{split}
\end{align}
for all $t\in [0,t_{\rm max})$.

\noindent \textit{\underline{Boundedness of the terminal current:}}

Next, for $i_{\rm t,d}$, observe from \eqref{itd_r} that
\begin{align*}
|i^{\rm r}_{\rm t,d}|
\le |i_{\rm g,d}|+C_{\rm f}|\omega| |v_{\rm c,q}|
+\frac{C_{\rm f}\vert K_{\rm Q}\vert}{\omega_{\rm b}}|q_2|
+\frac{C_{\rm f}K_{\rm VC}}{\omega_{\rm b}}|e_{\rm v,d}|.
\end{align*}
The active power-frequency droop \eqref{omega_droop} and \eqref{p1bar} imply
\begin{align*}
|\omega|
\le \vert \omega_0 + K_{\rm P}P_{0}\vert +\vert K_{\rm P}\vert\overline{p}_1.
\end{align*}
Therefore, using \eqref{q2bar},\eqref{vcq_bnd},\eqref{ig_bnd},
\begin{align*}
\begin{split}
|i^{\rm r}_{\rm t,d}|
\le& \overline i_{\rm g}+C_{\rm f}\big(\vert \omega_0 + K_{\rm P}P_{0}\vert +\vert K_{\rm P}\vert\overline{p}_1\big) \overline v_{\rm c,q}
+\frac{C_{\rm f}\vert K_{\rm Q}\vert}{\omega_{\rm b}}\overline{q}_2
\\&+\frac{C_{\rm f}K_{\rm VC}}{\omega_{\rm b}}|e_{\rm v,d}|.
\end{split}
\end{align*}
Finally, $i_{\rm t,d}=i^{\rm r}_{\rm t,d}+e_{\rm i,d}$ together with the bound \eqref{evdeid} on $e_{\rm i,d}$ and $e_{\rm v,d}$  gives
\begin{align*}
\begin{split}
&|i_{\rm t,d}|
\le \overline i_{\rm g}+C_{\rm f}\big(\vert \omega_0 + K_{\rm P}P_{0}\vert +\vert K_{\rm P}\vert\overline{p}_1\big) \overline v_{\rm c,q}
+\frac{C_{\rm f}\vert K_{\rm Q}\vert}{\omega_{\rm b}}\overline{q}_2
\\&+\Big(\frac{C_{\rm f}K_{\rm VC}}{\omega_{\rm b}}+1\Big)\sqrt{2W_{\rm d}(0)+   \frac{\mu_{\rm d}\big(1+R^2+\Vert v_{\rm{g,d}}\Vert_{\infty}^2\big)}{kL^2\big(1+e^{z_{\rm d}(0)}\big)}},
\end{split}
\end{align*}
$t\in [0,t_{\rm max})$. Similar derivation on the $\rm q$-axis  yields 
    \begin{align*}
\begin{split}
    & \vert i_{\rm t,q}(t)\vert\leq   \overline i_{\rm g}+C_{\rm f}\big(\vert \omega_0 + K_{\rm P}P_{0}\vert +\vert K_{\rm P}\vert\overline{p}_1\big) \overline v_{\rm c,d}
\\&+\Big(\frac{C_{\rm f}K_{\rm VC}}{\omega_{\rm b}}+1\Big)\sqrt{2W_{\rm q}(0)+   \frac{\mu_{\rm q}\big(1+R^2+\Vert v_{\rm{g,q}}\Vert_{\infty}^2\big)}{kL^2\big(1+e^{z_{\rm q}(0)}\big)}},
\end{split}
\end{align*}
$t\in [0,t_{\rm max})$. 

\noindent \textit{\underline{Boundedness of the adaptive gains:}}

Considering \eqref{z_ddot},\eqref{dads_dend},\eqref{Wd_dot_basic}, recalling that $\vert v_{\rm g,d}(t)\vert\leq \Vert v_{\rm g,d}\Vert_{\infty}$ a.e, applying Lemma \ref{krflslm}, and noting that $z_{\rm d}$ is non-decreasing, we obtain 
\begin{align}\label{zdq_bnd}
\begin{split}
&z_{\rm d}(0)\\& \le z_{\rm d}(t) \le \ln\Bigg(
  \max\Big\{
    \frac{\mu_{\rm d}\big(1+R^2+\Vert v_{\rm g,d}\Vert^2_{\infty}\big)}{2kL^2\varepsilon} - 1,
    e^{z_{\rm d}(0)}
  \Big\} \\
 &+ \frac{\Gamma_{\rm d}}{2k}
  \max\Big\{
    W_{\rm d}(0) + \frac{\mu_{\rm d}\big(1+R^2+\Vert v_{\rm g,d}\Vert^2_{\infty}\big)}{2kL^2\big(1+e^{z_{\rm d}(0)}\big)} - \varepsilon,
    0
  \Big\}
\Bigg),
\end{split}
\end{align}
for all $t\in [0,t_{\rm max})$.  Similarly, we can obtain the upper bound for $z_{\rm q}$. 

Since all the states $(\boldsymbol{x}(t),z_{\rm d}(t),z_{\rm q}(t))$ are  bounded, $t_{\rm max} = +\infty$. Therefore, the unique absolutely continuous solution exists for all $t\geq 0$, and the bounds derived above hold for all $t\geq 0$.   This completes the proof. \hfill $\qed$

\subsection{Proof of Theorem \ref{thm2}}\label{pfthm2}

Consider the $\rm d$-channel storage function $W_{\rm d}$ in \eqref{lyuap_d}. Since $\big( v_{\rm c,d}-v_{\rm c,d}^{\rm r}\big)^2\leq 2 W_{\rm d}$ for all $t\geq 0$ and $1+e^{z_{\rm d}(0)}>1$, the estimate \eqref{re52} directly follows from \eqref{Wd_bound_final}. 

The  estimate \eqref{zdq_bnd} together with the fact that $z_{\rm d}$ is non-decreasing (note from \eqref{z_ddot} that $\dot z_{\rm d}(t)\ge 0$ a.e.,), implies that $z_{\rm d}$ has a finite limit as $t\to+\infty$. As a result, $e^{z_{\rm d}}$ also has a finite limit as $t \to +\infty$. Recall that $\dot W_{\rm d}$ is given by \eqref{Wddot}. There, $v_{\rm g,d}\in L^{\infty}(\mathbb{R}_{+};\mathbb{R})$; the terms $e_{\rm v,d},e_{\rm i,d}$ are bounded due to \eqref{evdeid}; the signals $i_{\rm g,d}$ and $v_{\rm c,d}$ are bounded by Theorem \ref{thm1}; and $u_{\rm d}$ in \eqref{u_d} is bounded because all states are bounded by Theorem \ref{thm1}. Therefore, $\dot W_{\rm d}$ is bounded. As a result,
\begin{align*}
\frac{d}{dt}e^{z_{\rm d}}=\Gamma_{\rm d}\max\{W_{\rm d}-\varepsilon,0\},
\end{align*}
is uniformly continuous. Furthermore, we have 
\begin{align*}
\int_0^{+\infty} \frac{d}{dt}e^{z_{\rm d}} dt = 
\lim_{t \to +\infty} \big(e^{z_{\rm d}} - e^{z_{\rm d}(0)}\big) < +\infty.
\end{align*}
Then, from Barbalat’s Lemma
\begin{align*}
\lim_{t\to+\infty}\left(\frac{d}{dt}e^{z_{\rm d}}\right)
=\Gamma_{\rm d}\lim_{t\to+\infty}\big(\max\{W_{\rm d}-\varepsilon,0\}\big)=0.
\end{align*}
Therefore,
\begin{align*}
\limsup_{t\to+\infty} W_{\rm d}\le \varepsilon,
\end{align*}
and hence $\limsup_{t\to+\infty}\lvert e_{\rm v,d}(t)\rvert\le \sqrt{2\varepsilon}$.

A verbatim repetition of the above steps for the $\rm q$–channel yields \eqref{re53} and \eqref{re55}. This completes the proof. \hfill $\qed$

\subsection{Proof of Theorem \ref{thm4}}\label{pfthm4}

The right-hand side of the closed-loop dynamics \eqref{inv1}–\eqref{xstte} with the control input
$\boldsymbol{v}_{\rm t} = [v_{\rm t,d},\, v_{\rm t,q}]^\top$ given by \eqref{sfty_ip1},\eqref{sfty_ip2},
where the nominal input $\boldsymbol{v}_{\rm t}^{\rm n} = [v_{\rm t,d}^{\rm n},\, v_{\rm t,q}^{\rm n}]^\top$
is any locally Lipschitz controller $\boldsymbol{k}(\boldsymbol{x})$ with $\boldsymbol{k}:\mathbb{R}^{10}\to\mathbb{R}^2$, is itself locally Lipschitz in the state. This is because, the safety filtered control \eqref{sfty_ip2},\eqref{sfty_ip1} is obtained by projecting
$\boldsymbol{v}_{\rm t}^{\rm n}$ onto a closed half-space defined by the affine CBF constraint,
which yields a locally Lipschitz mapping in $\boldsymbol{x}$ on the set
$\{\|\boldsymbol{i}_{\rm t}\|\neq 0\}$.
Moreover, when $\|\boldsymbol{i}_{\rm t}\| = 0$, all terms in \eqref{sfty_ip2} that depend on
$\boldsymbol{i}_{\rm t}$ vanish and $h(\boldsymbol x)=I_{\max}^2$, so that
$\eta(\boldsymbol x,\boldsymbol v_{\rm t}^{\rm n}) = c I_{\max}^2 > 0$.
By continuity of $\eta(\boldsymbol x,\boldsymbol v_{\rm t}^{\rm n})$, there exists $\epsilon>0$ such that
$\eta(\boldsymbol x,\boldsymbol v_{\rm t}^{\rm n}) \ge 0$ for all $\boldsymbol x$ satisfying
$\|\boldsymbol{i}_{\rm t}\| < \epsilon$.
Hence, in a neighborhood of the set $\{\|\boldsymbol{i}_{\rm t}\|=0\}$ the safety filter is inactive and
$\boldsymbol{v}_{\rm t} = \boldsymbol{v}_{\rm t}^{\rm n}$, which implies that
$\boldsymbol{v}_{\rm t}$ is locally Lipschitz in that neighborhood. Thus, the closed-loop vector field is locally Lipschitz in $\boldsymbol{x}$. Further, from Assumption~\ref{assdis}, the grid voltage components satisfy
$v_{\rm g,d},v_{\rm g,q}\in L^\infty(\mathbb{R}_{+};\mathbb{R})$.
Thus, for every initial condition $\boldsymbol{x}(0)\in\mathbb{R}^{10}$,
there exists a unique absolutely continuous maximal solution $\boldsymbol{x}(t)$
defined on some interval $[0,t_{\max})$, where $t_{\max}\in(0,+\infty]$.

\noindent \underline{\textit{Terminal current safety:}}

With the barrier function $h(\boldsymbol x)=I_{\max}^2-\|\boldsymbol i_{\rm t}\|^2$, the constraint in the QP \eqref{qpcnstnt} enforces $\dot h(\boldsymbol x(t))\ge -c h(\boldsymbol x(t))$ a.e for all $t\in [0,t_{\rm max})$. Thus, for all $t\in [0,t_{\rm max})$,
\begin{align*}
h(t)\ge e^{-ct}h(0)\;\Rightarrow\; h(t)\ge 0 \ \text{whenever}\ h(0)\ge 0.
\end{align*}
Equivalently, $\|\boldsymbol i_{\rm t}(t)\|\le I_{\max}$ for all $t\in [0,t_{\rm max})$.

\noindent \textit{\underline{Boundedness of the PCC voltage and grid current:}}

Let $\boldsymbol{v}_{\rm c} = [v_{\rm c,d},\, v_{\rm c,q}]^\top$,  $\boldsymbol{v}_{\rm g} = [v_{\rm g,d},\, v_{\rm g,q}]^{\top}$, and $\boldsymbol{i}_{\rm g} = [i_{\rm g,d},\, i_{\rm g,q}]^{\top}$. Concatenating the corresponding dynamics \eqref{inv1},\eqref{inv2} and \eqref{invend1},\eqref{invend}, we obtain  
\begin{align}
\dot{\boldsymbol{v}}_{\rm c} &= \omega_{\rm b}\omega\boldsymbol{J} \boldsymbol{v}_{\rm c} 
           + \frac{\omega_{\rm b}}{C_{\rm f}}\big(\boldsymbol{i}_{\rm t} - \boldsymbol{i}_{\rm g}\big), \label{vc_sys}\\
\dot{\boldsymbol{i}}_{\rm g} &= \omega_{\rm b}\omega\boldsymbol{J}\boldsymbol{i}_{\rm g}
           + \frac{\omega_{\rm b}}{L}\big(\boldsymbol{v}_{\rm c} - \boldsymbol{v}_{\rm g}\big)
           - \frac{\omega_{\rm b} R}{L}\boldsymbol{i}_{\rm g}\label{ig_sys},
\end{align}
where
\begin{align}\label{JJ}
    \boldsymbol{J} = \begin{bmatrix}
        0 && 1\\-1 && 0
    \end{bmatrix}. 
\end{align}
Consider the storage function 
\begin{align*}
    W_2 = \frac{1}{2}C_{\rm f}\Vert \boldsymbol{v}_{\rm c}\Vert^2+\frac{1}{2}L\Vert \boldsymbol{i}_{\rm g}\Vert^2 + \lambda \boldsymbol{v}_{\rm c}^{\top} \boldsymbol{i}_{\rm g}, \quad 0< \vert \lambda \vert < \sqrt{C_{\rm f} L}.
\end{align*}
with $\lambda$ to be chosen later. Completing the square gives
\begin{align*}
\begin{split}
    W_2 = &\frac{1}{2}C_{\rm f}\Big\Vert \boldsymbol{v}_{\rm c}+\frac{\lambda}{C_{\rm f}}\boldsymbol{i}_{\rm g}\Big\Vert^2+\frac{1}{2}\Big(L-\frac{\lambda^2}{C_{\rm f}}\Big)\Vert \boldsymbol{i}_{\rm g}\Vert^2,
\end{split}
\end{align*}
which is positive definite provided $    \vert \lambda \vert < \sqrt{C_{\rm f} L}.$ Using \eqref{vc_sys}-\eqref{JJ}, we obtain
\begin{align*}
\begin{split}
    \dot{W}_2 &= C_{\rm f}\boldsymbol{v}_{\rm c}^\top  \dot{\boldsymbol{v}}_{\rm c}+L\boldsymbol{i}_{\rm g}^\top  \dot{\boldsymbol{i}}_{\rm g} + \lambda \boldsymbol{v}_{\rm c}^{\top} \dot{\boldsymbol{i}}_{\rm g} + \lambda\boldsymbol{i}_{\rm g}^{\top} \dot{\boldsymbol{v}}_{\rm c}\\& = C_{\rm f}\boldsymbol{v}_{\rm c}^\top\Big(\omega_{\rm b}\omega\boldsymbol{J} \boldsymbol{v}_{\rm c} 
           + \frac{\omega_{\rm b}}{C_{\rm f}}\big(\boldsymbol{i}_{\rm t} - \boldsymbol{i}_{\rm g}\big)\Big)\\&\quad+L\boldsymbol{i}_{\rm g}^\top \Big(\omega_{\rm b}\omega\boldsymbol{J}\boldsymbol{i}_{\rm g}
           + \frac{\omega_{\rm b}}{L}\big(\boldsymbol{v}_{\rm c} - \boldsymbol{v}_{\rm g}\big)
           - \frac{\omega_{\rm b} R}{L}\boldsymbol{i}_{\rm g}\Big)\\&\quad+\lambda \boldsymbol{v}_{\rm c}^\top\Big(\omega_{\rm b}\omega\boldsymbol{J}\boldsymbol{i}_{\rm g}
           + \frac{\omega_{\rm b}}{L}\big(\boldsymbol{v}_{\rm c} - \boldsymbol{v}_{\rm g}\big)
           - \frac{\omega_{\rm b} R}{L}\boldsymbol{i}_{\rm g}\Big) \\&\quad+\lambda \boldsymbol{i}_{\rm g}^\top \Big(\omega_{\rm b}\omega\boldsymbol{J} \boldsymbol{v}_{\rm c} 
           + \frac{\omega_{\rm b}}{C_{\rm f}}\big(\boldsymbol{i}_{\rm t} - \boldsymbol{i}_{\rm g}\big)\Big),
\end{split}
\end{align*}
and, using the skew-symmetry of $\boldsymbol J$, we get
\begin{align*}
\begin{split}
    \dot{W}_2 =& \omega_{\rm b} \boldsymbol{v}_{\rm c}^\top \boldsymbol{i}_{\rm t}-\omega_{\rm b}\boldsymbol{i}_{\rm g}^\top \boldsymbol{v}_{\rm g}-\omega_{\rm b}R\Vert \boldsymbol{i}_{\rm g}\Vert^2\\&+\lambda \omega_{\rm b} \Big(\frac{1}{L}\Vert \boldsymbol{v}_{\rm c}\Vert^2-\frac{1}{L}\boldsymbol{v}_{\rm c}^\top \boldsymbol{v}_{\rm g}-\frac{R}{L}\boldsymbol{v}_{\rm c}^{\top}\boldsymbol{i}_{\rm g}+\frac{1}{C_{\rm f}}\boldsymbol{i}_{\rm g}^\top \boldsymbol{i}_{\rm t}\\&-\frac{1}{C_{\rm f}}\Vert \boldsymbol{i}_{\rm g}\Vert^2\Big).
\end{split}
\end{align*}
Applying Young's inequalities with constants  $\delta_2,\delta_3,\delta_4>0$,
\begin{align*}
\omega_{\rm b} \boldsymbol{v}_{\rm c}^\top \boldsymbol{i}_{\rm t} 
&\le \omega_{\rm b}\left(\frac{\delta_2}{2}\|\boldsymbol{v}_{\rm c}\|^2 + \frac{1}{2\delta_2}\|\boldsymbol{i}_{\rm t}\|^2\right),\\
-\omega_{\rm b} \boldsymbol{i}_{\rm g}^\top \boldsymbol{v}_{\rm g}
&\le \omega_{\rm b}\left(\frac{R}{2}\|\boldsymbol{i}_{\rm g}\|^2 + \frac{1}{2R}\|\boldsymbol{v}_{\rm g}\|^2\right),\\
-\lambda\omega_{\rm b} \frac{1}{L} \boldsymbol{v}_{\rm c}^\top \boldsymbol{v}_{\rm g}
&\le |\lambda|\omega_{\rm b}\left(\frac{1}{2L}\|\boldsymbol{v}_{\rm c}\|^2 + \frac{1}{2L}\|\boldsymbol{v}_{\rm g}\|^2\right),\\
-\lambda\omega_{\rm b} \frac{R}{L} \boldsymbol{v}_{\rm c}^\top \boldsymbol{i}_{\rm g}
&\le |\lambda|\omega_{\rm b}\frac{R}{L}\left(\frac{\delta_3}{2}\|\boldsymbol{v}_{\rm c}\|^2 + \frac{1}{2\delta_3}\|\boldsymbol{i}_{\rm g}\|^2\right),\\\lambda \omega_{\rm b} \frac{1}{C_f} \boldsymbol{i}_{\rm g}^\top \boldsymbol{i}_{\rm t}
&\le |\lambda|\omega_{\rm b}\left(\frac{\delta_4}{2C_{\rm f}}\|\boldsymbol{i}_{\rm g}\|^2 + \frac{1}{2\delta_4 C_{\rm f}}\|\boldsymbol{i}_{\rm t}\|^2\right).
\end{align*}
Hence,
\begin{align*}
\begin{split}
    \dot{W}_2 \leq& \omega_{\rm b}\Big(\frac{\delta_2}{2}+\frac{\lambda}{L}+\frac{\vert \lambda \vert}{2L}+\frac{\vert \lambda \vert R\delta_3}{2L}\Big)\Vert \boldsymbol{v}_{\rm c}\Vert^2\\&+\omega_{\rm b}\Big(-\frac{R}{2}+\frac{\vert \lambda\vert R}{2L\delta_3}+\frac{\vert \lambda\vert \delta_4}{2C_{\rm f}}+\frac{\vert \lambda\vert }{C_{\rm f}}\Big)\Vert \boldsymbol{i}_{\rm g}\Vert^2\\&+\omega_{\rm b}\Big(\frac{1}{2\delta_2}+\frac{\vert \lambda\vert}{2\delta_4 C_{\rm f}}\Big) \Vert \boldsymbol{i}_{\rm t}\Vert^2+\omega_{\rm b}\Big(\frac{1}{2R}+\frac{\vert \lambda\vert }{2L}\Big)\Vert \boldsymbol{v}_{\rm g}\Vert^2.
\end{split}
\end{align*}
Let
$\delta_2 = \frac{|\lambda|}{4L}, 
\delta_3 = \frac{1}{2R}, 
\delta_4 =1,$
which yields
\begin{align*}
\begin{split}
    \dot{W}_2 \leq& \omega_{\rm b} \bigg(\frac{8\lambda+7\vert \lambda\vert}{8L}\bigg)\Vert \boldsymbol{v}_{\rm c}\Vert^2\\&+\omega_{\rm b}\bigg(-\frac{R}{2}+\vert \lambda\vert\Big(\frac{R^2}{L}+\frac{3}{2C_{\rm f}}\Big) \bigg)\Vert \boldsymbol{i}_{\rm g}\Vert^2\\&+\omega_{\rm b}\bigg(\frac{2L}{\vert \lambda\vert}+\frac{\vert \lambda\vert}{2C_{\rm f}}\bigg)\Vert \boldsymbol{i}_{\rm t}\Vert^2 + \omega_{\rm b}\bigg(\frac{1}{2R}+\frac{\vert \lambda\vert}{2L}\bigg)\Vert \boldsymbol{v}_{\rm g}\Vert^2.
\end{split}
\end{align*}
Choose $\lambda<0$ such that
\begin{align*}
    -\min\Bigg\{\sqrt{C_{\rm f}L}, \frac{R}{4\Big(\frac{R^2}{L}+\frac{3}{2C_{\rm f}}\Big)} \Bigg\}<\lambda <0.
\end{align*}
With this, the $\Vert \boldsymbol v_{\rm c}\Vert^2$ and $\Vert \boldsymbol i_{\rm g}\Vert^2$ coefficients are strictly negative, and we obtain
\begin{align*}
    \dot W_2 \le - k_{1}\big(\|\boldsymbol v_{\rm c}\|^2+\|\boldsymbol i_{\rm g}\|^2\big)+k_{2}\|\boldsymbol i_{\rm t}\|^2+k_{3}\|\boldsymbol v_{\rm g}\|^2, 
\end{align*}
where $ k_{1} =\min\big\{\frac{\omega_{\rm b}\vert \lambda\vert}{8L},\frac{\omega_{\rm b}R}{4}\big\}$, $k_{2} =\omega_{\rm b}\Big(\frac{2L}{\vert \lambda\vert}+\frac{\vert \lambda\vert}{2C_{\rm f}}\Big)$, and $k_{3} = \omega_{\rm b}\Big(\frac{1}{2R}+\frac{\vert \lambda\vert}{2L}\Big)$. Then, recall that $\|\boldsymbol i_{\rm t}(t)\|\le I_{\rm \max}$ for all $t\in [0,t_{\rm max})$. By Assumption \ref{assdis}, $\|\boldsymbol v_{\rm g}\|$ is essentially bounded. Furthermore, since $W_2$ is positive definite in $(\Vert \boldsymbol{v}_{\rm c}\Vert,\Vert \boldsymbol{i}_{\rm g}\Vert)$, there exists $\overline{c}>\underline{c}>0$ such that \begin{align}\label{pstd2}
    \underline{c}\big(\|\boldsymbol v_{\rm c}\|^2+\|\boldsymbol i_{\rm g}\|^2\big) \leq W_2 \leq \overline{c}\big(\|\boldsymbol v_{\rm c}\|^2+\|\boldsymbol i_{\rm g}\|^2\big).
\end{align} Therefore, noting that $\Vert \boldsymbol{v}_{\rm g}\Vert_{\infty}^2\leq \Vert v_{\rm g,d}\Vert_{\infty}^2 + \Vert v_{\rm g,q}\Vert_{\infty}^2$, 
\begin{align*}
    \dot W_2(t) \le -k_{4} W_2(t) + k_{2} I_{\rm max}^2 + k_{3}(\Vert v_{\rm g,d}\Vert_{\infty}^2+\Vert v_{\rm g,q}\Vert_{\infty}^2),
\end{align*}
for $t\in [0,t_{\max})$ a.e., where $k_{4} = k_{1}/\overline{c}>0$. By comparison,
\begin{align}\label{W6est}
    W_2(t)\!\le\! e^{-k_{4} t}W_2(0)+\frac{k_{2}}{k_{4}}I_{\rm max}^2+\frac{k_{3}}{k_{4}}(\Vert v_{\rm g,d}\Vert_{\infty}^2\!+\!\Vert v_{\rm g,q}\Vert_{\infty}^2),
\end{align}
for all $t\in [0,t_{\rm max})$. Again considering \eqref{pstd2}, we deduce from \eqref{W6est} that there exists a function $\tilde B_{2,\boldsymbol \Omega}\in C^0(\mathbb R^2\times\mathbb R^2\times \mathbb R_+\times\mathbb R_+)$ such that
\begin{align} \label{vctrbnd}
\|[\boldsymbol v^\top_{\rm c}(t),\boldsymbol i^\top_{\rm g}(t)]^\top\|\le \tilde B_{2,\boldsymbol \Omega}\big(\boldsymbol v_{\rm c}(0),\boldsymbol i_{\rm g}(0),\|v_{\rm g,d}\|_\infty,\| v_{\rm g,q}\|_\infty\big),
\end{align}
for all $t\in [0,t_{\rm max})$.

\noindent \textit{\underline{Boundedness of the power filters:}}

The instantaneous powers satisfy $p=\boldsymbol v_{\rm c}^\top \boldsymbol i_{\rm g}$ and $q=\boldsymbol v_{\rm c}^\top \boldsymbol J^\top \boldsymbol i_{\rm g}$, so there exists $c_{\rm p},c_{\rm q}>0$ such that $|p(t)|\leq c_{\rm p}\|[\boldsymbol v^\top_{\rm c}(t),\boldsymbol i^\top_{\rm g}(t)]^\top\|^2$ and $|q(t)|\leq c_{\rm q}\|[\boldsymbol v^\top_{\rm c}(t),\boldsymbol i^\top_{\rm g}(t)]^\top\|^2$, hence it follows from \eqref{vctrbnd} that $p,q\in L^\infty$. Since $|\operatorname{sat}_{\overline P}(p)|\le |p|$ and $|\operatorname{sat}_{\overline Q}(q)|\le |q|$ for all $p,q\in\mathbb R$ and all $\overline P, \overline Q\in(0,+\infty]$, we get $
|\operatorname{sat}_{\overline P}(p(t))|\le \Vert p\Vert_{\infty}$ and $
|\operatorname{sat}_{\overline Q}(q(t))|\le \Vert q\Vert_{\infty}$ for all $t\in[0,t_{\max})$. Then, using Lemma \ref{lemfilt}, 
\begin{align*}
|p_1(t)|&\le \frac{\xi_{\rm p}|p_1(0)|+\frac{|p_2(0)|}{\omega_{\rm pc}}}{\sqrt{\xi_{\rm p}^2-1}}+\Vert p\Vert_{\infty},\\ 
|q_1(t)|&\le \frac{\xi_{\rm q}|q_1(0)|+\frac{|q_2(0)|}{\omega_{\rm qc}}}{\sqrt{\xi_{\rm q}^2-1}}+\Vert q\Vert_{\infty},\\\vert p_2(t)\vert &\leq   \frac{\omega_{\rm pc}|p_{1}(0)|+\xi_{\rm p}|p_{2}(0)|}{\sqrt{\xi^2_{\rm p}-1}}+\frac{\omega_{\rm pc} \Vert p\Vert_{\infty}}{\sqrt{\xi^2_{\rm p}-1}}, \\
\vert q_2(t)\vert &\leq   \frac{\omega_{\rm qc}|q_{1}(0)|+\xi_{\rm q}|q_{2}(0)|}{\sqrt{\xi^2_{\rm q}-1}}+\frac{\omega_{\rm qc} \Vert q\Vert_{\infty}}{\sqrt{\xi^2_{\rm q}-1}},
\end{align*}
for all $t\in [0,t_{\rm max})$. 

Since all the components of $\boldsymbol{x}(t)$ are bounded, we have that $t_{\rm max} = +\infty$. Therefore, the unique absolutely continuous solution exists for all $t\geq 0$, and the bounds derived above hold for all $t\geq 0$.   This completes the proof.  \hfill $\qed$

\subsection{Proof of Proposition \ref{propDADS}} \label{pfprp1}

 Following similar arguments to those in the proof of Theorem \ref{thm4}, it can be shown that for every initial condition $\big(\boldsymbol{x}(0),z_{\rm d}(0),z_{\rm q}(0)\big)\in\mathbb{R}^{10}\times \mathbb{R}\times \mathbb{R}$, there exists a unique maximal solution $\big(\boldsymbol{x}(t),z_{\rm d},z_{\rm q}\big)$
defined on some interval $[0,t_{\max})$, where $t_{\max}\in(0,+\infty]$. Furthermore, similar to as shown in the proof of Theorem \ref{thm4}, it can be shown that $i_{\rm t,d},i_{\rm t,q}$ and $\boldsymbol{x}'$ satisfy \eqref{crrntsft},\eqref{xdshbnd} on $[0,t_{\max})$.  Therefore, $\boldsymbol{x}$ is bounded on $[0,t_{\max})$. This implies the boundedness of $W_{\rm d}$ and $W_{\rm q}$ on $[0,t_{\max})$ that enter the
adaptation laws for $z_{\rm d}$ and $z_{\rm q}$ (see \eqref{vcd_star},\eqref{omega_droop},\eqref{itd_r},\eqref{z_ddot},\eqref{dads_dend},\eqref{itq_r},\eqref{z_qdot},\eqref{dads_qend}). Considering the $\rm d$ channel, since $\max\{W_{\rm d}-\varepsilon,0\}\ge 0$, we have $\dot z_{\rm d}(t)\ge 0$ a.e. on $[0,t_{\max})$, so $z_{\rm d}$ is
non-decreasing and $z_{\rm d}(t)\ge z_{\rm d}(0)$. Moreover, $
\frac{d}{dt}e^{z_{\rm d}}=e^{z_{\rm d}}\dot z_{\rm d}(t)=\Gamma_{\rm d}\max\{W_{\rm d}-\varepsilon,0\}.
$ Since $W_{\rm d}\in L^\infty([0,t_{\max});\mathbb{R})$, then $\frac{d}{dt}e^{z_{\rm d}}$ is essentially bounded on
$[0,t_{\max})$, and therefore $e^{z_{\rm d}(t)}$ (hence $z_{\rm d}(t)$) cannot blow up in finite time. An analogous argument holds for $z_{\rm q}$. Therefore, $t_{\rm max} = +\infty$ and the closed-loop system has a unique absolutely continuous solution for all $t\geq 0$.

\subsection{Proof of Theorem \ref{thm6}}\label{pfthm5}
In the proof of Proposition \ref{propDADS}, it is established that the bounds \eqref{crrntsft},\eqref{xdshbnd} hold on some interval $[0,t_{\max})$, where $t_{\max}\in(0,+\infty]$. Since the unique absolutely continuous solution of the closed-loop system exists for all $t\geq 0$ (see Proposition \ref{propDADS}), we conclude that the bounds \eqref{crrntsft},\eqref{xdshbnd} hold for all $t\geq 0$. 

Let us now proceed to establish \eqref{zqdbndsf}. For brevity, we consider the $\rm d-$ component only. Since $\boldsymbol{x}$ is bounded as in \eqref{crrntsft},\eqref{xdshbnd}, there exists a continuous function
\begin{align}\label{W_dbnd}
  \overline{W}_{\rm d} = \overline{W}_{\rm d}\big(\boldsymbol x(0),\| v_{\rm g,d}\|_\infty,\| v_{\rm g,q}\|_\infty\big),
\end{align}
such that
\begin{align}\label{Wd_global_bound}
  0\le W_{\rm d}(t)\le \overline{W}_{\rm d},
\end{align}
for all $t\geq 0$, where $W_{\rm d}$ is given by \eqref{dads_dend}. 

We first estimate the growth of $z_{\rm d}$ that can be attributed to the ON episodes $\mathcal I_{\rm on}$ \eqref{act_intvl}. From \eqref{z_ddot}, $\dot z_{\rm d}(t)\geq 0$ a.e., and hence $z_{\rm d}$ is non-decreasing. Therefore, using \eqref{W_dbnd} and the fact that $z_{\rm d}(t)\ge z_{\rm d}(0)$ for all $t\ge 0$, we obtain from \eqref{z_ddot}, for all $t\in\mathcal I_{\rm on}$ a.e.,
\begin{align*}
  \dot z_{\rm d}(t)\le \Gamma_{\rm d}e^{-z_{\rm d}(0)}\max\{\overline W_{\rm d}-\varepsilon,0\}.
\end{align*}
Integrating over the ON times gives the bound
\begin{align}\label{Delta_z_active}
\begin{split}
  \int_{\mathcal I_{\rm on}} \dot z_{\rm d}(s)ds
  &\le \Gamma_{\rm d}e^{-z_{\rm d}(0)}\max\{\overline W_{\rm d}-\varepsilon,0\}\int_{\mathcal{I}_{\rm on}}1ds
  \\&\le \Gamma_{\rm d}e^{-z_{\rm d}(0)}\max\{\overline W_{\rm d}-\varepsilon,0\}T_{\boldsymbol \Omega}
  =: \Delta z_{\rm d}^{\rm on}.
\end{split}
\end{align}
Above, we have used from Assumption \ref{asm:eta_events} that
$\int_{\mathcal I_{\rm on}}1\,ds = T_{\eta}\leq T_{\boldsymbol \Omega}$. Note that $\Delta z_{\rm d}^{\rm on}$ depends continuously on $\boldsymbol x(0),z_{\rm d}(0),\| v_{\rm g,d}\|_\infty,\| v_{\rm g,q}\|_\infty$, and linearly on $T_{\boldsymbol \Omega}$.

Next we focus on the behavior of $z_{\rm d}$ in OFF episodes. Define the OFF set
\begin{align*}
\mathcal I_{\rm off} := [0,+\infty)\setminus \mathcal I_{\rm on}= \{t\ge 0: \eta(t)\ge 0\}.
\end{align*}
Since $\mathcal I_{\rm on}$ is a finite union (at most $N_{\boldsymbol{\Omega}}$) of disjoint intervals of the form $[0,\sigma_0)$ (if $\eta(0)<0$) and $(\tau_\kappa,\sigma_\kappa)$, $\kappa\in\mathcal J$,  $\mathcal I_{\rm off}$ is a finite union of disjoint closed intervals, with the last one unbounded on the
right (see Remark \ref{rmksfz}). That is, there exist $N_{\rm off}\in\mathbb N\cup\{0\}$ and points
\begin{align*}
0 \!\leq \!s_0 \!\leq \!t_0 \!\leq\! s_1 \!\leq \!t_1\! \le \!\cdots \!\leq\! s_{N_{\rm off}-1}\! \leq\! t_{N_{\rm off}-1} \!\leq s_{N_{\rm off}} \!<\! +\infty,
\end{align*}
such that
\begin{align*}
  \mathcal I_{\rm off} = \bigcup_{\kappa=0}^{N_{\rm off}-1} [s_\kappa,t_\kappa] \cup [s_{N_{\rm off}},+\infty),
\end{align*}
where each $[s_\kappa,t_\kappa]$ is a closed interval
($s_\kappa=t_\kappa$ is allowed). Moreover, since $\mathcal I_{\rm on}$ has $N_\eta$ intervals, $\mathcal I_{\rm off}$ has at most $N_\eta+1$ intervals, and thus
\begin{align}\label{M_bound}
  N_{\rm off} \le N_\eta \le N_{\boldsymbol \Omega}.
\end{align}
On each OFF interval $J_\kappa=[s_\kappa,t_\kappa]\subseteq\mathcal I_{\rm off}$, where $\kappa=0,\dots,N_{\rm off}-1$, the controller coincides with the DADS-BS nominal controller. Therefore, similar to that it the proof of Theorem \ref{thm1}, we can show
\begin{align*}
  \dot W_{\rm d}(t) 
  \le -2k W_{\rm d}(t) 
  + \frac{\mu_{\rm d}^{\rm eff}}{1+e^{z_{\rm d}(t)}},
\end{align*}
 on $J_\kappa$ a.e., where 
\begin{align*}
\mu_{\rm d}^{\rm eff} := \frac{\mu_{\rm d}(1+R^2+\|v_{\rm g,d}\|_\infty^2)}{L^2},
\end{align*}
and $k$ is given by \eqref{decay_k}. Together with \eqref{z_ddot}, this matches the hypotheses of Lemma \ref{krflslm} on $[T_1,T_2)=[s_\kappa,t_\kappa)$. If $t_\kappa>s_\kappa$, Lemma \ref{krflslm} yields, for
all $t\in J_\kappa$,
\begin{align}\label{z_bound_interval}
\begin{split}
z_{\rm d}(s_\kappa)\le&\; z_{\rm d}(t) \\
\le&\; \ln\Bigg(
  \max\Big(
    \frac{\mu_{\rm d}^{\rm eff}}{2k\varepsilon}-1, e^{z_{\rm d}(s_\kappa)}
  \Big) \\
  &+ \frac{\Gamma_{\rm d}}{2k}
  \max\Big\{
    W_{\rm d}(s_\kappa) + \frac{\mu_{\rm d}^{\rm eff}}{2k(1+e^{z_{\rm d}(s_\kappa)})} - \varepsilon, 0
  \Big\}
\Bigg).
\end{split}
\end{align}
If $t_\kappa=s_\kappa$, then $J_\kappa$ reduces to a single point and
$z_{\rm d}(t_\kappa)=z_{\rm d}(s_\kappa)$, so \eqref{z_bound_interval} holds trivially.

With \eqref{Delta_z_active} and \eqref{z_bound_interval} at hand, below we work on obtaining an ultimate bound on $z_{\rm d}(t)$. Define the continuous function $\Phi_{\rm d}:\mathbb R\to\mathbb R$ by
\begin{equation}\label{Phi_d_def}
\begin{split}
\Phi_{\rm d}(\zeta)
:=& \ln\Bigg(
  \max\Big(
    \frac{\mu_{\rm d}^{\rm eff}}{2k\varepsilon}-1, e^{\zeta}
  \Big) 
  \\&\quad\quad+ \frac{\Gamma_{\rm d}}{2k}
  \max\Big\{
    \overline W_{\rm d} + \frac{\mu_{\rm d}^{\rm eff}}{2k} - \varepsilon, 0
  \Big\}
\Bigg).
\end{split}
\end{equation}
Note that, by construction of $\Phi_{\rm d}$,
\begin{align*}
  \Phi_{\rm d}(\zeta)\ge \zeta,\qquad \forall \zeta\in\mathbb R.
\end{align*}
Then, due to $W_{\rm d}(s_\kappa)\le \overline W_{\rm d}$ and $1+e^{z_{\rm d}(s_\kappa)}>1$, \eqref{z_bound_interval} implies, for all $t\in J_\kappa$,
\begin{align}\label{z_on_Jm}
  z_{\rm d}(t)\le \Phi_{\rm d}\big(z_{\rm d}(s_\kappa)\big).
\end{align}
We now bound $z_{\rm d}(s_0)$ at the beginning of the first OFF interval.

There are two cases:
(i) $\eta(0)\ge 0$. Then $0\in\mathcal I_{\rm off}$ and hence $s_0=0$. In this
case,
\begin{align}\label{zd_s0_case1}
  z_{\rm d}(s_0)=z_{\rm d}(0).
\end{align}

(ii) $\eta(0)<0$. Then $0\notin\mathcal I_{\rm off}$, so $s_0>0$, and we have $[0,s_0)\subseteq \mathcal I_{\rm on}$. Using
\eqref{z_ddot},\eqref{Wd_global_bound},\eqref{Delta_z_active}, we obtain
\begin{align*}
\begin{split}
  z_{\rm d}(s_0)-z_{\rm d}(0)
  &= \int_0^{s_0} \dot z_{\rm d}(s) ds \\
  &\le \int_{\mathcal I_{\rm on}} \dot z_{\rm d}(s) ds
   \le \Delta z_{\rm d}^{\rm on},
\end{split}
\end{align*}
and thus
\begin{align}\label{zd_s0_case2}
  z_{\rm d}(s_0)\le z_{\rm d}(0)+\Delta z_{\rm d}^{\rm on}.
\end{align}
Combining \eqref{zd_s0_case1} and \eqref{zd_s0_case2}, and due to $\Delta z_{\rm d}^{\rm on}\ge 0$, we obtain, in both
cases,
\begin{align}\label{zd_s0_combined}
  z_{\rm d}(s_0)\le \max\big\{z_{\rm d}(0),\, z_{\rm d}(0)+\Delta z_{\rm d}^{\rm on}\big\}
  = z_{\rm d}(0)+\Delta z_{\rm d}^{\rm on}.
\end{align}
Since $\Phi_{\rm d}(\zeta)\ge \zeta$,
\eqref{zd_s0_combined} implies
\begin{align}\label{zd_s0_Psi_bound_corrected}
  z_{\rm d}(s_0)
  \le \Phi_{\rm d}\big(z_{\rm d}(0)\big)+\Delta z_{\rm d}^{\rm on}.
\end{align}
We next proceed by defining
\begin{align}\label{Psi_d_def}
  \Psi_{\rm d}(\zeta)
  := \Phi_{\rm d}(\zeta) + \Delta z_{\rm d}^{\rm on},
\end{align}
which is a continuous, non-decreasing map $\Psi_{\rm d}:\mathbb R\to\mathbb R$,
because $\Phi_{\rm d}$ is non-decreasing in $\zeta$ and $\Delta z_{\rm d}^{\rm on}$ is constant. For $n\in\mathbb N$, we denote by
$\Psi_{\rm d}^{(n)}$ the $n$-fold composition of $\Psi_{\rm d}$ with itself
and set $\Psi_{\rm d}^{(1)}(\zeta):=\Psi_{\rm d}(\zeta)$.

We now show by induction on $\kappa$ that
\begin{align}\label{zd_sm_Psi_iter}
  z_{\rm d}(s_\kappa)\le \Psi_{\rm d}^{(\kappa+1)}\big(z_{\rm d}(0)\big),\qquad \kappa=0,1,\dots,N_{\rm off}.
\end{align}
For $\kappa=0$, \eqref{zd_s0_Psi_bound_corrected} gives
\begin{align*}
  z_{\rm d}(s_0)\le \Phi_{\rm d}\big(z_{\rm d}(0)\big)+\Delta z_{\rm d}^{\rm on}
  = \Psi_{\rm d}\big(z_{\rm d}(0)\big)
  = \Psi_{\rm d}^{(1)}\big(z_{\rm d}(0)\big),
\end{align*}
so \eqref{zd_sm_Psi_iter} holds for $\kappa=0$. Assume that \eqref{zd_sm_Psi_iter} holds for some $\kappa\in\{0,\dots,N_{\rm off}-1\}$.
We show it for $\kappa+1$.

Between $t_\kappa$ and $s_{\kappa+1}$ the filter is ON whenever $t_\kappa<s_{\kappa+1}$, so
\begin{align*}
  z_{\rm d}(s_{\kappa+1}) - z_{\rm d}(t_\kappa)
  = \int_{t_\kappa}^{s_{\kappa+1}} \dot z_{\rm d}(s)\,ds
  \le \int_{\mathcal I_{\rm on}} \dot z_{\rm d}(s)\,ds
  \le \Delta z_{\rm d}^{\rm on},
\end{align*}
and
\begin{align}\label{zd_sm1_from_tm}
  z_{\rm d}(s_{\kappa+1})
  \le z_{\rm d}(t_\kappa)+\Delta z_{\rm d}^{\rm on}.
\end{align}
If $t_\kappa=s_{\kappa+1}$, then the integral above is zero and 
$z_{\rm d}(s_{\kappa+1}) = z_{\rm d}(t_\kappa)$, so \eqref{zd_sm1_from_tm} holds trivially. On the OFF interval $J_\kappa=[s_\kappa,t_\kappa]$, we have from
\eqref{z_on_Jm} that
\begin{align*}
  z_{\rm d}(t_\kappa)\le \Phi_{\rm d}\big(z_{\rm d}(s_\kappa)\big).
\end{align*}
Combining this with \eqref{zd_sm1_from_tm} and recalling \eqref{Psi_d_def} gives
\begin{align*}
  z_{\rm d}(s_{\kappa+1})
  \le \Phi_{\rm d}\big(z_{\rm d}(s_\kappa)\big)+\Delta z_{\rm d}^{\rm on}
  = \Psi_{\rm d}\big(z_{\rm d}(s_\kappa)\big).
\end{align*}
Using the induction hypothesis \eqref{zd_sm_Psi_iter} for $\kappa$ and the
monotonicity of $\Psi_{\rm d}$, we obtain
\begin{align*}
  z_{\rm d}(s_{\kappa+1})
  \le \Psi_{\rm d}\Big(\Psi_{\rm d}^{(\kappa+1)}\big(z_{\rm d}(0)\big)\Big)
  = \Psi_{\rm d}^{(\kappa+2)}\big(z_{\rm d}(0)\big),
\end{align*}
which proves \eqref{zd_sm_Psi_iter} for $\kappa+1$. By induction,
\eqref{zd_sm_Psi_iter} holds for all $\kappa=0,1,\dots,N_{\rm off}$.

In particular, for $\kappa=N_{\rm off}$, we have
\begin{align}\label{zd_sM_Psi}
  z_{\rm d}(s_{N_{\rm off}})\le \Psi_{\rm d}^{(N_{\rm off}+1)}\big(z_{\rm d}(0)\big).
\end{align}

We now use the fact that the last OFF interval $[s_{N_{\rm off}},+\infty)$ is
unbounded on the right and contained in $\mathcal I_{\rm off}$ (see Remark \ref{rmksfz}). On $[s_{N_{\rm off}},+\infty)$
the safety filter is never ON, so the controller coincides with the DADS-BS controller for all $t\ge s_{N_{\rm off}}$. Therefore, similar to \eqref{z_on_Jm}, we obtain using Lemma \ref{krflslm}  
\begin{align}\label{zd_tail_Phi}
  z_{\rm d}(t)\le \Phi_{\rm d}\big(z_{\rm d}(s_{N_{\rm off}})\big),\qquad \forall t\ge s_{N_{\rm off}}.
\end{align}
Combining \eqref{zd_sM_Psi} and \eqref{zd_tail_Phi}, and using the
monotonicity of $\Phi_{\rm d}$, we obtain
\begin{align}\label{zd_tail_PhiPsi}
  z_{\rm d}(t)
  \le \Phi_{\rm d}\Big(\Psi_{\rm d}^{(N_{\rm off}+1)}\big(z_{\rm d}(0)\big)\Big),
  \qquad \forall t\ge s_{N_{\rm off}}.
\end{align}
Finally, since $\dot z_{\rm d}(t)\ge 0$, the function
$z_{\rm d}(\cdot)$ is non-decreasing. In particular, for any $t\in[0,s_{N_{\rm off}}]$,
\begin{align*}
  z_{\rm d}(t)\le z_{\rm d}(s_{N_{\rm off}})
  \le \Phi_{\rm d}\Big(\Psi_{\rm d}^{(N_{\rm off}+1)}\big(z_{\rm d}(0)\big)\Big),
\end{align*}
where we used $\Phi_{\rm d}(\zeta)\ge\zeta$ and \eqref{zd_sM_Psi}. Therefore,
\eqref{zd_tail_PhiPsi} actually holds for all $t\ge 0$:
\begin{align*}
  z_{\rm d}(t)
  \le \Phi_{\rm d}\Big(\Psi_{\rm d}^{(N_{\rm off}+1)}\big(z_{\rm d}(0)\big)\Big).
\end{align*}
From \eqref{M_bound} we have $N_{\rm off}+1\le N_{\boldsymbol{\Omega}}+1$. Furthermore, we have that $\Psi_{\rm d}^{(\kappa_1)}(\zeta)\leq \Psi_{\rm d}^{(\kappa_2)}(\zeta)$ for $\kappa_2>\kappa_1$ and all $\zeta\in\mathbb{R}$ (this can be shown using \eqref{Psi_d_def} and recalling $\Phi_{\rm d}(\zeta) \geq \zeta$ for all $\zeta\in\mathbb{R}$ and $\Delta z_{\rm d}^{\rm on}\geq 0$). Hence
\begin{align*}
  \Psi_{\rm d}^{(N_{\rm off}+1)}\big(z_{\rm d}(0)\big)
  \le \Psi_{\rm d}^{(N_{\boldsymbol{\Omega}}+1)}\big(z_{\rm d}(0)\big),
\end{align*}
and, by the monotonicity of $\Phi_{\rm d}$,
\begin{align}\label{zd_final_bound}
  z_{\rm d}(t)
  \le \Phi_{\rm d}\Big(\Psi_{\rm d}^{(N_{\boldsymbol{\Omega}}+1)}\big(z_{\rm d}(0)\big)\Big),
\end{align}
for all $t\geq 0$. Since $z_{\rm d}(t)\ge z_{\rm d}(0)$ and $z_{\rm d}$ is non-decreasing and
bounded from above by the right-hand side of \eqref{zd_final_bound}, the limit
$\lim_{t\to\infty} z_{\rm d}(t)$ exists and is finite, and satisfies the same
upper bound. Recalling $\Phi_{\rm d}$ and $\Psi_{\rm d}$ are given by \eqref{Phi_d_def} and \eqref{Psi_d_def}, respectively, and that $\Delta z_{\rm d}^{\rm on}$ depends on $\boldsymbol x(0),z_{\rm d}(0),\| v_{\rm g,d}\|_\infty,\| v_{\rm g,q}\|_\infty,T_{\boldsymbol{\Omega}}$, we conclude that the right-hand side of \eqref{zd_final_bound} takes the form of a function $B_{\rm d,\boldsymbol{\Omega}}\big(\boldsymbol{x}(0),z_{\rm d}(0),\Vert v_{\rm g,d}\Vert_{\infty}, \Vert v_{\rm g,q}\Vert_{\infty},N_{\boldsymbol{\Omega},}T_{\boldsymbol{\Omega}}\big)$.  

An analogous analysis for $z_{\rm q}$ can be carried out in the same way. This completes the proof of Theorem \ref{thm6}. \hfill $\qed$

\section{Conclusions}
\label{secCon}
This paper proposed a nonlinear droop-based grid-forming (GFM) controller, by integrating a backstepping (BS) inner--outer architecture with the deadzone-adapted disturbance suppression (DADS) framework. No knowledge of network parameters beyond the point of common coupling (PCC) is required, and the grid voltage is treated as an unknown but bounded disturbance of arbitrarily large magnitude. 
The controller guarantees global well-posedness and boundedness of all closed-loop states (Theorem \ref{thm1}) and PCC voltage-formation performance with assignable residual bounds (Theorem \ref{thm2}). Hard over-current protection is enforced via a minimally invasive control-barrier-function-based safety filter with a closed-form solution. The safety filter is applicable to \textit{any} locally Lipschitz nominal controller, including DADS-BS, and ensures forward invariance of the safe-current set and boundedness of all physical states (Theorem \ref{thm4}), as well as boundedness of the adaptive states in DADS-BS under mild activation assumptions (Theorem \ref{thm6}). Numerical results illustrate improved transient performance and faster recovery during current-limiting events when DADS-BS is used as the nominal controller, compared with conventional PI-based GFM control.


\bibliographystyle{IEEEtran}
\bibliography{main}
\end{document}